\newcommand{\diam}{\ensuremath{\operatorname{diam}}}
\newcommand{\rad}{\ensuremath{\operatorname{rad}}}
\newcommand{\dist}{\ensuremath{\operatorname{dist}}}
\newcommand{\las}{\operatorname{ac}}
\newcommand{\sse}{\subseteq}
\newcommand{\pairs}{\mathcal{P}} 
\newcommand{\ennr}{N}
\newcommand{\ennb}{N'}
\newcommand{\sep}{separated\xspace}
\newcommand{\blowup}[2][k]{\ensuremath{{#1}\!\cdot\!#2}}
\newcommand{\blowupSeq}[1][forgottenparameter!!]{T_{#1}} 
\newcommand{\representation}{linear representation\xspace}
\DeclareMathOperator{\gone}{gone}
\crefname{equation}{Equation}{Equation}
\definecolor{deepcarrotorange}{rgb}{0.91, 0.41, 0.17}
\definecolor{ballblue}{rgb}{0.13, 0.67, 0.8}
\newcommand{\eshort}{short\xspace}   
\newcommand{\enormal}{near\xspace}
\newcommand{\eweird}{wide\xspace}
\newcommand{\TW}{\ensuremath{W}} 
\newcommand{\TWp}{\ensuremath{W'}} 
\newcommand{\TS}{\ensuremath{S}} 
\newcommand{\TSp}{\ensuremath{S'}} 
\newcommand{\TN}{\ensuremath{N}} 
\newcommand{\TNp}{\ensuremath{N'}} 
\newcommand{\largeN}{\ensuremath{\bar{n}}} 
\newcommand{\constantC}{\ensuremath{c}}
\newcommand{\IRS}{\ensuremath{I_{R,S}}} 
\newcommand{\IRSp}{\ensuremath{I_{R,S'}}} 
\newtheorem*{theorem*}{Theorem}
\title{Flipping Non-Crossing Spanning Trees}
\authorrunning{H.~Bjerkevik, L.~Kleist, T.~Ueckerdt, and B.~Vogtenhuber}
\author{H\aa vard Bakke Bjerkevik}{University at Albany}{hbjerkevik@albany.edu}{https://orcid.org/0000-0001-9778-0354}{}
\author{Linda Kleist}{Universität Potsdam}{kleist@cs.uni-potsdam.de}{https://orcid.org/0000-0002-3786-916X}{}
\author{Torsten Ueckerdt}{Karlsruhe Institute of Technology}{torsten.ueckerdt@kit.edu}{https://orcid.org/0000-0002-0645-9715}{}
\author{Birgit Vogtenhuber}{Graz University of Technology}{birgit.vogtenhuber@tugraz.at}{https://orcid.org/0000-0002-7166-4467}{}
\keywords{flip graph, reconfiguration graph, spanning tree, non-crossing/crossing-free, convex point set }
\begin{document}

\maketitle

\begin{abstract}
    For a set $P$ of $n$ points in general position in the plane, the flip graph $\mathcal{F}(P)$ has a vertex for each non-crossing spanning tree on $P$ and an edge between any two spanning trees that can be transformed into each other by one edge flip, i.e., the deletion and addition of exactly one edge.
    The diameter $\diam(\mathcal{F}(P))$ of this flip graph is subject of intensive study. 
    For points $P$ in general position, it is between $\lfloor \nicefrac{3}{2}\cdot n \rfloor-5$ and $2n-4$, with no improvement for 25 years.
    For points $P$ in convex position, $\diam(\mathcal{F}(P))$ lies between $\lfloor \nicefrac{3}{2}\cdot n \rfloor -5$ and $\approx1.95n$, where the lower bound was conjectured to be tight up to an additive constant and the upper bound is a very recent breakthrough improvement over several previous bounds of the form $2n-o(n)$. 
    
    In this work, we provide new upper and lower bounds on the diameter of $\mathcal{F}(P)$ by mainly focusing on points $P$ in convex position. 
    We improve the lower bound even for this restricted case to $\diam(\mathcal{F}(P)) \geq \nicefrac{14}{9}\cdot n - \mathcal{O}(1)$.
    This disproves the conjectured upper bound of $\nicefrac{3}{2}\cdot n$ for convex position, while also improving the long-standing lower bound for point sets in general position.
    In particular, we provide pairs $T,T'$ of trees with flip distance $\dist(T,T') \geq \nicefrac{14}{9}\cdot n - \mathcal{O}(1)$; in these examples, both trees $T,T'$ have three boundary edges.
    We complement this by showing that if one of $T,T'$ has at most two boundary edges, then $\dist(T,T') \leq \nicefrac{3}{2}\cdot d < \nicefrac{3}{2}\cdot n$, where $d = |T-T'|$ is the number of edges in one tree that are not in the other.
    This bound is tight up to additive constants.

    Secondly, we significantly improve the upper bound on $\diam(\mathcal{F}(P))$ for $n$ points $P$ in convex position from $\approx 1.95n$ to $\nicefrac{5}{3}\cdot n - 3$.
    To prove both our lower and upper bound improvements, we introduce a new tool.
    Specifically, we convert the flip distance problem for given $T,T'$ to the problem of a largest acyclic subset in an associated \emph{conflict graph} $H(T,T')$.
    In fact, this method is powerful enough to give an equivalent formulation of the diameter of $\mathcal{F}(P)$ for points $P$ in convex position up to lower-order terms.
    As such, conflict graphs are likely the key to a complete resolution of this and possibly also other reconfiguration problems.
\end{abstract}

\newpage

\setcounter{page}{1}

\section{Introduction}
\label{sec:introduction}

Reconfiguration problems are important combinatorial problems with a high relevance in various settings and disciplines, e.g., robot motion planning, (multi agent) path finding, reconfiguration of data structures, sorting problems, string editing, in logistics, graph recoloring, token swapping, the Rubik's cube, or sliding puzzles, to name just a few.
Given a collection of configurations and a set of allowed reconfiguration moves, each transforming one configuration into another, we naturally obtain a (directed) graph on the space of all configurations.
When reconfiguration moves are reversible (then often called \emph{flips}), this graph is undirected and called a \emph{flip graph} $\mathcal{F}$.
For example, the flip graph of the Rubik's cube has more than $43\cdot 10^{18}$ vertices, each of degree~$27$.

A typical task is, for a pair $A,B$ of input configurations, to find a sequence of flips that transforms $A$ into $B$ -- preferably fast.
The distance between $A$ and $B$ in the flip graph $\mathcal{F}$ is the minimum number of required flips.
As computing (or even storing) the entire flip graph is usually impractical, one often resorts to the structure of~$\mathcal{F}$ to find a short flip sequence from $A$ to $B$.
However, even worst-case guarantees on the flip distance of $A$ and $B$ are mostly difficult to obtain.
It took 29 years and 35 CPU-years donated by Google to determine the largest flip sequence between any two Rubik's cubes, that is, to determine the diameter of the corresponding flip graph.
This elusive number is called God's number and equals $20$~\cite{Rokicki2010Rubik}.

Flip graphs are a versatile structure with many potential 
applications. For example, they are used to obtain Markov chains to sample random configurations, or for Gray codes and reverse search algorithms to generate all configurations. 
We give more related work in \cref{sec:relatedWork}, and refer to the survey articles~\cite{heuvel2003survey,nishimuraIntroReconfiguration} for even more examples and applications of reconfiguration problems. 

\begin{figure}[b!]
    \centering
    \includegraphics[page=2]{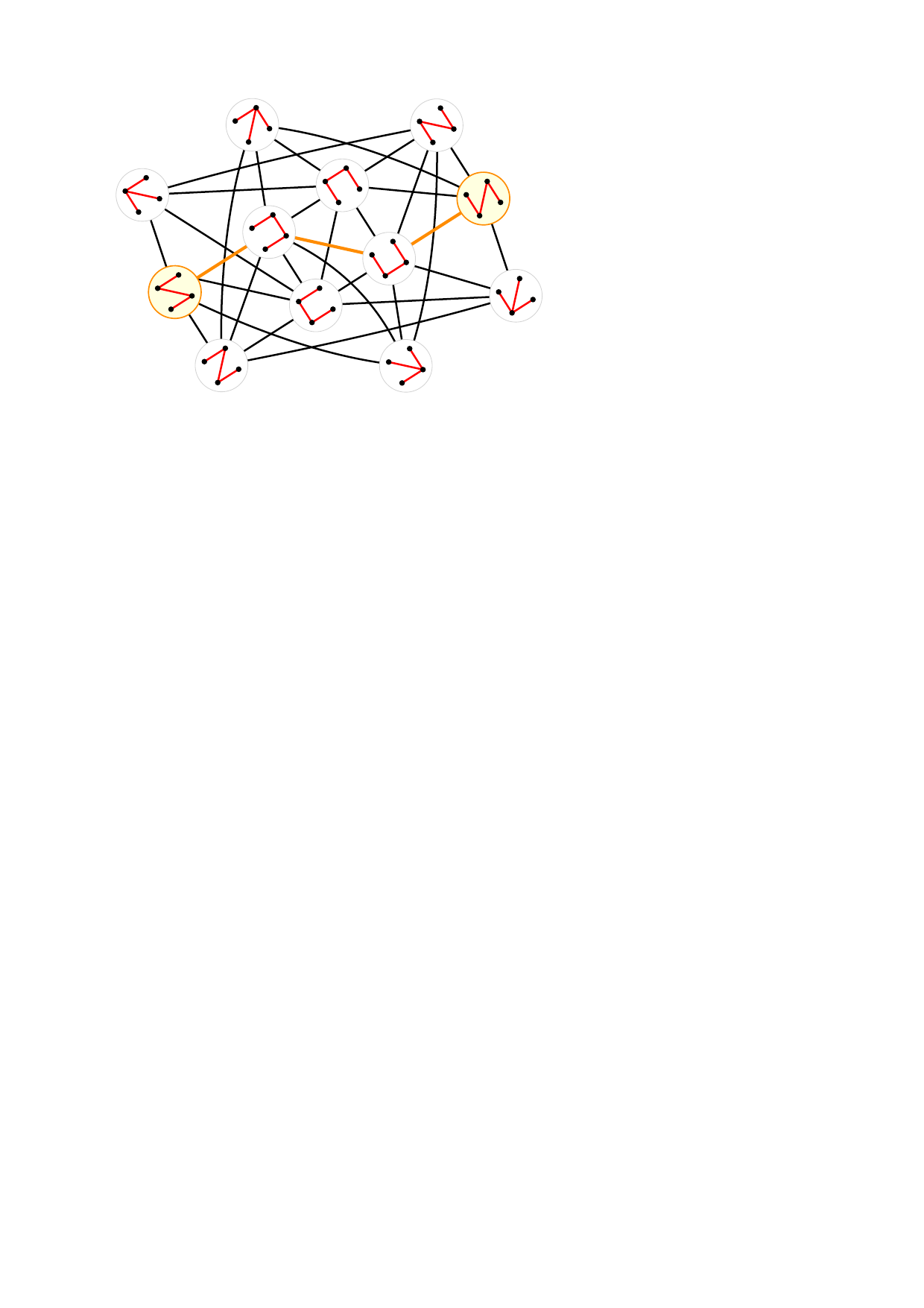}
    \caption{The flip graph $\mathcal{F}(P)$ on all non-crossing spanning trees on a set $P$ of $n=4$ points in convex position. A pair $T,T'$ with $\dist(T,T') = \diam(\mathcal{F}(P)) = 3$ is highlighted.}
    \label{fig:F4-example}
\end{figure}

A widely studied field concerns configuration of non-crossing straight-line graphs on a fixed point set in the plane.
In this setting, a flip is usually the exchange of one edge with another edge.
That is, two graphs  (i.e., configurations) $A,B$ are adjacent in the flip graph $\mathcal{F}$ if $|E(A) - E(B)| = |E(B)-E(A)| = 1$.
Classical examples are triangulations~\cite{Eppstein10,Eppstein10_socg07,rainbowJournal,rainbowSoCG18,HouleHNR05,HurtadoNU99,Lawson72,LubiwP15,pilz2014flip,pournin2014diameter,WagnerW22,WagnerW22_socg20,WagnerW22_soda20}, spanning trees~\cite{AichholzerAH02,aichholzer2022reconfiguration,AvisFukuda,bousquet2023noteJOURNAL,bousquet2024reconfigurationSoCG,Hernando,TreeTransition}, spanning paths~\cite{2023Aicholzer,akl2007planar,convexDiameter,KKR_paths,hamilton}, polygonizations~\cite{HernandoHH02}, and matchings~\cite{perfect-matchings,HouleHNR05,MilichMP21} on a fixed point set $P \subset \mathbb{R}^2$.
\enlargethispage{2ex}For an overview, see the survey article~\cite{bose2009flipsinplanar}.

Here, we study the flip graph of non-crossing spanning trees on a finite point set in the plane in general position. \cref{fig:F4-example} depicts the flip graph of non-crossing spanning trees on four points in convex position. 
Throughout, let $P$ denote a set of $n$ points in $\mathbb{R}^2$ with no three collinear points.
Consider a tree whose vertex-set is $P$ and whose edges are pairwise non-crossing straight-line segments.
Then a \emph{tree $T$ on $P$} is the edge-set of such a non-crossing spanning tree.
For instance, \cref{fig:Intro} shows some trees on a set $P$ in convex position.

\begin{figure}[htb]
    \centering
    \begin{subfigure}{.25\textwidth}
        \centering
        \includegraphics[page=8]{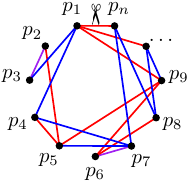}
        \caption{}
        \label{fig:IntroA}
    \end{subfigure}\hfil
    \begin{subfigure}{.25\textwidth}
        \centering
        \includegraphics[page=9]{Intro}
        \caption{}
        \label{fig:IntroB}
    \end{subfigure}\hfil
    \begin{subfigure}{.25\textwidth}
        \centering
        \includegraphics[page=11]{Intro}
        \caption{}
        \label{fig:IntroC}
    \end{subfigure}
    \caption{Some non-crossing trees on a point set in convex position.}
    \label{fig:Intro}
\end{figure}

Two trees $T$ and $T'$ on $P$ are related by a \emph{flip} if $T$ can be obtained from $T'$ by an exchange of one edge; i.e., there exist edges $e\in T$ and $e' \in T'$ such that $T'=T-e+e'$; see \cref{fig:IntroA,fig:IntroB} for an example. 
The \emph{flip graph $\mathcal F(P)$} of $P$ has a vertex for each tree on $P$ and an edge between any two trees that are related by a flip.
A path from $T$ to $T'$ in $\mathcal{F}(P)$ corresponds to a \emph{flip sequence} from $T$ to $T'$.
The length of a shortest flip sequence is the \emph{flip distance} of $T$ and $T'$, denoted by $\dist(T,T')$.
Finally, the \emph{diameter} of $\mathcal F(P)$ is the largest flip distance of any two trees on $P$, i.e., the smallest $D$ such that $\dist(T,T') \leq D$ for all $T,T'$.
In addition, the \emph{radius} of $\mathcal{F}(P)$ is $\rad(\mathcal{F}(P)) = \min_{T} \max_{T'} \dist(T,T')$.

The flip graph $\mathcal{F}(P)$ of trees on $P$ has been considered since 1996, when Avis and Fukuda~\cite{AvisFukuda} showed that any tree $T$ on $P$ can be flipped to any star $T'$ on $P$ whose central vertex lies on the boundary of the convex hull of $P$ in $|T-T'| \leq n-2$ steps.
This implies that $\mathcal F(P)$ has radius at most $n-2$ and hence diameter at most $2n-4$.
However, the exact radius and diameter of $\mathcal{F}(P)$ remain unknown to this day. 
Also, it is unclear how much the diameter varies between different point sets of same cardinality, or which point sets $P$ maximize the diameter of $\mathcal{F}(P)$ among all sets of $n$ points. 
In 1999, Hernando et al.~\cite[Theorem 3.5]{Hernando} provided a lower bound by constructing two trees $T,T'$ on $n$ points in convex position (for any $n \geq 4$) with flip distance $\dist(T,T') = \lfloor \nicefrac{3}{2}\cdot n\rfloor-5$; see \cref{fig:IntroC}. 
In this example, each edge in $T'-T$ intersects roughly half the edges of $T$.
Hence, every flip sequence from~$T$ to~$T'$ must flip away roughly $\nicefrac n2$ edges of $T$ before the first edge of $T' - T$ can be introduced, and thus $\dist(T,T')$ is at least roughly $\nicefrac 32 \cdot n$.
Yet, it remained open whether there is another pair of trees with larger flip distance.
As the only matching upper bound, we know that $\dist(T,T') \leq \lfloor\nicefrac{3}{2}\cdot n\rfloor-2$ in the special case that one of $T,T'$ is an $x$-monotone path~\cite{aichholzer2022reconfiguration}.

Note that the lower bound of $\lfloor \nicefrac 32 \cdot n \rfloor -5$ uses a point set in \emph{convex position}.
Interestingly, already this restricted setting is very challenging. 
The lower bound of $\lfloor \nicefrac 32 \cdot n \rfloor-5$ has not been improved for decades and the upper bound of $2n-4$ only was gradually improved in recent years.
In 2023, Bousquet et al.~\cite{bousquet2023noteJOURNAL} showed that $\dist(T,T') \leq 2n-\Omega(\sqrt n)$ for any two trees $T,T'$ on $n$ points in convex position and conjectured that $\nicefrac{3}{2}\cdot n$ flips always suffice.

\begin{conjecture}[Bousquet et al.~\cite{bousquet2023noteJOURNAL}]
    For any set $P$ of $n$ points in convex position, the flip graph $\mathcal{F}(P)$ has diameter at most $\nicefrac{3}{2}\cdot n$.
    \label{conj:convex-conjecture}
\end{conjecture}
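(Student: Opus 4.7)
The plan is to attempt Conjecture 5.1 by converting the flip distance problem into a scheduling question on the conflict graph $H(T,T')$ promised in the abstract. For a pair of trees $T, T'$ on a convex point set $P$, define $H(T,T')$ to have vertex set $T \triangle T'$ and an edge between $e \in T - T'$ and $e' \in T' - T$ exactly when their chords cross in the convex polygon. The key observation is that every flip sequence from $T$ to $T'$ must, before inserting a given $e'$, first remove every chord of $T$ that crosses $e'$; symmetrically, every edge of $T - T'$ must be removed at some point. Thus a minimum flip sequence corresponds to a maximum-size feasible matching between removals and insertions along a valid time order.

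The inductive step would proceed on $d = |T - T'|$. If $T \cap T'$ contains a chord $e$ that splits the convex polygon into two non-trivial parts, then the two sub-instances are disjoint and the bound $\nicefrac{3}{2}\cdot n$ follows from combining the two inductive bounds linearly, with room to spare since boundary edges of the sub-polygons are shared. Otherwise, using that both $T$ and $T'$ are spanning and must each contain at least two boundary edges of the convex hull, I would argue that one can always find a flip which simultaneously removes an edge of $T - T'$ and inserts an edge of $T' - T$, thus decreasing $d$ by $1$. A careful amortization should then show that such double-duty flips can be arranged at least a third of the time, producing a sequence of length at most $\nicefrac{3}{2}\cdot d \leq \nicefrac{3}{2}\cdot n$.

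The main obstacle, and the step I anticipate failing, is precisely the last claim: that a constant fraction of the flips in any near-optimal sequence can be made to simultaneously advance the removal schedule and add a target edge. In the Hernando et al. family, the lower bound already saturates $\lfloor \nicefrac{3}{2}\cdot n \rfloor - 5$, so any proof must be extremely tight, with essentially every flip required to do double duty in the worst case. The conflict graph can however be engineered so that its structure forces long chains of removals with no simultaneous insertion, occurring precisely in configurations where both trees use three or more boundary edges; the abstract explicitly points to such adversarial examples with $\dist(T,T') \geq \nicefrac{14}{9}\cdot n - \mathcal{O}(1)$. This strongly suggests that the plan cannot succeed and that Conjecture 5.1 is in fact false as stated.
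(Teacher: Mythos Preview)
Your conclusion is correct but for the wrong reason to call it a ``proof proposal'': the statement is a \emph{conjecture} that the paper \emph{disproves}, so there is no proof in the paper to compare against. Your final paragraph already arrives at the right answer---\cref{conj:convex-conjecture} is false---and the paper establishes this rigorously via \cref{thm:main-lower-bound}, exhibiting pairs $T,T'$ with $\dist(T,T') \geq \nicefrac{14}{9}\cdot n - \mathcal{O}(1)$.

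Two remarks on the sketch itself. First, your conflict graph is not the one the paper uses: you put vertices on $T \triangle T'$ and undirected edges for crossing chords, whereas the paper's $H(T,T')$ is a \emph{directed} graph on the gaps corresponding to \enormal{}-\enormal{} pairs, with three edge types (crossing plus two covering conditions); see \cref{def_conflict_graph}. The covering conditions are essential---they capture obstructions to a direct flip that are \emph{not} crossings but cycle-creation, and without them your scheduling reformulation misses constraints. Second, the step you flag as failing is indeed the fatal one, but your diagnosis (``both trees use three or more boundary edges'') is a symptom rather than the mechanism. The paper's actual counterexample (\cref{lem_example_14_9}) is a concrete pair on $13$ points whose conflict graph contains a bidirected $9$-cycle, forcing $\las(H)/|V(H)| \leq \nicefrac{4}{9}$; the lower bound for all $n$ then comes from a blowup construction (\cref{subsec_lower_alpha}) that turns this ratio into an asymptotic bound via \cref{thm:alpha}\eqref{item:alpha-lower-bound}. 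Merely pointing to the abstract does not substitute for this argument.
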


\cref{conj:convex-conjecture} claims that every pair $T,T'$ of trees on a convex point set $P$ admits a flip sequence from~$T$ to~$T'$ of length at most $\nicefrac32 \cdot n$.
This is confirmed only for special cases, namely when one of $T,T'$ is a path~\cite{aichholzer2022reconfiguration} or a so-called \emph{separated caterpillar}~\cite{bousquet2023noteJOURNAL} (defined below).

It is also natural to compare the flip distance $\dist(T,T')$ of two trees with the trivial lower bound given by the number of edges in which $T$ and $T'$ differ, formally defined as $d = d(T,T') = |T-T'|$.
If $P$ is in convex position, it is easy to show that $d \leq \dist(T,T') \leq 2d-4$.
In 2022, Aichholzer et al.~\cite{aichholzer2022reconfiguration} showed that in fact $\dist(T,T') \leq 2d-\Omega(\log d)$.
Recently, Bousquet et al.~\cite{bousquet2024reconfigurationSoCG} broke the barrier of $2$ in the leading coefficient by showing that $\dist(T,T') \leq 1.96 d < 1.96 n$.
They also give a pair $T,T'$ with flip distance $\dist(T,T') \approx \nicefrac{5}{3}\cdot d$.
However, as their pair $T,T'$ has $d = |T-T'| \approx \nicefrac n2$, this is not a counterexample to \cref{conj:convex-conjecture}.

\subparagraph{Our contributions.}

We consider non-crossing trees on sets $P$ of $n$ points in convex position.
Our main results are significantly improved lower and upper bounds on the diameter of the corresponding flip graph $\mathcal{F}(P)$ in terms of $n$.
As all $n$-element convex point sets $P$ give the same flip graph $\mathcal{F}(P)$, let us denote it by $\mathcal{F}_n$ for brevity.
Recall that it is known that the diameter $\diam(\mathcal{F}_n)$ of $\mathcal{F}_n$ lies between roughly $1.5n$~\cite{Hernando} and $1.95n$~\cite{bousquet2024reconfigurationSoCG}.

We improve the upper bound to $\nicefrac{5}{3}\cdot n = 1.\overline{6}n$.

\begin{theorem}
    \label{thm:main-upper-bound}
    For any set $P$ of $n \geq 3$ points in convex position, the flip graph $\mathcal{F}(P)$ of non-crossing spanning trees on $P$ has diameter at most $\nicefrac 53 \cdot n - 3$.
    That is, $\diam(\mathcal{F}_n) \leq \nicefrac 53 \cdot n - 3$.
\end{theorem}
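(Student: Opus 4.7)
The plan is to combine two new ingredients hinted at in the abstract: a reduction from flip distance to an extremal question on a \emph{conflict graph} $H(T,T')$, and a combinatorial bound on that extremal quantity in the convex case.

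As a first step, I would define, for any pair $T,T'$ of non-crossing spanning trees on a convex point set $P$, a conflict graph $H(T,T')$ whose vertex set is (roughly) the symmetric difference $(T - T') \cup (T' - T)$ and whose edges encode pairs of required flips that cannot be sequenced \emph{independently}. I would then prove a reduction lemma of the shape
\[
  \dist(T,T') \;\leq\; 2d \,-\, \alpha\bigl(H(T,T')\bigr),
\]
where $d = |T-T'|$ and $\alpha(\cdot)$ denotes the size of a largest \emph{acyclic subset} of $H(T,T')$, i.e., the vertex set of a largest induced forest. The intuition is that each vertex in a chosen acyclic subset should contribute one productive flip, whereas any remaining vertex costs at most two flips (one to temporarily displace an obstructing edge, and one to re-insert it later), and all cross-dependencies can be scheduled simultaneously precisely because the chosen vertices induce a forest. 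The acyclic structure is what allows a topological ordering of the associated ``dependency'' digraph so that re-insertions never cascade.

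The second step, which I expect to be the main obstacle, is to show that $\alpha(H(T,T')) \geq d/3$ for every convex $P$ and every pair $T,T'$. Equivalently, the ``cyclic core'' of $H(T,T')$ contains at most $\tfrac{2}{3}d$ vertices. This is a purely combinatorial statement about $H$, but it almost certainly requires exploiting both the planarity of $T \cup T'$ and the convex-position assumption in order to rule out dense cyclic structures in $H$. A natural first attempt is a discharging scheme that charges each short cycle of $H$ to several of its vertices using the crossing-free / convex geometry; a more structural approach would directly partition $V(H)$ into three classes, each of which induces a forest, so that one of them has size $\geq d/3$ by pigeonhole. Isolating and characterising the ``bad'' local configurations of $T,T'$ that give rise to cycles in $H$ is presumably where the convex-position hypothesis is really used, and it will likely motivate the finer tools (short/near/wide edges, separated caterpillars, etc.) that the macros in the preamble suggest are developed later.

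Combining the two steps gives $\dist(T,T') \leq 2d - d/3 = \tfrac{5}{3}d$, and since $d \leq n-1$ this already yields $\dist(T,T') \leq \tfrac{5}{3}n - \tfrac{5}{3}$. To tighten the additive constant to $-3$ I would handle edges of the convex hull separately: in the extremal case such boundary edges either coincide in $T$ and $T'$ or admit cheap ``peeling'' flips that reduce the problem to a smaller convex instance, and a careful accounting on a constant number of such boundary edges upgrades $-\tfrac{5}{3}$ to $-3$. Since \cref{thm:main-upper-bound} is stated for all $n \geq 3$, the base cases of small $n$ would be checked directly.
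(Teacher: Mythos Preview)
Your outline is essentially the paper's own strategy, and in particular your ``partition $V(H)$ into three classes, each inducing a forest, and apply pigeonhole'' is exactly what the paper does (the three classes are called $A$, $B$, $C$ for \emph{above}, \emph{below}, \emph{crossing}, distinguished by whether the paired edges $e_i,e'_i$ at a gap are adjacent and, if so, which is longer). The tightening of the additive constant to $-3$ is also achieved, as you guess, by a separate treatment of boundary edges and common chords.

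One point where your formulation oversimplifies, however, is the reduction lemma. The paper does \emph{not} put the entire symmetric difference into $V(H)$ and prove $\dist(T,T')\le 2d-\alpha(H)$. Instead, the edge pairs are first split into three classes $\mathcal P_=,\mathcal P_N,\mathcal P_R$ using the short/near/wide trichotomy you allude to, and the conflict graph $H$ is defined only on the \emph{near--near} pairs $\mathcal P_N$. The pairs in $\mathcal P_R$ (those involving a short or wide edge) are handled by a completely separate counting argument (short edges outnumber wide edges in any tree) which shows they cost on average at most $\tfrac32$ flips each, not $2$. The actual reduction is therefore of the form
\[
\dist(T,T')\;\le\;\tfrac32|\mathcal P_R|+|\mathcal P_=|+\Bigl(2-\tfrac{\las(H)}{|V(H)|}\Bigr)|\mathcal P_N|,
\]
and only then does $\las(H)\ge |V(H)|/3$ on the near--near part give the $\tfrac53$ coefficient. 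If you try to force all of $T\setminus T'$ into $H$ and prove $\alpha(H)\ge d/3$ directly, the wide edges will cause trouble: the clean ``above/below/crossing'' trichotomy that makes each class acyclic relies on both edges of a pair being near. So the short/near/wide distinction is not a later refinement but is already needed for the basic $\tfrac53(n-1)$ bound.
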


Secondly, we improve the known lower bound 
to roughly $\nicefrac{14}9 \cdot n = 1.\overline{5}n$.

\begin{restatable}{theorem}{mainlowerbound}\label{thm:main-lower-bound}
    There is a constant $\constantC$ such that for any $n \geq 1$, there are non-crossing trees $T,T'$ on $n$ points in convex position with $\dist(T,T')\geq \nicefrac{14}{9}\cdot n- \constantC$.
    That is, $\diam(\mathcal{F}_n) \geq \nicefrac{14}{9} \cdot n - \constantC$.
\end{restatable}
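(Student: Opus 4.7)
My plan is to construct explicit pairs of non-crossing spanning trees $T,T'$ on $n$ points in convex position that realise the claimed lower bound, and to analyse their flip distance via the conflict-graph reformulation $\las(H(T,T'))$ developed earlier in the paper.

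The construction is based on a constant-size \emph{gadget} consisting of $k$ consecutive convex-position points together with compatible fragments of $T$ and $T'$. The gadget is tuned so that the pattern of crossings between its $T$- and $T'$-fragments produces a local conflict graph whose largest acyclic subset is as small as possible relative to the number of differing edges. To realise the ratio $14/9$, the gadget should contribute about $k$ edges to $|T-T'|$ while admitting acyclic subsets of size at most $\tfrac{4k}{9}+O(1)$ in its local conflict graph. The global trees $T,T'$ are then obtained by concatenating $\lfloor n/k\rfloor$ copies of the gadget along the boundary and adding $O(1)$ ``glue'' edges so that the result is a spanning tree with exactly three boundary edges, in agreement with the construction highlighted in the abstract.

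By the conflict-graph reduction established earlier in the paper, the flip distance admits a lower bound of the form $\dist(T,T') \geq 2|T-T'| - \las(H(T,T')) + O(1)$ -- this is precisely the direction of the equivalence mentioned in the abstract that yields lower bounds on $\dist$. Because the gadgets occupy pairwise disjoint arcs of the convex hull, their contributions to $H(T,T')$ are essentially independent: the conflict graph decomposes as a disjoint union of the per-gadget conflict graphs, up to $O(1)$ additional vertices coming from the glue edges. Summing the local $\las$-estimates then gives $\las(H(T,T')) \leq \tfrac{4}{9}n + O(1)$, and since $|T-T'| = n - O(1)$ we conclude that $\dist(T,T') \geq \tfrac{14}{9}n - O(1)$, as required.

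The main obstacle is the gadget design: one needs a small pattern whose local conflict graph provably has largest acyclic subset at most $\tfrac{4k}{9}+O(1)$ while still producing roughly $k$ differing edges between the $T$- and $T'$-fragments. I expect this to require a careful structural case analysis of possible acyclic subsets in the local conflict graph, possibly assisted by exhaustive enumeration for a fixed small $k$. A secondary, more routine difficulty is controlling the interaction between gadgets: the three boundary edges of $T$ and $T'$ must be placed consistently across the construction so that the result is indeed a spanning tree, and one must verify that these boundary edges and the gluing contribute only $O(1)$ additional vertices and conflicts to $H(T,T')$, so that the per-gadget lower-bound analysis indeed aggregates to a global lower bound of $\tfrac{14}{9}n - O(1)$.
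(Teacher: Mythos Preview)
Your plan rests on an inequality the paper never proves. You write that ``by the conflict-graph reduction established earlier in the paper, the flip distance admits a lower bound of the form $\dist(T,T') \geq 2|T-T'| - \las(H(T,T')) + O(1)$''. No such pointwise lower bound appears in the paper. Theorem~\ref{thm:alpha}\ref{item:alpha-upper-bound} gives only the \emph{upper} bound $\dist(T,T') \leq (2-\las(H)/|V(H)|)(n-1)$; the lower-bound direction, Theorem~\ref{thm:alpha}\ref{item:alpha-lower-bound}, says something quite different: from a \emph{single fixed} pair $T,T'$ with conflict graph $H$, one obtains $\diam(\mathcal{F}_{\bar n}) \geq (2-\las(H)/|V(H)|)\bar n - c$ for \emph{all} $\bar n$. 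The witnessing large trees are not $T,T'$ themselves, nor concatenations of them, but the $k$-blowups $\blowup{T},\blowup{T'}$, and the bound on $\dist(\blowup{T},\blowup{T'})$ is obtained by a direct counting argument on arbitrary flip sequences (the $\gone(\cdot)$, direct/indirect-pair analysis of Section~\ref{subsec_lower_alpha}), \emph{not} via the conflict graph of the large trees. The ``equivalence'' in the abstract refers to the diameter formula $\lim_n \diam(\mathcal{F}_n)/n = \sup_H(2-\las(H)/|V(H)|)$, not to a per-instance bound.

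Consequently the paper's actual proof is much simpler than your outline: exhibit one pair $T,T'$ on $13$ points whose conflict graph is a bidirected $9$-cycle, so $\las(H)\leq 4$ and $|V(H)|=9$ (Lemma~\ref{lem_example_14_9}), and then quote Theorem~\ref{thm:alpha}\ref{item:alpha-lower-bound}. There is no concatenation of gadgets, no gluing, and no per-gadget aggregation of $\las$. Your approach could conceivably be salvaged, but you would first have to \emph{prove} the per-instance lower bound you are assuming, and that would essentially require reproducing the blowup/flip-sequence analysis of Section~\ref{subsec_lower_alpha} for your concatenated trees --- work your proposal does not account for.
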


\cref{thm:main-lower-bound} is the first improvement over $\diam(\mathcal{F}_n) \geq \lfloor \nicefrac 32 \cdot n \rfloor - 5$, as given 25 years ago by the example of Hernando et al.~\cite{Hernando} depicted in \cref{fig:IntroC}.
Moreover, \cref{thm:main-lower-bound} disproves
\cref{conj:convex-conjecture} and also improves the lower bound on the largest diameter of $\mathcal{F}(P)$ among all point sets $P$ in general (not necessarily convex) position.

The trees $T,T'$ which we use to prove \cref{thm:main-lower-bound} have three boundary edges each, where a \emph{boundary edge} is an edge on the boundary of the convex hull of the underlying point set.
On the other hand, every non-crossing tree on a convex point set contains at least two boundary edges (provided $n \geq 3$), and trees with exactly two boundary edges are called \emph{separated caterpillars}.
Complementing \cref{thm:main-lower-bound}, we show that if at least one of $T,T'$ is a separated caterpillar, then their flip distance $\dist(T,T')$ is at most $\nicefrac{3}{2}\cdot d(T,T')$.
This improves on the recent upper bound of $\dist(T,T') \leq \nicefrac{3}{2}\cdot n$ for the same setting in \cite{bousquet2023noteJOURNAL}.
Further, the bound is tight up to an additive constant since the construction from~\cite{Hernando} in \cref{fig:IntroC} consists of two separated caterpillars.

\begin{restatable}{theorem}{caterpillar}\label{thm:caterpillar}
    Let $T,T'$ be non-crossing trees on $n\geq 3$ points in convex position.
    Let $T$ be a separated caterpillar and $d := |T-T'|$.
    Then $\dist(T,T') \leq \nicefrac{3}{2}\cdot d$.
    Moreover, there exists a flip sequence from $T$ to $T'$ of length at most $\nicefrac{3}{2}\cdot d$  in which no common edges are flipped.
\end{restatable}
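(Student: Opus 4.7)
The plan is to prove the theorem by induction on $d = |T - T'|$, split into two regimes depending on whether $T \cap T'$ contains a chord or only boundary edges.

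\textbf{Reduction via a common chord.} If $e \in T \cap T'$ is a chord, then $e$ partitions the convex hull of $P$ into two sub-polygons $P_1, P_2$, each with $e$ as a boundary edge. The restrictions $T|_{P_i}, T'|_{P_i}$ are non-crossing spanning trees of $P_i$, and each $T|_{P_i}$ is a separated caterpillar on $P_i$: it has $e$ as a boundary edge together with exactly one of $T$'s original boundary edges $b_1, b_2$. Both $b_j$ cannot lie on the same side, as the opposite sub-polygon would then have only $e$ as a boundary edge, violating the standard fact that any non-crossing spanning tree on $\geq 3$ points in convex position has at least two boundary edges. Applying the induction hypothesis to each side gives flip sequences of length at most $\nicefrac{3}{2} \cdot d_i$ (with $d_i := d(T|_{P_i}, T'|_{P_i})$) that avoid common edges. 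Since the non-crossing property ensures that any flip within $P_i$ involves only edges of $P_i$, concatenating the two sequences yields a sequence on $P$ of length $\nicefrac{3}{2}(d_1 + d_2) = \nicefrac{3}{2} \cdot d$ that leaves $e$ and every other common edge untouched.

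\textbf{Base case (no common chord).} Here $T \cap T' \subseteq \{b_1, b_2\}$, so all $n - 3$ chords of $T$ lie in $T - T'$ and $d \in \{n-3, n-2, n-1\}$. The plan is to construct an explicit flip sequence leveraging the zigzag structure of the separated caterpillar: process the chords of $T$ in the order induced by an iterative leaf-peeling of $T$ that avoids the endpoints of $b_1, b_2$ whenever possible, and for each chord perform either a \emph{clean} flip that exchanges it for a chord of $T' - T$ (reducing the symmetric difference by one), or, when such a direct replacement is blocked, an \emph{auxiliary} flip that introduces an edge not in $T \cup T'$ (keeping the symmetric difference unchanged) to enable subsequent clean flips. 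Since exactly $d$ clean flips must eventually occur, bounding the total length by $\nicefrac{3}{2} \cdot d$ reduces to showing that the number of auxiliary flips is at most half the number of clean flips --- equivalently, that auxiliary flips occupy at most one third of the overall sequence.

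\textbf{Main obstacle.} Establishing this 1-to-2 ratio of auxiliary to clean flips in the base case is the delicate step, and it is where the rigid geometry of a separated caterpillar becomes essential: the spine alternates between the two boundary arcs cut off by $b_1, b_2$, and one must show that an auxiliary edge can always be chosen so as to unblock at least two subsequent clean flips. Additional care is needed to guarantee that the algorithm never flips $b_1$ or $b_2$ when they lie in $T'$, and to shave off the constants that would otherwise leave the bound at $\nicefrac{3}{2} \cdot n$ rather than the desired $\nicefrac{3}{2} \cdot d$ when $d$ is as small as $n - 3$.
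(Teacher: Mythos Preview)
Your reduction via a common chord is correct and is exactly what the paper does first. The base case, however, is not a proof but a plan with its central claim left open: you state that the construction works provided auxiliary flips number at most half the clean flips, and then identify this very inequality as the ``main obstacle'' without establishing it. The leaf-peeling procedure is not specified precisely enough to verify the claim, and the assertion that ``an auxiliary edge can always be chosen so as to unblock at least two subsequent clean flips'' is precisely the missing lemma. Without it, nothing prevents a run of many consecutive auxiliary flips, and the $\nicefrac{3}{2}\cdot d$ bound does not follow.

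The paper's base case takes a different and cleaner route through the conflict-graph machinery developed earlier. After cutting open at a boundary edge in neither tree, the caterpillar $T$ has no \eweird{} edges, so every pair in $\pairs_R$ involves a \eshort{} or \eweird{} edge of $T'$. For the \enormal{}-\enormal{} gaps $I_N$, the caterpillar's two-chain cover structure lets one partition $I_N = A \cup B$ so that both $H[A]$ and $H[B]$ are acyclic (Lemma~\ref{lem_F_B_acyclic}); hence $\las(H) \geq \nicefrac{1}{2}\,|I_N|$. Feeding this into the standard flip-sequence template (one direct flip per gap in the larger acyclic set, two flips per remaining \enormal{}-\enormal{} gap, and the $\pairs_R$ flips counted via $|\TWp| \leq |\TSp| - 2$) yields the bound $\nicefrac{3}{2}\,(n-3) = \nicefrac{3}{2}\,d$ directly. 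The point is that the ``one auxiliary per two clean'' ratio you are after is exactly the statement $\las(H) \geq \nicefrac{1}{2}\,|V(H)|$, and the paper proves this by exhibiting the explicit two-set partition rather than by tracking an algorithm step by step.
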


Concerning sets $P$ of $n$ points in general (not necessarily convex) position, Aichholzer et al.~\cite[Open Problem~3]{aichholzer2022reconfiguration} asked for the radius of the flip graph $\mathcal{F}(P)$, in particular for a lower bound of the form $n-C$ for some small constant~$C$.
Avis and Fukuda~\cite{AvisFukuda} showed that the radius is at most $n-2$.
In fact, a matching lower bound is easily obtained.

\begin{theorem}
    \label{thm:main-radius-lower-bound}
    For any set $P$ of $n\geq 2$ points in general position, the flip graph $\mathcal{F}(P)$ of non-crossing trees on $P$ has radius at least (and thus exactly) $n-2$.
\end{theorem}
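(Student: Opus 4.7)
\medskip

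\noindent\textbf{Proof plan for \cref{thm:main-radius-lower-bound}.}
The plan is to exploit only the trivial edge-difference lower bound $\dist(T,T') \geq |T-T'|$ and exhibit, for every non-crossing tree $T$ on $P$, a witness tree $T'$ that disagrees with $T$ on $n-2$ edges. Since the radius is $\min_T \max_{T'} \dist(T,T')$, this suffices to establish the lower bound $n-2$; combined with the upper bound $n-2$ by Avis and Fukuda~\cite{AvisFukuda}, equality follows.

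The natural family of witnesses are the \emph{stars} $S_v$ centered at a point $v \in P$, i.e., the graph with edge set $\{vu : u \in P \setminus \{v\}\}$. The first step is to observe that $S_v$ is a non-crossing spanning tree on $P$ for every $v \in P$: its edges all share the endpoint $v$, so no two of them cross in their relative interiors (general position rules out degenerate overlaps). In particular, $S_v$ is a valid vertex of $\mathcal{F}(P)$ regardless of whether $v$ lies on the boundary of the convex hull of $P$ or in its interior.

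Now, given an arbitrary non-crossing tree $T$ on $P$, the second step is to pick a leaf $v$ of $T$, which exists since $n \geq 2$ implies that $T$ has at least two leaves. Since $v$ is a leaf, exactly one edge of $T$ is incident to $v$, and this edge is also contained in $S_v$. Every other edge of $S_v$ has $v$ as an endpoint and therefore cannot lie in $T$. Hence $|T \cap S_v| = 1$, so
\[
|T - S_v| \;=\; (n-1) - 1 \;=\; n-2,
\]
and the edge-difference lower bound gives $\dist(T, S_v) \geq n-2$. As this argument applies to every tree $T$, we conclude $\rad(\mathcal{F}(P)) \geq n-2$.

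There is no real obstacle here; the only thing worth checking carefully is that star graphs are non-crossing at any center (including interior points of the convex hull), which is immediate since all their edges share a common endpoint.
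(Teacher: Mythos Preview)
Your proof is correct and follows essentially the same approach as the paper: pick a leaf $v$ of an arbitrary tree $T$, observe that $|T - S_v| = n-2$, and invoke the trivial lower bound $\dist(T,T') \geq |T-T'|$. The only addition you make is the explicit verification that $S_v$ is non-crossing for any center~$v$ (including interior points), which the paper leaves implicit.
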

\begin{proof}
    Let $T$ be any tree on $P$, $v$ be a leaf of $T$, and $S_v$ be the star on $P$ with central vertex~$v$.
    Then $\dist(T,S_v) \geq d(T,S_v) = |T-S_v| = n-2$.
    As $T$ was arbitrary, the result follows.
\end{proof}

\subparagraph{Organization of the paper.}

We give an outline of our approach in \cref{sec:outline}; in particular we explain our strategy of reducing the task of determining the diameter of $\mathcal F_n$ to finding largest acyclic subsets of an associated conflict graph.
Our main tool is \cref{thm:alpha} (stated below).
In \cref{sec_fundamentals}, we define the conflict graphs and show how to derive \cref{thm:main-upper-bound} (up to a small additive constant) and \cref{thm:main-lower-bound} from \cref{thm:alpha}.
Then, \cref{sec_proof_alpha} is devoted to the proof of \cref{thm:alpha}.
In \cref{sec_improved_upper} we refine our tools to obtain an improved upper bound on $\dist(T,T')$ depending on $d(T,T') = |T-T'|$ and the number of boundary edges in $T \cap T'$, which also completes the proof of \cref{thm:main-upper-bound}.
In \cref{sec:caterpillars}, we study the case where one tree is a separated caterpillar and show \cref{thm:caterpillar}.
We conclude with a list of interesting open problems in \cref{sec:conclusions}.

\subsection{Related Work}
\label{sec:relatedWork}

First, let us mention further graph properties of the flip graphs $\mathcal F(P)$ of non-crossing trees on point set $P$ that have been investigated.
For $P$ in convex position, Hernando et al.~\cite{Hernando} showed that $\mathcal{F}(P)$ has radius $n-2$ and minimum degree $2n-4$, and that $\mathcal{F}(P)$ is Hamiltonian and $(2n-4)$-connected~\cite{Hernando}.
For point sets $P$ in general (not necessarily convex) position, Felsner et al.~\cite{rainbowJournal,rainbowSoCG18} showed that their flip graphs $\mathcal{F}(P)$ have so-called $r$-rainbow cycles for all $r=1,\dots, n-2$, which generalize Hamiltonian cycles.

\subparagraph{Restricted variants of flips for spanning trees.} 

Besides the general edge exchange flip (that we consider here), several more restricted flip operations have been investigated.
There is the \emph{compatible edge exchange} (where the exchanged edges are non-crossing), 
the \emph{rotation} (where the exchanged edges are adjacent), and the \emph{edge slide} (where the exchanged edges together with some third edge form an uncrossed triangle).
Nichols et al.~\cite{TreeTransition} provided a nice overview of the best known bounds for five studied flip types. 
Let us remark that for all five flip types, the best known lower bound in terms of $n$ (in the convex setting) corresponds to the general edge exchange.
Consequently, our \cref{thm:main-lower-bound} translates to all these settings.
In terms of $d = |T-T'|$, Bousquet et al.~\cite[arXiv version]{bousquet2024reconfigurationSoCG} gave a tight bound of $2d$ for sets of $n$ points in convex position and compatible edge exchanges as flip operation.
They also showed a lower bound of $\nicefrac 73 \cdot d$ for rotations as flip operation~\cite[arXiv version]{bousquet2024reconfigurationSoCG}.  
Their lower bound examples have $d \approx \nicefrac 23 \cdot n$ for compatible edge exchanges and $d \approx \nicefrac n2$ for rotations. 
For a variant with edge labels (which are transferred in edge exchanges), Hernando et al.~\cite{hernandoTreeLabeled} showed that the flip graph remains connected for any set $P$ in general position.

Lastly, let us mention reconfiguration of spanning trees in combinatorial (instead of geometric) settings, such as with leaf constraints~\cite{bousquet_et_al:LIPIcs.ESA.2020.24}, or degree and diameter constraints~\cite{bousquetTreesDegree}.

\subparagraph{Spanning paths.}

Much less is known when restricting $\mathcal{F}(P)$ only to the spanning paths on $P$.
In fact, the question of whether this subgraph $\mathcal{F}'(P)$ of $\mathcal{F}(P)$ is connected has been open for more than 16 years~\cite{akl2007planar,bose2009flipsinplanar}.
Akl et al.~\cite{akl2007planar} conjectured the answer to be positive, while confirming it if $P$ is in convex position.
In fact, Chang and Wu~\cite{convexDiameter} proved that for $n$ points $P$ in convex position, $\mathcal{F}'(P)$ has diameter $2n-5$ for $n=3,4$ and~$2n-6$ for all $n\geq5$.
It is also known that $\mathcal{F}'(P)$ is Hamiltonian~\cite{hamilton} and has chromatic number $\chi(\mathcal{F}'(P)) = n$~\cite{chromatic}.

For $P$ in general position, $\mathcal{F}'(P)$ is known to be connected for so-called generalized double circles~\cite{2023Aicholzer}, and its diameter is at least $2n-4$ if $P$ is a wheel of size $n$~\cite{2023Aicholzer}.
Kleist, Kramer, and Rieck~\cite{KKR_paths} showed that so-called \emph{suffix-independent paths} induce a large connected subgraph in $\mathcal{F}'(P)$, and confirmed connectivity of $\mathcal{F}'(P)$ if $P$ has at most two convex~layers.

\subparagraph{Triangulations.}

For (non-crossing) inner triangulations on a set $P$ of $n$ points in general position, a flip replaces a diagonal of a convex quadrilateral spanned by two adjacent inner faces by the other diagonal.
When points in $P$ are in convex position, the corresponding flip graph $\mathcal{T}(P)$ is the $1$-skeleton of the $(n-3)$-dimensional associahedron. 
In fact, the vertices of $\mathcal{T}(P)$ are in bijection with binary trees and the flip operation with rotations of these trees.

The diameter of $\mathcal{T}(P)$ is known to be in $\Omega(n^2)$ for $P$ in general position~\cite{HurtadoNU99}, and at most $2n-10$ (for $n \geq 9$) for $P$ in convex position~\cite{sleator1986rotation}, where the latter is in fact tight~\cite{pournin2014diameter}.

Computing the flip distance of two triangulations on $P$ is known to be \textsc{NP}-complete~\cite{LubiwP15,pilz2014flip}, also in the more general setting of graph associahedra~\cite{graphAssoHardness}. 
Many further properties of the associahedron have been investigated, such as geometric realizations~\cite{CSZ15}, Hamiltonicity~\cite{lucas1987}, rainbow cycles~\cite{rainbowJournal,rainbowSoCG18}, and expansion and mixing properties~\cite{EF23}.

\section{Outline of Our Approach}
\label{sec:outline}

Let us outline our approach to tackle the diameter $\diam(\mathcal{F}_n)$ of the flip graph $\mathcal{F}_n$ for $n$ points in convex position.
Together, \cref{thm:main-upper-bound,thm:main-lower-bound} state that
\[
    \nicefrac{14}9 \cdot n - \mathcal{O}(1) \leq \diam(\mathcal{F}_n) < \nicefrac 53 \cdot (n-1) = \nicefrac{15}9 \cdot (n-1),
\]
narrowing the gap from roughly $\nicefrac 12 \cdot n$ to only $\nicefrac 19 \cdot n$.
We obtain both the upper and the lower bound by transferring the question for the diameter of the flip graph into a more approachable question about largest acyclic subsets in certain conflict graphs.
Our corresponding result is stated in \cref{thm:alpha} below.
While we defer the precise definitions to \cref{sec_fundamentals}, let us provide here some background needed to understand \cref{thm:alpha} and explain how \cref{thm:alpha} could be used to determine $\diam(\mathcal{F}_n)$ exactly up to lower-order terms.

Given a pair $T,T'$ of trees on a set $P$ of $n$ points in convex position, we define a canonical bijection between the edges in $T$ and the edges in $T'$, formalized as a set $\pairs$ of pairs $(e,e')$ with $e \in T$ and $e' \in T'$.
So, each $e \in T$ has a unique partner $e' \in T'$, and vice versa.
We then restrict our attention to flip sequences from $T$ to $T'$ that respect this bijection in the sense that every $e\in T$ is flipped to its partner $e' \in T'$ in at most two steps.
That is, either $e$ is flipped to $e'$ directly (a \emph{direct} flip), or $e$ is flipped to $e'$ in two steps via one intermediate boundary edge (an \emph{indirect} flip).
The length of a flip sequence of this form is then $\#$direct flips + $2\cdot \#$indirect flips, and our task is to minimize the number of indirect flips.
This gives an upper bound on the flip distance of any pair $T,T'$ of trees, and hence an upper bound on $\diam(\mathcal{F}_n)$.

Of course, this approach is a massive restriction on the allowed flip sequences, while we must (for a particularly chosen pair $T,T'$) argue about \emph{every} flip sequence in order to get a lower bound on $\diam(\mathcal{F}_n)$.
But surprisingly, it turns out that to determine $\diam(\mathcal{F}_n)$ asymptotically, it suffices to consider these restricted flip sequences.
The crucial idea is to ``blow up'' a small example pair of trees so that \emph{every} flip sequence for the resulting blown-up pair of trees ``acts like'' our restricted flip sequences for ``most of the edges''.
Then, if the small example pair of trees requires many flips in every restricted flip sequence, then the blown-up pair of trees requires many flips in \emph{every} flip sequence, which hence gives a lower bound on $\diam(\mathcal{F}_n)$.

In order to analyse the number of required flips in any restricted flip sequence for a given pair $T,T'$, we associate a directed \emph{conflict graph} $H = H(T,T')$ whose vertices correspond to a subset of the pairs in $\pairs$.
A directed edge $(e_1,e'_1) \to (e_2,e'_2)$ in $H$ expresses that the direct flip $e_2\to e'_2$ cannot occur before the direct flip $e_1 \to e'_1$, as otherwise it would create a cycle or a crossing.
Let $\las(H)$ denote the size of a largest subset of $V(H)$ that induces an acyclic subgraph.
We then construct a flip sequence from $T$ to $T'$ with $\las(H)$ direct flips.
So, if $\las(H)$ is large, then $\dist(T,T')$ is small.
On the other hand, if $\las(H)$ is small, we can derive a good asymptotic lower bound on $\diam(\mathcal{F}_n)$.
The precise statements go as follows.

\begin{theorem}
    \label{thm:alpha}
    Let $T,T'$ be two non-crossing trees on linearly ordered points $p_1,\ldots,p_n$ with 
    corresponding conflict graph $H = H(T,T')$.
    \begin{romanenumerate}
        \item If $V(H)$ is non-empty, then $\dist(T,T')\leq \max\left\{\frac32,2-\frac{\las(H)}{|V(H)|}\right\} (n-1)$.\\
            If $V(H)$ is empty, then $\dist(T,T')\leq \frac32 (n-1)$.
            \label{item:alpha-upper-bound}
        
	\item If $V(H)$ is non-empty, then there is a constant $\constantC$ {depending only on $T$ and $T'$} such that for all $\largeN\geq 1$, we have $\diam(\mathcal{F}_{\largeN}) \geq \left(2-\frac{\las(H)}{|V(H)|}\right)\largeN - \constantC$. 
            \label{item:alpha-lower-bound}
    \end{romanenumerate}
\end{theorem}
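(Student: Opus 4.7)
The plan is to prove both parts by exploiting the correspondence between ``restricted'' flip sequences (in which each non-common edge is flipped at most twice) and acyclic subsets of the conflict graph $H$. The upper bound in \cref{item:alpha-upper-bound} will follow from an explicit construction, while the lower bound in \cref{item:alpha-lower-bound} will follow from a blow-up argument.

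For \cref{item:alpha-upper-bound}, I would first select a subset $S \sse V(H)$ of size $\las(H)$ that induces an acyclic subgraph of $H$, together with a topological order $\sigma$ of $H[S]$. The flip sequence is then assembled as follows: for each pair $(e,e') \in S$, perform a direct flip $e \to e'$ in the order $\sigma$; for each remaining pair $(e,e') \in V(H)\setminus S$, perform an indirect flip, i.e., two flips $e \to b \to e'$ through a suitable boundary edge $b$; every common edge of $T \cap T'$ is left untouched. Correctness rests on the defining property of $H$: an arc $(e_1,e_1') \to (e_2,e_2')$ encodes precisely the obstructions that would prevent $e_2 \to e_2'$ from being a legal direct flip before $e_1 \to e_1'$, so topological sorting within $S$ resolves all such conflicts, while boundary edges used by the indirect flips can always be chosen to avoid both crossings and cycles. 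Counting, the total length is $\las(H) + 2(|V(H)| - \las(H)) = 2|V(H)| - \las(H)$. Since $|V(H)| \le n-1$ and $2 - \las(H)/|V(H)| \ge 0$, this is at most $(2 - \las(H)/|V(H)|)(n-1)$, and hence at most the claimed maximum. The case $V(H)=\emptyset$ reduces to a simple fall-back sequence of Avis--Fukuda type, which delivers the $\nicefrac{3}{2}(n-1)$ bound.

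For \cref{item:alpha-lower-bound}, the strategy is to blow up $(T,T')$ by replacing each of the $n_0$ original points with a tightly clustered group of $k$ points, so that each original edge becomes a bundle of $k$ nearly parallel copies; call the resulting trees $T_k, T_k'$ on $k\cdot n_0$ points, and choose $k$ so that the total matches $\largeN$. The crucial step is to argue that any flip sequence from $T_k$ to $T_k'$ induces, up to a defect bounded independently of $k$, a restricted flip sequence at the bundle level: for all but a constant number of the $k$ copies in each bundle, either a direct flip to the corresponding copy in $T_k'$ or an indirect flip through a boundary edge must occur. Once this reduction is in place, the conflict graph $H$ governs the direct-flip count across bundles exactly as in part (i), so at most $k\cdot \las(H)$ of the $k\cdot |V(H)|$ non-common copies can be flipped directly, forcing a total of at least $2k\cdot |V(H)| - k\cdot \las(H) = k\cdot |V(H)|(2 - \las(H)/|V(H)|)$ flips. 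Absorbing the $O(1)$ per-bundle defect and the gap between $k\cdot n_0$ and arbitrary $\largeN$ into a constant $\constantC$ depending only on $T,T'$ yields $\diam(\mathcal{F}_{\largeN}) \ge (2 - \las(H)/|V(H)|)\largeN - \constantC$.

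The hard part is the blow-up reduction in \cref{item:alpha-lower-bound}: without a canonical bijection given a priori, a flip sequence on $T_k$ could in principle mix copies across bundles, flip common edges to save flips elsewhere, or route through intermediate non-boundary trees. The heart of the proof will be a charging argument showing that such exotic flips can absorb only a bounded amount of the overall budget per bundle, independently of $k$, forcing the canonical behavior for all but a constant number of copies in each bundle and thereby inheriting the conflict constraints encoded by $H$. Part (i), by contrast, is mostly a verification that the constructed flip sequence remains non-crossing and tree-valued after every step, which should follow directly from the definition of $H$ once indirect flips are allowed to route via the convex-hull boundary.
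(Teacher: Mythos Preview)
Your plan for part~\eqref{item:alpha-upper-bound} has a genuine gap: you account only for pairs in $V(H)$ and for common edges, but $V(H)=I_N$ consists \emph{only} of the \enormal{}--\enormal{} pairs. The pairs in $\pairs_R$ (those with $e\neq e'$ where at least one of $e,e'$ is \eshort{} or \eweird{}) are neither common nor in $V(H)$, yet they must be flipped too. Your count $2|V(H)|-\las(H)$ simply omits them. The paper handles $\pairs_R$ separately and shows that these pairs cost on average at most $\nicefrac32$ flips each; this is where the $\max\{\nicefrac32,\cdot\}$ comes from, and it is not free. The key ingredient is a combinatorial inequality (\cref{lem_tiny_vs_weird}): in any tree the number of \eweird{} edges is strictly less than the number of \eshort{} edges, which lets one charge each $2$-flip in $\pairs_R$ against a $1$-flip. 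Your fallback ``Avis--Fukuda type'' sequence for $V(H)=\emptyset$ does not give $\nicefrac32(n-1)$ either; Avis--Fukuda yields $2n-4$, and the $\nicefrac32$ bound again requires the \eshort{}/\eweird{} counting.

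For part~\eqref{item:alpha-lower-bound} you have the right high-level architecture (blow up, show any flip sequence behaves ``canonically'' up to bounded defect, inherit the conflict constraints), but your concrete blow-up is not the one that works. The paper does \emph{not} replace points by clusters; instead it inserts $k$ new points \emph{into the gap} of each \enormal{}--\enormal{} pair and attaches them as a fan $\Lambda(e)$ to the far endpoint of $e$ (and $\Lambda(e')$ to the far endpoint of $e'$). The whole point of this specific construction is \cref{obs:crossing-in-blowups}: whenever $\overrightarrow{g_ig_j}\in E(H)$, every edge of $\Lambda(e_i)$ crosses every edge of $\Lambda(e'_j)$, regardless of whether the conflict was of type~1, 2, or~3. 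This is what converts the abstract conflict into a geometric obstruction that \emph{any} flip sequence must respect, and it is not clear your point-clustering blow-up has this property (for type~2 and type~3 conflicts the original edges do not cross). The charging argument you anticipate is then realised via $\gone(e)$ and the subgraph $G(e,e')$ of \cref{lem_green_graph}, which bounds the number of ``exotic'' intermediate edges per indirect pair by a quantity depending only on $n$, not on $k$.
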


\cref{thm:alpha} implies that there exists a constant $\gamma \in [\nicefrac 32,2]$ such that
\begin{equation}
    \lim_{n\to \infty} \frac{\diam(\mathcal{F}_n)}{n} = \gamma \coloneqq \sup_H \left(2-\frac{\las(H)}{|V(H)|}\right),
    \label{eq:gamma}
\end{equation}
where the supremum is taken over all non-empty conflict graphs $H$ arising from pairs of non-crossing trees.
Indeed, we have the stronger statement
\begin{equation*}
    \gamma n - o(n) \leq \diam(\mathcal{F}_n) \leq \gamma (n-1),
\end{equation*}
where the second inequality follows from \cref{thm:alpha} (i) (this assumes $\gamma\geq \nicefrac 32$, which can be proved using the example in \cref{fig:IntroC} -- we omit the details, since we show a better bound in \cref{lem_example_14_9}); \cref{thm:alpha} (ii) gives us a family of bounds $\gamma' n - O(1) \leq \diam(\mathcal{F}_n)$ for a set of $\gamma'$ whose supremum is $\gamma$, which implies the first inequality.

In the light of~\eqref{eq:gamma}, the task of bounding $\diam(\mathcal{F}_n)$ looks quite different.
As evidenced by our own results, this is a major simplification:
With \cref{thm:alpha}\eqref{item:alpha-lower-bound} at hand, we can prove a lower bound on $\diam(\mathcal{F}_n)$ for all $n$ quite easily.
It is enough to construct a single example of two trees $T,T'$ on a point set $P$ in convex position, and to compute a largest acyclic subset of the corresponding conflict graph $H$.
In fact, all we do to prove \cref{thm:main-lower-bound} is exhibit an example of two trees on $13$ points, compute their conflict graph $H$ on $9$ vertices, and write a two-line proof that $\las(H)\leq 4$; see \cref{lem_example_14_9}.
To further improve on our lower bound (if possible), one simply needs to do the same with a better example pair of trees.

And with \cref{thm:alpha}\eqref{item:alpha-upper-bound} at hand, we also prove our upper bound in \cref{thm:main-upper-bound} through conflict graphs.
We divide the vertices of the conflict graph $H$ arising from an arbitrary pair $T,T'$ of trees into three sets, and show that each set induces an acyclic subgraph of~$H$; see \cref{lem:ABC-acyclic}.
Also, showing this acyclicity requires only a short argument.

Thus, \cref{thm:alpha} allows succinct proofs of upper and lower bounds on $\diam(\mathcal{F}_n)$.
Any improved lower bounds on $\frac{\las(H)}{|V(H)|}$, or examples of conflict graphs $H$ with $\frac{\las(H)}{|V(H)|} < \nicefrac{4}{9}$, will give improved bounds on $\diam(\mathcal{F}_n)$.
This is a promising avenue towards determining the exact value of~$\gamma$.
Moreover, the conflict graph might be useful in other reconfiguration problems.

\section{Conflict Graphs, Acyclic Subsets, and the Diameter of $\mathcal{F}_n$}
\label{sec_fundamentals}

Throughout this section, let $P$ be a set of $n$ points in the plane in convex position.
As the flip graph $\mathcal{F}(P) = \mathcal{F}_n$ only depends on $n$, we can imagine the points in $P$ to lie equidistant on a circle, circularly labeled as $p_1,\dots, p_n$.
Given a tree $T$ with vertex-set $P$, we can represent $T$ as a straight-line drawing on $P$.
If this drawing has no crossing edges, then $T$ is non-crossing and we simply call $T$ a \emph{tree on $P$}.
For convenience, we treat each tree $T$ as its set of edges.
Let $\mathcal{T}_n$ denote the set of all trees on $P$.
If for two trees $T,T'$ on $P$ we have $|T-T'| = 1$, then $T$ and $T'$ are related by a \emph{flip}.
The \emph{flip graph} $\mathcal{F}_n$ has vertex-set $\mathcal{T}_n$ and an edge for any two trees on $P$ that are related by a flip.
Given two trees $T, T' \in \mathcal T_n$, the \emph{flip distance} $\dist(T,T')$ is the length of a shortest path from~$T$ to $T'$ in $\mathcal{F}_n$.

Consider two fixed trees $T,T'$ on $P$. 
We work with a \emph{\representation} as illustrated in \cref{fig:cut-openB} with an example.
Intuitively speaking, we cut open the circle between $p_1$ and $p_n$ and unfold the circle into a horizontal line segment, usually called the \emph{spine}.
Each edge in $T \cup T'$ was a straight-line chord of the circle and can now be thought of as a semi-circle above or below the spine.
For better readability, we usually put the edges of $T$ above and the edges of $T'$ below the spine, see again \cref{fig:cut-open}.

\begin{figure}[ht]
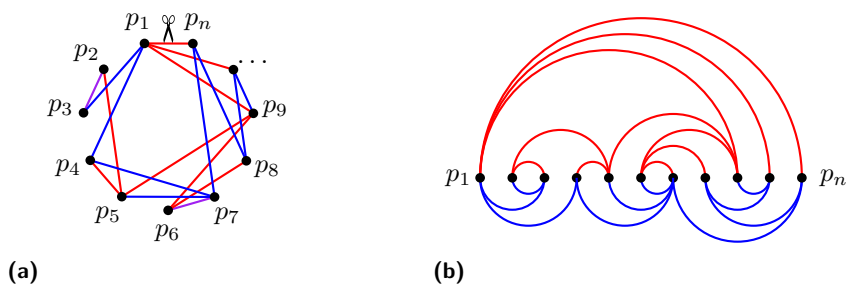

    \centering
    \begin{subfigure}{.3\textwidth}
        \centering
        \includegraphics[page=1]{Intro}
        \caption{}
        \label{fig:cut-openA}
    \end{subfigure}\hfil
    \begin{subfigure}{.4\textwidth}
        \centering
        \includegraphics[page=2]{Intro}
        \caption{}
        \label{fig:cut-openB}
    \end{subfigure}
    \caption{
        (a) Two non-crossing trees $T,T'$ on a circularly labeled point set in convex position and (b) its \representation with $T$ above and $T'$ below the horizontal spine.
    }
    \label{fig:cut-open}
\end{figure}

By the linear order $p_1,\ldots,p_n$ we have also a natural notion of the length of an edge.
That is, if edge $e$ has endpoints $p_i$ and $p_j$, then the \emph{length} of $e$ is $|i-j|$.
Moreover, we say that an edge $e$ with endpoints $p_i$ and $p_j$, $i<j$, \emph{covers} a vertex $p_k$ if $i \leq k \leq j$.
In particular, each edge covers both of its endpoints.
An edge $e$ \emph{covers} an edge~$f$ if $e$ covers both endpoints of~$f$.
If no edge $e \in T - f$ covers the edge $f \in T$, then we say that $f$ is an \emph{uncovered edge} in $T$.

The linear order of the $n$ points of $P$ also defines $n-1$ \emph{gaps} $g_1,\ldots, g_{n-1}$, where gap $g_i$ simply is the (open) segment along the spine with endpoints $p_i$ and $p_{i+1}$.
To introduce a few crucial properties, let us consider any set~$E$ of non-crossing edges on $p_1,\ldots,p_n$ (not necessarily forming a tree).
For each gap $g$ that is covered by at least one edge of $E$, let $\rho_E(g)$ be the shortest edge of $E$ covering $g$. If no such edge exists then $\rho_E(g)$ is undefined.
If $E$ is a tree on $P$, then all gaps are covered by $E$, while $|E| = |P|-1$ is exactly the number of gaps.
The following lemma shows that $\rho_E$ forms a bijection between gaps and edges in $E$ if and only if $E$ is a tree on $P$.

\begin{lemma}
    \label{lem_gap_edge_bij}
    Let $E$ be a set of non-crossing edges on a linearly labeled point set $P$.
    Then $\rho_E$ defines a bijection from the set of gaps to $E$ if and only if $E$ forms a tree on $P$.
\end{lemma}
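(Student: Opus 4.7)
The plan is to prove the two implications in turn, using throughout that there are exactly $n-1$ gaps and that an acyclic graph on $n$ vertices with $n-1$ edges is automatically a spanning tree.

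For the direction ``$E$ is a tree $\Rightarrow$ $\rho_E$ is a bijection'', first note that $\rho_E$ is total: every gap $g_k$ is crossed by some edge of $T$ since $T$ is connected and both sides $\{p_1,\dots,p_k\}$, $\{p_{k+1},\dots,p_n\}$ are non-empty. As $|E|=n-1$ equals the number of gaps, it suffices to exhibit a right-inverse $\phi\colon E\to\{g_1,\dots,g_{n-1}\}$ of $\rho_E$. Given $e=p_ip_j$ with $i<j$, removing $e$ splits $T$ into components $A\ni p_i$ and $B\ni p_j$; I will show that there is a unique $m\in[i,j-1]$ with $A\cap\{p_i,\dots,p_j\}=\{p_i,\dots,p_m\}$ and $B\cap\{p_i,\dots,p_j\}=\{p_{m+1},\dots,p_j\}$, and set $\phi(e):=g_m$. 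A short case analysis then confirms $\rho_E(\phi(e))=e$: any edge $e'\neq e$ covering $g_m$ either extends outside $[i,j]$, in which case non-crossing forces $e'$ to strictly contain $e$ or share an endpoint with $e$ and reach beyond, and hence to be strictly longer than $e$; or $e'$ lies entirely inside $[i,j]$, in which case $e'\in T-e$ must have both endpoints in $\{p_i,\dots,p_m\}$ or both in $\{p_{m+1},\dots,p_j\}$, neither of which can straddle the split at $g_m$. So $\rho_E\circ\phi=\mathrm{id}_E$, which together with $|E|=n-1$ forces $\rho_E$ to be a bijection.

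The main obstacle is the prefix--suffix claim for the $A/B$-partition. I plan to prove it by contradiction via a Jordan-curve argument: if $p_k\in B$ and $p_l\in A$ with $i<k<l\leq j$, then the tree path $\pi_A$ from $p_i$ to $p_l$ in $A$ stays inside $\{p_i,\dots,p_j\}$ (any excursion outside would force some edge of $\pi_A$ to cross $e$), and it must contain an edge $e'_A=(p_a,p_b)$ that strictly jumps over $p_k$, i.e., $a<k<b$. The arc $e'_A$ together with the spine segment $[p_a,p_b]$ encloses $p_k$ but leaves $p_j$ outside (since $p_a,p_b\in A$ while $p_j\in B$ forces $b<j$). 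The tree path from $p_k$ to $p_j$ in $B$ is likewise confined to $\{p_i,\dots,p_j\}$, and therefore must cross $e'_A$, contradicting non-crossing of $T$.

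For the converse direction, assume $\rho_E$ is a bijection. Then $|E|=n-1$, and it suffices to prove $E$ is acyclic. Suppose $E$ contains a cycle $C$. Because $E$ is non-crossing, the vertices of $C$ appear along the spine in an order consistent with the cyclic order of $C$; listing them as $p_{c_1}<p_{c_2}<\dots<p_{c_k}$, the edges of $C$ are the consecutive pairs $(p_{c_l},p_{c_{l+1}})$ for $l=1,\dots,k-1$ together with the closing edge $(p_{c_1},p_{c_k})$. The closing edge has length $c_k-c_1=\sum_{l=1}^{k-1}(c_{l+1}-c_l)$ and, since $k\geq 3$, is strictly longer than each summand. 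Moreover, every gap covered by the closing edge lies in some $[c_l,c_{l+1}-1]$, so it is also covered by the strictly shorter edge $(p_{c_l},p_{c_{l+1}})$. Hence the closing edge is never the shortest covering edge of any gap, so it lies outside the image of $\rho_E$, contradicting surjectivity. Therefore $E$ is acyclic, and with $n-1$ edges on $n$ vertices it is a spanning tree.
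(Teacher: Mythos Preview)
Your argument is correct, but both directions take a different route from the paper's proof.

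For ``tree $\Rightarrow$ bijection'', the paper gives a two-line injectivity argument: if $\rho_E(g)=\rho_E(g')=e$ for distinct gaps $g,g'$, then any edge connecting a point between $g$ and $g'$ to a point outside would be a shorter edge covering $g$ or $g'$; hence the points between $g$ and $g'$ are isolated, contradicting connectivity. Your construction of an explicit right-inverse $\phi$ via the $A/B$ component split of $T-e$ is heavier (the prefix--suffix claim needs some care, and your justification ``any excursion outside would force a crossing with $e$'' tacitly also uses acyclicity and $p_j\notin A$ to rule out the path leaving $[i,j]$ through the endpoint $p_i$), but it has the payoff of identifying $\rho_E^{-1}(e)$ concretely as the unique gap at which the two components of $T-e$ separate inside $[i,j]$.

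For ``bijection $\Rightarrow$ tree'', the paper argues by induction on $|E|$: take an edge $e$ covered by no other edge, note that its gap $g_i=\rho_E^{-1}(e)$ is covered only by $e$, split $E$ into the parts strictly left and right of $g_i$, and apply the induction hypothesis to each part to get two subtrees that $e$ joins. Your approach is a direct, non-inductive contradiction: a non-crossing cycle on linearly ordered points must consist of the consecutive chords $(p_{c_l},p_{c_{l+1}})$ together with the long closing chord $(p_{c_1},p_{c_k})$, and then the closing chord is never the shortest edge over any gap, so it cannot lie in the image of $\rho_E$. This is slick; the one thing you take for granted is the structural fact about non-crossing cycles on convex point sets, which is standard but would merit a one-line justification in a full write-up.
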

\begin{proof}
    Suppose first that $\rho_E$ is a bijection.
    We argue by induction on $|E|=|P|-1$ that $E$ is a tree.
    If $|E| \le 1$ then the statement is trivially true.
    If $|E| \ge 2$, let $e \in E$ be an edge that is not covered by any other edge in $E$, and let $g_i = \rho_E^{-1}(e)$ be the corresponding gap.
    Since $e$ is the only edge that covers gap $g_i$, it induces a partition $E = \{e\}\sqcup L\sqcup R$, where $L$ and $R$ contain the edges to the left and right of $g_i$, respectively.
    Restricting $\rho_E$ to these partitions gives bijections $\{g_1,\dots,g_{i-1}\}\to L$ and $\{g_{i+1},\dots,g_{n-1}\}\to R$. By induction, $L$ and $R$ are trees on $\{p_1,\dots,p_{i}\}$ and $\{p_{i+1},\dots,p_{n}\}$. Further, as $e$ covers $g_i$, we have $e=p_kp_\ell$ with $k \le i$ and $\ell \ge i+1$. 
    Hence $e$ connects the two trees $L$ and $R$, showing the $E$ is a tree on $P$.   

    Now, suppose that $E$ forms a tree. 
    Clearly, each gap is covered, so $\rho_E(g)$ is defined for each gap $g$.
    If $\rho_E$ is not injective, then there exist two gaps $g$, $g'$ and an edge $e$ such that $e = \rho_E(g) = \rho_E(g')$.
    Then the points of $P$ between $g$ and $g'$ are separated from the remaining points because any edge connecting them would be a shorter edge covering $g$ or $g'$; a contraction to the fact that $E$ forms a tree.
    Because both the number of edges and gaps are $n-1$, injectivity implies that $\rho_E$ is bijective.
\end{proof}

For a non-crossing tree $T$ on a linearly ordered point set $p_1,\ldots, p_n$, we define $e_i:=\rho_T(g_i)$ and categorize the edges of $T$ into three types, depending on how many endpoints of an edge~$e_i$ are also endpoints of its corresponding gap $g_i$.
For each $i \in [n-1]$, we say that the edge $e_i = \{u,v\}$ of $T$ is a  
\begin{itemize}
    \item \emph{\eshort{} edge} if $\{u,v\} = \{p_i,p_{i+1}\}$,
    \item \emph{\enormal{} edge} if $|\{u,v\} \cap \{p_i,p_{i+1}\}| = 1$, and
    \item \emph{\eweird{} edge} if $\{u,v\} \cap \{p_i,p_{i+1}\} = \emptyset$.
\end{itemize}

The set of all \eshort{}, \enormal{}, and \eweird{} edges of $T$ is denoted by $\TS$, $\TN$, and $\TW$, respectively.
Note that the \eshort{} edges of $T$ are the boundary edges of $T$ different from $p_np_1$, or in other words, the edges of length~$1$. 

Symmetrically, for a tree $T'$, we denote the edge corresponding to gap $g_i$ by $e'_i$ and the sets of 
all \eshort{}, \enormal{}, and \eweird{} edges of $T'$ by $\TSp$, $\TNp$, and $\TWp$, respectively.

\subparagraph{Pairing.}

Given $T,T'$ and a \representation, we define $\pairs = \{ (e_i,e'_i) \mid i = 1,\ldots,n-1 \}$ to be the natural pairing of the edges in $T$ with those in $T'$ according to their corresponding gap.
That is, $(e,e') \in \pairs$ for $e \in T$ and $e' \in T'$ if and only if $\rho^{-1}_T(e) = \rho^{-1}_{T'}(e')$.
Note that $e_i$ and $e'_i$ might coincide, i.e., $e_i = e'_i$;
in particular, this happens if $e_i$ is a \eshort{} edge in $T \cap T'$.

Next we partition the set $\pairs$ of edge pairs as follows:
\begin{itemize}
    \item $\pairs_= = \{ (e,e') \in \pairs \mid e = e'\}$,
	\item $\pairs_N = \{ (e,e') \in \pairs \mid e \neq e' \text{ and } e \in \TN \text{ and } e' \in \TNp\}$, and
	\item $\pairs_R = \pairs - (\pairs_= \cup \pairs_N)$.
\end{itemize}

Clearly, $|\pairs_=| + |\pairs_N| + |\pairs_R| = |\pairs| = n-1$. 
As it turns out, we will spend no flips on pairs in~$\pairs_=$ and it will be enough to spend in total at most $\frac32|\pairs_R| + |\pairs_=|$ flips on pairs in $\pairs_R$.
The more difficult part will be the pairs in $\pairs_N$, namely, the \enormal{}-\enormal{} pairs. The aim is to find a large subset of $\pairs_N$ which only needs one flip per edge.

\subparagraph{Conflict graph.}

We want to find a large set of \enormal{}-\enormal{} pairs that can be flipped directly.
However, two \enormal{}-\enormal{} pairs $(e_i,e'_i)$ and $(e_j,e'_j)$ could be so interlocked that it is impossible to have both as direct flips in any flip sequence from $T$ to $T'$.
This is for example the case if $e_i$ crosses $e'_j$ and $e_j$ crosses $e'_i$.
To capture all these dependencies we define a directed auxiliary graph which we call the \emph{conflict graph} $H$ of $T,T'$. 
The vertex set of $H$ are the gaps corresponding to the \enormal{}-\enormal{} pairs and a directed edge $\overrightarrow{g_i g_j}$ in $H$ implies the presence of $e_i$ prevents a direct flip from $e_j$ to $e'_j$.

Let $I_=,I_R,I_N$ denote the subsets of gaps corresponding to $\pairs_=,\pairs_R,\pairs_N$, respectively.

\begin{definition}[Conflict graph]{\ \\}
    \label{def_conflict_graph}
    The \emph{conflict graph} $H=H(T,T')$ is the directed graph defined by 
    \begin{itemize}
        \item $V(H) := I_N$; i.e., the vertices  are the gaps corresponding to \enormal{}-\enormal{} pairs, and
        \item there is a directed edge in $E(H)$ from $g_i$ to $g_j$, denoted $\overrightarrow{g_i g_j}$, if
        \begin{description}
            \item[type~1:] $e_i$ crosses $e'_j$, or
            \item[type~2:] $e'_j$ covers $e_i$ and $e_i$ covers $g_j$, or
            \item[type~3:] $e_i$ covers $e'_j$ and $e'_j$ covers $g_i$.
        \end{description}
    \end{itemize}
\end{definition}

\cref{fig:conflict-edge} illustrates the three types of edges in $H$.
\cref{fig:conflict-graph} depicts a full example of a \representation of a pair $T,T'$ and the corresponding conflict graph $H(T,T')$.
Observe that $H(T',T)$ is obtained from $H(T,T')$ by reversing the directions of all edges.

\begin{figure}[ht]
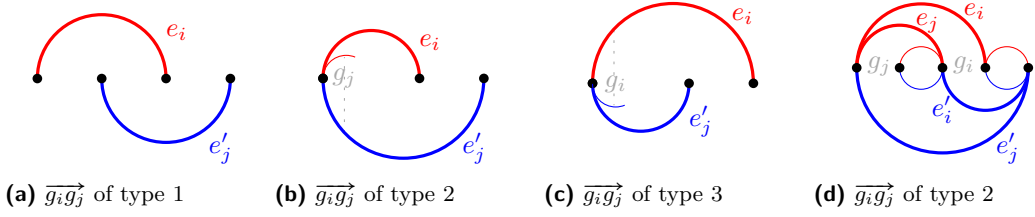

    \centering
    \begin{subfigure}{.24\textwidth}
        \centering
        \includegraphics[page=18]{Intro}
        \caption{$\overrightarrow{g_i g_j}$ of type~1}
        \label{fig:conflict-edgeA}
    \end{subfigure}\hfill
    \begin{subfigure}{.24\textwidth}
        \centering
        \includegraphics[page=19]{Intro}
        \caption{$\overrightarrow{g_i g_j}$ of type~2}
        \label{fig:conflict-edgeB}
    \end{subfigure}\hfill
    \begin{subfigure}{.24\textwidth}
        \centering
        \includegraphics[page=20]{Intro}
        \caption{$\overrightarrow{g_i g_j}$ of type~3}
        \label{fig:conflict-edgeC}
    \end{subfigure}\hfill
    \begin{subfigure}{.24\textwidth}
        \centering
        \includegraphics[page=27]{Intro}
        \caption{$\overrightarrow{g_i g_j}$ of type~2}
        \label{fig:conflict-edgeD}
    \end{subfigure}
    \caption{
        Examples of directed edges in the conflict graph: (a) type~1 (b) type~2 (c) type~3.
        Mirroring the examples in (a)-(c) horizontally gives a complete list of all possibilities.
        (d) Example of a conflict of type~2: the direct flip $e_j \to e_j'$ in $T$ (above, red) does not yield a tree.
    } 
    \label{fig:conflict-edge}
\end{figure}

\begin{figure}[ht]
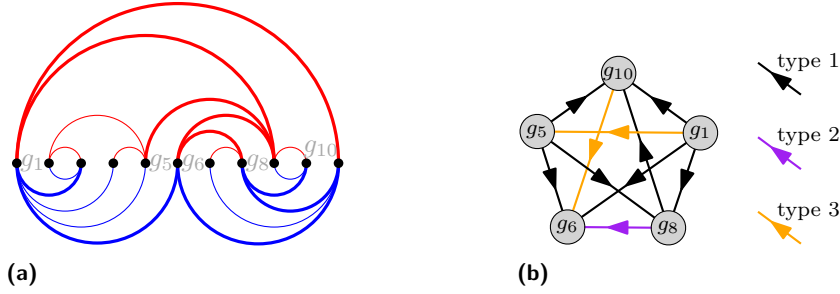

    \centering
    \begin{subfigure}{.32\textwidth}
        \centering
        \includegraphics[page=22]{Intro}
        \caption{}
        \label{fig:conflict-graphA}
    \end{subfigure}\hfil
    \begin{subfigure}{.2\textwidth}
        \centering
        \includegraphics[page=23]{Intro}
        \caption{}
        \label{fig:conflict-graphB}
    \end{subfigure}
    \caption{
        (a) The trees $T,T'$ from \cref{fig:cut-open} with pairs in $\pairs_N$ (drawn with thick curves) and (b) their conflict graph~$H$.
    } 
    \label{fig:conflict-graph}
\end{figure}

A direct flip $e_j \to e'_j$ for a \enormal{}-\enormal{} pair $(e_j,e'_j)$ may be invalid for two reasons, either because the introduced edge crosses an existing edge $e_i$ (this corresponds in $H$ to an edge $\overrightarrow{g_ig_j}$ of type~1) or because the new graph is not a tree (this is captured by incoming edges at $g_j$ in $H$ of type~2 and 3).
In fact, we claim (and prove later) that a \enormal{}-\enormal{} edge pair $(e_j,e'_j)$ admits a direct flip $e_j \to e'_j$ (after flipping all pairs of $\pairs_= \cup \pairs_R$ to the boundary) if and only if the corresponding gap $g_j$ has no incoming edge in $H$.

In the remainder of this section, we show how \cref{thm:alpha} can be used to prove $$\nicefrac{14}{9} \cdot n - \mathcal{O}(1) \leq \diam(\mathcal{F}_n) \leq \nicefrac{5}{3}\cdot(n-1).$$

\subsection{Upper Bound on the Flip Distance via \cref{thm:alpha}\eqref{item:alpha-upper-bound}}
\label{subsec_upper}

In this subsection, we assume that \cref{thm:alpha}\eqref{item:alpha-upper-bound} holds, and show how to derive \cref{thm:main-upper-bound} from it.
That is, we prove an upper bound of $\nicefrac 53\cdot(n-1)$ on the flip distance of two non-crossing trees $T$ and $T'$ on $n$ points by finding a large acyclic subset of the corresponding conflict graph $H$.
By \cref{thm:alpha}\eqref{item:alpha-upper-bound} we have $\dist(T,T') \leq \max\left\{\frac32,2-\frac{\las(H)}{|V(H)|}\right\} (n-1)$, so we seek to prove that $\las(H) \geq \nicefrac 13 \cdot |V(H)|$ whenever $H$ is non-empty.

Recall that a \enormal{} edge is incident to exactly one vertex at its corresponding gap.
We can think of a \enormal{} edge $e_i$ (or $e'_i$) to ``start'' at gap $i$ (either at $p_i$ or $p_{i+1}$) and then ``go'' either left or right.
Clearly, each \enormal{}-\enormal{} pair $(e,e')$ starts at the same gap.
Moreover, observe that $e$ and~$e'$ go in the same direction if and only if $e$ and~$e'$ are adjacent, i.e., have a common endpoint (which is then necessarily at the gap).

In order to prove a lower bound on $\las(H)$, and hence an upper bound on $\dist(T,T')$, let us inspect the gaps more closely.
We partition the set $I_N$ of all \enormal{}-\enormal{} gaps into three subsets, distinguishing for each gap with a corresponding \enormal{}-\enormal{} pair, whether these two edges are adjacent, and (in case they are) which edge is longer.
For each gap $g_i \in I_N$ with \enormal{}-\enormal{} pair $(e_i,e'_i) \in \pairs_N$, we say that
\begin{itemize}
    \item $g_i$ is \emph{above} if $e_i$ and $e'_i$ are adjacent and $e_i$ is longer than $e'_i$,
    \item $g_i$ is \emph{below} if $e_i$ and $e'_i$ are adjacent and $e_i$ is shorter than $e'_i$, and
    \item $g_i$ is \emph{crossing} if $e_i$ and $e'_i$ are not adjacent.
\end{itemize}
We denote the set of all above (respectively below, crossing) gaps in $I_N$ by $A$ (respectively $B$, $C$).
By definition, $A$, $B$, $C$ are pairwise disjoint, and hence $|A| + |B| + |C| = |I_N| \leq n-1$.

\begin{lemma}
    \label{lem:ABC-acyclic}
    Each of $A$, $B$, $C$ is an acyclic subset of $H$.
    In particular, $\las(H) \geq \nicefrac13 \cdot|V(H)|$.
\end{lemma}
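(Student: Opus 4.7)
For each of the three sets $A, B, C$ the plan is to exhibit a (strict) linear pre-order on its gaps and verify that every directed edge of $H$ within that set goes from a strictly smaller to a strictly larger element, which directly yields acyclicity. The bound $\las(H) \geq |V(H)|/3$ then follows because at least one of $A, B, C$ contains $|V(H)|/3$ gaps.

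For $A$, I will order gaps by $|e_i|$, the length of the longer edge of the pair. For every conflict $\overrightarrow{g_i g_j}$ with $g_i, g_j \in A$ I check that $|e_j| > |e_i|$: for type~2 this is immediate from $|e'_j| \geq |e_i|$ (since $e'_j$ covers $e_i$) combined with $|e_j| > |e'_j|$ (since $g_j \in A$); for type~1, since $e_i, e_j \in T$ are non-crossing they are nested or disjoint, and the hypothesis ``$e_i$ crosses the edge $e'_j$ that is nested inside $e_j$'' rules out both ``disjoint'' and ``$e_j \subset e_i$'', leaving $e_i \subsetneq e_j$ as chords and hence $|e_j| > |e_i|$; for type~3, a case analysis on the directions of $e_i$ and $e_j$ together with the fact that $e_j$ is the shortest $T$-edge covering $g_j$ and the non-crossing of $T$ excludes all but one directional configuration, in which $|e_j| > |e_i|$ holds. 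Set $B$ is handled symmetrically by swapping the roles of $T$ and $T'$ and ordering by $|e'_i|$.

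For $C$ the pair edges $(e_i, e'_i)$ always cross (rather than nest), so length-based ordering no longer works and the argument must be more careful. My plan has three steps. First, show that \emph{type~2 conflicts cannot occur in $H[C]$}: combining ``$e_i$ covers $g_j$'' (which forces $i \leq j$) with ``$e'_j$ covers $e_i$'' (which, after analyzing the near-endpoint of $e'_j$, forces $j \leq i$) yields $g_i = g_j$, a contradiction. Second, split $C = C_R \sqcup C_L$ according to whether $e_i$ goes rightward or leftward from the gap; by similar case analysis, type~3 conflicts never cross between $C_R$ and $C_L$ and, within each sub-class, strictly increase the gap position. Third, treat type~1 conflicts (the only ones that can go between $C_R$ and $C_L$): most sub-cases also respect the gap-position order, and the remaining ``backward'' type~1 conflicts are shown incapable of closing a directed cycle; for instance, closing a would-be 2-cycle $C_R \leftrightarrow C_L$ in this scenario forces two distinct $T'$-edges to cover the same gap with equal length, contradicting uniqueness of the shortest covering edge in a non-crossing set (the general fact, established in \cref{lem_gap_edge_bij}'s vicinity, that two non-crossing edges covering the same gap must be strictly nested).

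The main obstacle is the last step for $C$: while $A$ and $B$ admit a single clean invariant (length of the longer edge), the crossing structure of $C$-pairs means the topological order must be teased out through a mixture of directional case analysis, gap-position ordering, and the ``shortest covering edge'' uniqueness constraint. The technical heart of the lemma is therefore the acyclicity of $H[C]$; once that is in hand, the averaging bound $\las(H) \geq |V(H)|/3$ follows immediately.
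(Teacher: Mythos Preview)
Your treatment of $A$ (and by symmetry $B$) is correct and essentially the paper's argument recast as a linear order rather than iterated source-removal. In particular, your type~3 sketch can be completed: since $e_i$ covers $e'_j$ and $e_j$ covers $e'_j$ (the latter because $g_j\in A$), the two non-crossing $T$-edges $e_i,e_j$ are nested; if $e_i$ covered $e_j$, a short case check on the near-endpoint of $e_i$ (which lies strictly inside $e'_j$) shows $e_i$ cannot simultaneously cover both endpoints of $e_j$, so in fact $e_j$ covers $e_i$ and $|e_j|>|e_i|$.

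Your handling of $C$, however, has real problems. First, the stated reason for ``no type~2 in $H[C]$'' is wrong: ``$e_i$ covers $g_j$'' does \emph{not} force $i\le j$ (take $e_i$ going left from $p_{i+1}$). The correct --- and much simpler --- reason is that any type~2 edge $\overrightarrow{g_ig_j}$ forces $g_j\in B$: from $e'_j$ covering $e_i$ covering $g_j$ one gets $|e'_j|>|e_j|$ and that $e'_j$ covers $e_j$, which for two near edges at the same gap entails adjacency; hence $g_j\notin C$. Symmetrically, any type~3 edge forces $g_i\in A$, so $H[C]$ contains \emph{only} type~1 edges. You miss this second fact entirely, and your step~2 (a directional analysis of type~3 inside $C$) is therefore superfluous.

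Second, and more seriously, your step~3 for type~1 is incomplete. Gap-position together with the $C_R/C_L$ split does not yield a topological order, and you only argue that \emph{$2$-cycles} of a specific shape cannot close; longer directed cycles in $H[C]$ are not addressed. The paper instead observes that for $g_i\in C$ the two arcs of $e_i$ (above the spine) and $e'_i$ (below) form a curve $Z_i$, and that a type~1 edge $\overrightarrow{g_ig_j}$ (i.e.\ $e_i$ crosses $e'_j$) makes $Z_i$ lie vertically above $Z_j$ somewhere. One then picks a $Z_j$ with nothing above it (such a $Z_j$ exists by a rightmost-point argument), giving a gap with no incoming edge in $H[C]$. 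This geometric ``above'' relation is the key idea you are missing; without something playing its role, the acyclicity of $H[C]$ is not established by your outline.
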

\begin{proof}
    To prove that $Y \in \{A,B,C\}$ is acyclic, we show that there is some gap $g^* \in Y$ without incoming edges in $H[Y]$.
    By removing $g^*$ from $Y$ and repeating the argument, it follows that $Y$ is acyclic.
    We separately consider the three possible choices of $Y$.

    \begin{description}
	\item[Case $Y = A$:]
            Consider a gap $g_j \in A$ such that the length of $e_j$ is minimal.
            We claim that the gap $g_j$ has no incoming edge in $H[A]$. 
            Without loss of generality, we assume that $e_j$ and $e'_j$ share their left endpoint as illustrated in \cref{fig:ABC-acyclicA}; otherwise consider the mirror image.
            We show that $g_j$ has no incoming edge in $H[A]$ of any of the three types~1, 2, and~3.
            By the choice of $g_j$, $e_j$ does not cover any edge $e_i$ with $g_i \in A$; otherwise $e_i$ would be shorter than $e_j$.
            This excludes incoming edges of type~1 and~2.
            Further, for any $g_i \in A$ with $e_i$ covering $e_j$ and $e'_j$, then the gap $g_i$ cannot be covered by $e'_j$.
            This excludes incoming edges of type~3.
		
	\item[Case $Y = B$:]
	    This is symmetric to the previous case by exchanging the roles of $T$ and $T'$.
            In particular, any $g_j \in B$ such that the length of $e'_j$ is minimal has no incoming edges in $H[B]$.

	\item[Case $Y = C$:]
            Observe from \cref{fig:conflict-edge} that all directed edges in $H[C]$ are of type~1.
            Indeed, if $\overrightarrow{g_ig_j}$ is an edge of type~2, then $g_j \in B$, and if $\overrightarrow{g_ig_j}$ is an edge of type~3, then $g_i \in A$.

            It follows that $\overrightarrow{g_i g_j}$ is an edge in $H[C]$ if and only if $e_i$ crosses~$e'_j$.
            In order to find a $g_j \in C$ that has no incoming edge, we shall resort to basic geometry:             
            Consider the \representation of $T$ and $T'$ with horizontal spine, but restricted only to the edges corresponding to gaps in $C$.
            For each $g_i \in C$ with pair $(e_i,e'_i)$, let $Z_i \sse \mathbb{R}^2$ be the union of the two open semicircles for $e_i$ and $e'_i$ (each shortened a bit at both endpoints).
            Due to the shortening, the elements of $\{Z_i \mid g_i \in C\}$ are pairwise disjoint; see \cref{fig:ABC-acyclicC}.

            \begin{figure}[ht]
                \centering
                \begin{subfigure}{.3\textwidth}
                    \centering
                    \includegraphics[page=5]{Intro}
                    \caption{}
                    \label{fig:ABC-acyclicA}
	        \end{subfigure}\hfil
                \begin{subfigure}{.6\textwidth}
                    \centering
                    \includegraphics{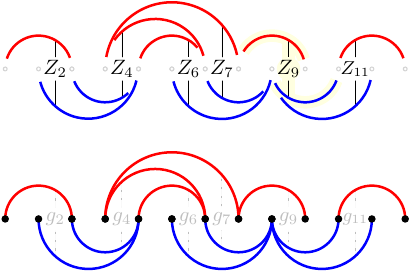}
                    \caption{}
                    \label{fig:ABC-acyclicC}
                \end{subfigure}
                \caption{
                    Illustration for the proof of \cref{lem:ABC-acyclic}.
                    (a) Case $Y=A$: Each of the red dashed edges is not there by the choice of $e_j$.
                    (b) Case $Y=C$: Gaps $g_2,g_4,g_6,g_7,g_9,g_{11}$ in $C$ and the corresponding $\{Z_i \mid g_i \in C\}$. Only $Z_9$ is not covered by any other $Z_i$.
                }
                \label{fig:C-is-acyclic}
            \end{figure}

            We say that a set $Y \sse \mathbb{R}^2$ is \emph{above} a set $X \sse \mathbb{R}^2$ if at least one $y \in Y$ lies vertically above at least one $x \in X$.
            Crucially, if $\overrightarrow{g_i g_j}$ is an edge in $H[C]$, then $Z_i$ is above $Z_j$.
            We claim that for some $Z_j$ no $Z_i$ is above $Z_j$, and hence $g_j \in C$ has no incoming edges in $H[C]$.
            To find $Z_j$, consider the rightmost point $r(Z_i)$ of each $Z_i$.
            (Using slight perturbations, we may assume that the $r(Z_i)$ have pairwise distinct $x$-coordinates.)
            There is at least one $r(Z_j)$ with no $Z_i$ above it; near $p_n$ at the latest.
            Among all rightmost points with no $Z_i$ above it, we pick the leftmost; say it is $r(Z_j)$ belonging to $g_j \in C$.
            Suppose (for the sake of a contradiction) some $Z_i$ is above $Z_j$.
            Then consider a rightmost point $q \in Z_j$ with some $Z_i$ above $q$.
            By the choice of $Z_j$, we have $q \neq r(Z_j)$.
            But then the rightmost point $r(Z_i)$ of $Z_i$ must be above $q$, and, by the choice of $q$, $r(Z_i)$ has no $Z_k$ above it --- a contradiction to the choice of $Z_j$, since $r(Z_i)$ lies further left than $r(Z_j)$.
            Hence, no $Z_i$ is above $Z_j$, and thus $g_j$ has no incoming edge $\overrightarrow{g_i g_j}$ in $H[C]$.
    \end{description}

    In each case $Y \in \{A,B,C\}$ we have found a gap $g^* \in Y$ with no incoming edges in $H[Y]$.
    Removing $g^*$ and repeating the argument shows that $H[Y]$ is acyclic.
\end{proof}

Together, \cref{lem:ABC-acyclic} and \cref{thm:alpha}\eqref{item:alpha-upper-bound} immediately imply an upper bound on~$\diam(\mathcal{F}_n)$, which, up to a small additive constant, is the same as~\cref{thm:main-upper-bound}. The full statement of~\cref{thm:main-upper-bound} follows from~\cref{thm:careful-main-upper-bound}.

\begin{corollary}
    \label{cor:5-over-3-upper-bound}
    Let $T,T'$ be any pair of two non-crossing trees on a convex set of $n$ points.
    Then the flip distance $\dist(T,T')$ is at most $\nicefrac53\cdot(n-1)$.
    In other words, $\diam(\mathcal{F}_n)\leq \nicefrac53\cdot(n-1)$.
\end{corollary}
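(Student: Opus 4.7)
The plan is to apply \cref{thm:alpha}\eqref{item:alpha-upper-bound} directly to the bound on $\las(H)/|V(H)|$ supplied by \cref{lem:ABC-acyclic}. Since the statement sought is $\dist(T,T') \leq \nicefrac{5}{3}\cdot(n-1)$, and since $\nicefrac{5}{3} \geq \nicefrac{3}{2}$, I expect essentially no work beyond combining the two ingredients.

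First, I would split into two cases according to whether the conflict graph $H = H(T,T')$ has vertices. If $V(H) = \emptyset$, then the second clause of \cref{thm:alpha}\eqref{item:alpha-upper-bound} gives immediately $\dist(T,T') \leq \nicefrac{3}{2}\cdot(n-1) \leq \nicefrac{5}{3}\cdot(n-1)$, which is what we want.

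In the main case $V(H) \neq \emptyset$, \cref{lem:ABC-acyclic} provides an acyclic subset of $V(H)$ of size at least $\nicefrac{1}{3}\cdot|V(H)|$, so $\las(H)/|V(H)| \geq \nicefrac{1}{3}$, hence
\[
    2 - \frac{\las(H)}{|V(H)|} \leq 2 - \frac{1}{3} = \frac{5}{3}.
\]
Since $\nicefrac{5}{3} > \nicefrac{3}{2}$, the maximum in \cref{thm:alpha}\eqref{item:alpha-upper-bound} is bounded by $\nicefrac{5}{3}$, and we obtain $\dist(T,T') \leq \nicefrac{5}{3}\cdot(n-1)$.

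Since $T$ and $T'$ were arbitrary, this also bounds $\diam(\mathcal{F}_n)$ from above by $\nicefrac{5}{3}\cdot(n-1)$. There is no real obstacle here: the hard work lives in \cref{lem:ABC-acyclic} and in \cref{thm:alpha}, both of which are already available. The only subtlety is the trivial check that $\nicefrac{5}{3} \geq \nicefrac{3}{2}$ so that the $\max$ does not hurt us, and the handling of the (degenerate) case of an empty conflict graph, which is also supplied by \cref{thm:alpha}.
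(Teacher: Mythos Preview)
Your proposal is correct and follows essentially the same approach as the paper's proof: split on whether $V(H)$ is empty, apply \cref{lem:ABC-acyclic} to get $\las(H)/|V(H)| \geq \nicefrac{1}{3}$ in the non-empty case, and plug into \cref{thm:alpha}\eqref{item:alpha-upper-bound}. There is no substantive difference.
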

\begin{proof}
    If $H$ is empty, then \cref{thm:alpha}\eqref{item:alpha-upper-bound} guarantees a flip distance of at most $\nicefrac32\cdot(n-1)$.
    If $H$ is non-empty, then \cref{lem:ABC-acyclic} states that $\frac{\las(H)}{|V(H)|} \geq \nicefrac13$ and \cref{thm:alpha}\eqref{item:alpha-upper-bound} implies an upper bound of $\max\left\{\nicefrac32, 2-\nicefrac13\right\}\cdot(n-1) = \nicefrac53\cdot(n-1)$.
\end{proof}

\subsection{Lower Bound on the Flip Distance via \cref{thm:alpha}\eqref{item:alpha-lower-bound}}
\label{subsec_lower}

We present an example of two trees $T,T'$ where the largest acyclic subset in the corresponding conflict graph $H$ is comparatively small.
By \cref{thm:alpha}\eqref{item:alpha-lower-bound} this then improves the lower bound on the diameter of $\mathcal{F}_n$ from $1.5 \cdot n - \mathcal{O}(1)$ to $\nicefrac{14}{9} \cdot n - \mathcal{O}(1) = 1.\overline{5} \cdot n - \mathcal{O}(1)$.

\begin{lemma}
    \label{lem_example_14_9}
    There exist trees $T,T'$ on linearly ordered points $p_1,\ldots,p_{13}$ such that for their conflict graph $H$ we have $\las(H)\leq 4$ and $|V(H)|=9$. 
\end{lemma}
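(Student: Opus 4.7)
The plan is to exhibit an explicit pair $T, T'$ of non-crossing trees on $13$ linearly ordered points and then verify by direct inspection that its conflict graph $H(T, T')$ has nine vertices and $\las(H) \le 4$.

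First, I would describe $T$ and $T'$ via their \representation{}s (cf.\ \cref{fig:cut-openB}), drawing $T$ above the spine and $T'$ below. Taking inspiration from the Hernando et al.\ construction of \cref{fig:IntroC}, which is a pair of \sep{} caterpillars yielding many mutually interlocking near-near pairs, my candidate is a variant in which both $T$ and $T'$ have three boundary edges. The points split into three consecutive blocks of sizes roughly $(4,4,5)$, and the non-boundary edges of each tree are arranged so that the twelve gaps $g_1,\dots,g_{12}$ decompose into three "routine" gaps (falling in $\pairs_=$ or $\pairs_R$) plus nine near-near gaps, the latter forming three triples that mutually conflict both within each triple and across triples. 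The drawing has to be chosen with three requirements in mind: both $T$ and $T'$ must be non-crossing, exactly nine of the twelve gaps must give rise to near-near pairs, and those nine pairs must interlock densely.

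Second, I would enumerate the nine gaps of $I_N$ and, for each, apply \cref{def_conflict_graph} to list its incident arcs. Concretely, I would tabulate for every pair $(g_i,g_j)$ of near-near gaps whether $e_i$ crosses $e'_j$ (type~1), whether $e'_j$ covers $e_i$ with $e_i$ covering $g_j$ (type~2), or whether $e_i$ covers $e'_j$ with $e'_j$ covering $g_i$ (type~3). By \cref{lem_gap_edge_bij} the edges $e_i$ and $e'_i$ are uniquely determined by the gap, so this is a finite mechanical computation that yields the digraph $H$ on nine vertices.

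Third, I would verify $\las(H) \le 4$ by the following "two-line" argument: exhibit a small family of short directed cycles in $H$ whose vertex sets form a covering code for all $\binom{9}{5}=126$ five-element subsets of $V(H)$, in the sense that every such subset contains the vertex set of at least one listed cycle and hence induces a subgraph that is not acyclic. If the three triples mentioned above are the vertex sets of three vertex-disjoint directed $3$-cycles and are further linked by enough cross-arcs to form additional $2$- or $3$-cycles between any two triples, then any five vertices must include at least three vertices of some triple or at least two vertices of each of two triples, and in both cases a cycle is visible.

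The main obstacle, as with most constructive lower bounds, is finding the right $T, T'$. The constraints pull against each other: planarity of each tree restricts the available edges, while many conflicts require heavy interleaving between $T$ and $T'$. Moreover, \cref{lem_gap_edge_bij} pins the pairing down rigidly, so one cannot re-pair edges to manufacture extra conflicts. The design therefore has to balance the planarity of each tree against the density of arcs in $H$, and the need for $n = 13$ suggests that no smaller example pushes the ratio $\las(H)/|V(H)|$ down to $4/9$. Once a suitable drawing is produced, the remaining verifications — counting near-near gaps, listing the arcs of $H$, and checking the covering property of the chosen cycles — reduce to case analysis on a picture.
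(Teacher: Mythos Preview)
Your high-level strategy---exhibit an explicit pair $T,T'$ on $13$ points, compute the conflict graph, and then bound $\las(H)$ by locating enough short cycles---matches the paper's. The difference lies in the third step, and there the paper's argument is considerably cleaner than what you sketch.

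The paper's example has the feature that the nine vertices of $H$ lie on a single $9$-cycle \emph{all of whose edges are bi-directed}. Thus any two consecutive vertices on this cycle already form a directed $2$-cycle, so an acyclic subset of $H$ must in particular be an independent set in an undirected $9$-cycle, giving $\las(H)\le\lfloor 9/2\rfloor=4$ in one line. Your proposed structure---three vertex-disjoint directed $3$-cycles plus cross-arcs---is both more speculative (you have not yet produced trees realising it) and leads to a messier verification: the pigeonhole case $(2,2,1)$ across the three triples does \emph{not} automatically contain a cycle, since two vertices of a directed $3$-cycle span only a single arc; you would need to guarantee that any four vertices split $2$--$2$ between two triples already support a cycle via the cross-arcs, which is a strong and unchecked condition. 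So while your plan could in principle be made to work with the right example and enough case analysis, the paper's bi-directed $9$-cycle makes the whole covering argument unnecessary.
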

\begin{proof}
    Let $T,T'$ be the trees depicted in \cref{fig:exampleLBa}.
    Their conflict graph $H$ is depicted in \cref{fig:exampleLBb} and contains a cycle of length $9$ with all edges being bi-directed.
    Consequently, any acyclic subset may contain at most every other gap and thus $\las(H)\leq\lfloor\nicefrac{9}{2}\rfloor=4$.
\end{proof}

\begin{figure}[htb]
    \centering
    \begin{subfigure}{.5\textwidth}
        \centering
	\includegraphics[page=1]{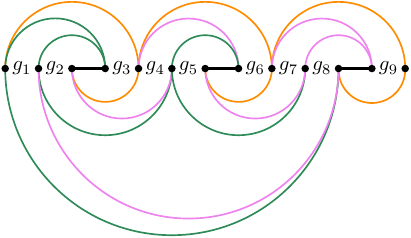}
	\caption{}
        \label{fig:exampleLBa}
    \end{subfigure}\hfil
    \begin{subfigure}{.4\textwidth}
	\centering
	\includegraphics[page=2]{flipping-example}
	\caption{}
        \label{fig:exampleLBb}
    \end{subfigure}
    \caption{
        Illustration for the proof of \cref{lem_example_14_9}.
        (a) A \representation of two trees $T,T'$ and (b) their conflict graph $H$.
        The coloring of edge pairs in (a) and gaps in (b) is according to the partition $A$, $B$, $C$ (for above, below, crossing as defined in \cref{subsec_upper}) in orange, green, and pink, respectively.
    }
    \label{fig:exampleLB}
\end{figure}

Together, \cref{lem_example_14_9} and \cref{thm:alpha}\eqref{item:alpha-lower-bound} imply \cref{thm:main-lower-bound}:

\mainlowerbound*

\section{Proof of \cref{thm:alpha}}
\label{sec_proof_alpha}

We now prove \cref{thm:alpha}, which relates the size of acyclic subsets of conflict graphs with upper and lower bounds for $\diam(\mathcal{F}_n)$, and is the key ingredient to \cref{thm:careful-main-upper-bound,thm:main-lower-bound,thm:main-upper-bound}.
We prove \cref{thm:alpha}\eqref{item:alpha-upper-bound} in \cref{subsec_upper_alpha} and \cref{thm:alpha}\eqref{item:alpha-lower-bound} in \cref{subsec_lower_alpha}.

\subsection{Upper Bound}
\label{subsec_upper_alpha}

Recall that we want a flip sequence that transforms $T$ into $T'$.
With each flip we remove an edge and replace it by a new edge.
This way, we can trace each edge from its initial to its final position.
In particular, every flip sequence naturally pairs the edges of $T$ with those of~$T'$. 
Our approach for the upper bound is to let $\pairs$ be this pairing, i.e., to convert each edge $e$ of $T$ into the edge $e'$ of $T'$ with $(e,e') \in \pairs$.
For each pair $(e,e') \in \pairs$ we will do at most two flips.
More precisely, in our flip sequence, every gap $g_i$, $i \in [n-1]$ and the corresponding pair $(e_i,e'_i) \in \pairs$ shall have exactly one of the following properties.
\begin{description}
    \item[$0$-flip:] 
        $e_i = e'_i$ and the edge $e_i$ is never replaced, keeping $e_i=e'_i$ in every intermediate tree.
    \item[$1$-flip:]
        $e_i \neq e'_i$ and the edge $e_i$ is replaced by $e'_i$ in a single, direct flip.
    \item[$2$-flip:] 
        $e_i \neq e'_i$, the edge $e_i$ is replaced by the boundary edge $p_ip_{i+1}$ in one flip, and $p_ip_{i+1}$ is replaced by $e'_i$ in a later flip.    
\end{description}
Clearly, the total number of flips in our flip sequence is then the number of $1$-flips plus two times the number of $2$-flips.
Our goal is to have as few $2$-flips as possible.

Recall the partition of $\pairs$ into $\pairs_=,\pairs_N,\pairs_R$.
As mentioned before, we shall spend no flips on pairs in $\pairs_=$ and in total at most $\frac32|\pairs_R| + |\pairs_=|$ flips on pairs in $\pairs_R$.
For the pairs in $\pairs_N$, we shall do a $1$-flip for those corresponding to an acyclic subset of the conflict graph $H$, and spend a $2$-flip for the remaining pairs in $\pairs_N$.
But first, let us present sufficient conditions for the validity of these flip.

Recall that every non-crossing edge-set $E$ on linearly ordered points $p_1,\ldots,p_n$ that covers all gaps has a corresponding map $\rho_E \colon \{g_1,\ldots,g_{n-1}\} \to E$, and that by \cref{lem_gap_edge_bij}, the set $E$ forms a tree if and only if $\rho_E$ is a bijection.

\begin{lemma}
    \label{lem:validFlip}
    Let $T$ be a non-crossing tree on linearly ordered points $p_1,\ldots,p_n$, and let $e_k = \rho_{T}(g_k)$ for $k = 1,\ldots,n-1$.
    
    Fix an edge $e_j \in T$ and consider an edge $e' = p_xp_y$ with $e' \notin T$, such that $e'$ covers $g_j$, and $e'$ does not cross any edge in $T - e_j$, and there is no $e_i\in T - e_j$ such that
    \begin{alphaenumerate}
        \item $e'$ covers $e_i$, and $e_i$ covers $g_j$, or \label{item:require-a}
        \item $e_i$ covers $e'$, and $e'$ covers $g_i$.\label{item:require-b}
    \end{alphaenumerate}
    Then, for $T' := (T - e_j) + e'$, each of the following holds.
    \begin{romanenumerate}
        \item $T'$ is a non-crossing tree, i.e., $\rho_{T'}$ is a bijection.\label{item:flip-non-crossing}
	
        \item Each edge $e \in T \cap T' = T - e_j$ corresponds to the same gap in $T$ and $T'$ ($\rho_{T}^{-1}(e) = \rho_{T'}^{-1}(e)$); the edge $e'$ corresponds to $g_j$ in $T'$ ($e' = \rho_{T'}(g_j)$).\label{item:flip-new-gaps}
        
        \item Each edge $e \in T \cap T' = T - e_j$ is \eshort{} (respectively \enormal{}, \eweird{}) in $T$ if and only if $e$ is \eshort{} (respectively \enormal{}, \eweird{}) in $T'$.\label{item:flip-same-types}
    \end{romanenumerate}
\end{lemma}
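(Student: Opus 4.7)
The plan is to establish (ii) first, and then derive (i) and (iii) from it with little extra work. The key observation is a small geometric fact: any two non-crossing edges that both cover the same gap are nested, in the sense that their endpoint intervals along the spine are comparable by inclusion, and the strictly shorter one sits inside the longer one. I would record this up front, because the two forbidden patterns (a) and (b) are exactly the two orientations of nesting that are being excluded by the hypotheses.

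To prove (ii), I would first take an arbitrary $i\neq j$ and show $\rho_{T'}(g_i)=e_i$. Since $e_i\in T-e_j\subseteq T'$ covers $g_i$, it is enough to rule out that some strictly shorter edge of $T'$ covers $g_i$ as well. The only new candidate is $e'$. If $e'$ covered $g_i$ with $|e'|<|e_i|$, then by the non-crossing assumption and the nesting fact, $e_i$ would cover $e'$; together with $e'$ covering $g_i$, this is precisely condition (b), contradicting the hypothesis. Similarly, for the gap $g_j$, the edge $e'$ covers $g_j$ by assumption, so $\rho_{T'}(g_j)$ is either $e'$ or some shorter $e_i\in T-e_j$ that also covers $g_j$. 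In the latter case, nesting forces $e'$ to cover $e_i$; combined with $e_i$ covering $g_j$, this is exactly condition (a), again a contradiction. Hence $\rho_{T'}(g_j)=e'$. (In both arguments, the strict length inequalities come from $e'\notin T$, which forbids equality with an edge of $T$.)

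From (ii), the map $\rho_{T'}$ sends $g_i\mapsto e_i$ for $i\neq j$ and $g_j\mapsto e'$, and this is a bijection from the $n-1$ gaps onto $T'$ (using $e'\notin T$ to see that the new image is distinct from all the $e_i$'s). Combined with the hypothesis that $e'$ does not cross any edge of $T-e_j$, so $T'$ is non-crossing, \cref{lem_gap_edge_bij} immediately yields that $T'$ is a non-crossing tree, giving (i). Finally, (iii) is essentially automatic: the short/near/wide classification of an edge $e\in T\cap T'$ depends only on how many endpoints of $e$ coincide with the endpoints of the gap assigned to it by $\rho$, and by (ii), that gap is the same in $T$ and $T'$.

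The proof does not require any delicate calculation; the main obstacle is bookkeeping. I need to ensure that all ways a "shorter" edge could preempt the expected value of $\rho_{T'}$ are matched precisely to one of the forbidden patterns (a) or (b), and that strict versus weak inequalities are invoked correctly. The cleanest way to do this is probably to state the nesting fact as a preliminary observation and then treat the two subcases ($i\neq j$ and $i=j$) symmetrically, so that (a) and (b) appear as mirror images of each other.
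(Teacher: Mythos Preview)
Your proposal is correct and essentially matches the paper's proof: both compute $\rho_{T'}$ explicitly by showing $\rho_{T'}(g_j)=e'$ via condition~(a) and $\rho_{T'}(g_i)=e_i$ for $i\neq j$ via condition~(b), then invoke \cref{lem_gap_edge_bij} and derive (iii) from the preserved gap assignment. The only cosmetic difference is that you make the nesting observation explicit and prove (ii) before (i), whereas the paper folds (ii) into its proof of (i); the underlying argument is identical.
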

\begin{proof}
    We first show \eqref{item:flip-non-crossing}.
    Note that $T'$ is non-crossing as $e'$ crosses no edge in $T - e_j$ by assumption.
    To show that $T'$ is a tree, by \cref{lem_gap_edge_bij}, it suffices to show that $\rho_{T'} \colon \{g_1,\ldots,g_{n-1}\} \to T'$ is a bijection.
    In fact, gap $g_j$ is covered by $e'$ and each $g_i \neq g_j$ is still covered by $e_i \in T-e_j$.
    As $|T'| = n-1$, it suffices to show that $\rho_{T'}$ is injective.
    By \eqref{item:require-a}, $\rho_{T'}(g_j) = e'$, and by \eqref{item:require-b}, $\rho_{T'}(g_i) = e_i$ for all $i\neq j$.
    Thus, $\rho_{T'}$ is a bijection and $T'$ a tree.

    In fact, we already know $\rho_{T'}$ explicitly, and can also conclude \eqref{item:flip-new-gaps}.
    Finally, \eqref{item:flip-same-types} follows from~\eqref{item:flip-new-gaps}, since the type of an edge $e$ depends only on $e$ and its associated gap.
\end{proof}

We use \cref{lem:validFlip} in particular for two special cases, namely, for $2$-flips when $e'$ is a boundary edge and for $1$-flips when the gap of $e'$ has no incoming edges in a subgraph $H[Y]$ of $H$.
For convenience, we show that the preconditions are fulfilled in these two cases.

\begin{lemma}\label{lem:flip-to-boundary}
    The preconditions of \cref{lem:validFlip} are fulfilled if we choose $e'$ as the boundary edge $p_jp_{j+1}$.
\end{lemma}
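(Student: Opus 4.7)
The plan is to verify each of the four preconditions of \cref{lem:validFlip} in turn for $e' = p_jp_{j+1}$, using only the definitions of covering, crossing, and $\rho_T$. I expect the whole argument to be short and essentially bookkeeping; the only subtle point is that the lemma statement requires $e' \notin T$, which we first have to see is forced by the assumption that $e_j \neq e'$ (otherwise the flip is vacuous): if $p_jp_{j+1}$ were in $T$, it would be the shortest edge covering $g_j$, hence $e_j = \rho_T(g_j) = p_jp_{j+1}$, contradiction. So we may assume $e' \notin T$.

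Next I would check that $e'$ covers $g_j$, which is immediate since $e' = p_jp_{j+1}$ has $g_j$ as its unique covered gap. For the non-crossing property, I would argue that no edge $f = p_ap_b$ with $a<b$ can cross $e' = p_jp_{j+1}$: a crossing would require $a < j < b < j+1$ or $j < a < j+1 < b$, both impossible since there is no integer strictly between $j$ and $j+1$. Hence $e'$ crosses no edge in $T - e_j$ (indeed none at all).

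For conditions \eqref{item:require-a} and \eqref{item:require-b}, the key observation is that $e' = p_jp_{j+1}$ has length~$1$: the only edge it can cover is itself, and the only gap it covers is $g_j$. For \eqref{item:require-a}, if $e'$ covers some $e_i$ then $e_i = p_jp_{j+1} = e'$, but $e' \notin T$, so no such $e_i \in T-e_j$ exists. For \eqref{item:require-b}, if $e'$ covers $g_i$ then $g_i = g_j$, so $i=j$ and $e_i = e_j$, which is excluded from the quantification. This exhausts all preconditions, so \cref{lem:validFlip} applies, completing the proof.
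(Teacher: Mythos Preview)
Your proof is correct and follows essentially the same approach as the paper's: both argue that a boundary edge cannot cross any edge, covers only the single gap $g_j$, and covers no other edge, which immediately dispatches the crossing condition and conditions \eqref{item:require-a} and \eqref{item:require-b}. Your write-up is in fact slightly more careful than the paper's, since you explicitly verify the precondition $e' \notin T$ (by observing that $p_jp_{j+1} \in T$ would force $e_j = p_jp_{j+1}$), a point the paper leaves implicit.
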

\begin{proof}
    Clearly, a boundary edge does not cross any edge of $T$.
    Moreover, $e'$ is a \eshort{}edge that covers no edge of $T$ and covers only one gap, namely $g_j$, proving that no edge $e_i \in T - e_j$ has property \eqref{item:require-a} or \eqref{item:require-b}.
\end{proof}

\begin{lemma}\label{lem:validFlipAc}
    The preconditions of \cref{lem:validFlip} are fulfilled if $\pairs_R=\emptyset$, $g_j$ has no incoming edge in $H$ and $e'$ is chosen such that $(e_j,e')\in \pairs_N$, i.e., we flip $e_j$ for $e' = e'_j$.
\end{lemma}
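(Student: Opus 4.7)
The plan is to verify, for $e' := e'_j$, each of the three preconditions of \cref{lem:validFlip}: that $e'_j$ covers $g_j$, that $e'_j$ crosses no edge of $T - e_j$, and that no $e_i \in T - e_j$ witnesses condition~\eqref{item:require-a} or~\eqref{item:require-b}. The first is immediate: since $(e_j, e'_j) \in \pairs_N \subseteq \pairs$, the definition of the pairing gives $e'_j = \rho_{T'}(g_j)$, which by definition of $\rho_{T'}$ covers $g_j$.

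For the remaining two conditions, my approach is a case split on the type of pair at the gap $g_i$ for each candidate $e_i \in T - e_j$. Because $\pairs_R = \emptyset$ by hypothesis, every gap $g_i$ with $i \neq j$ lies in $I_N \cup I_=$. For $g_i \in I_N$, I would argue that a crossing of $e_i$ with $e'_j$, a witness for~\eqref{item:require-a}, or a witness for~\eqref{item:require-b} would produce, respectively, a type-1, type-2, or type-3 edge $\overrightarrow{g_i g_j}$ in $H$, contradicting the assumption that $g_j$ has no incoming edge in $H$.

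For $g_i \in I_=$, the plan is to exploit the fact that $e_i = e'_i$ lies in $T'$, which lets me invoke the shortest-edge characterization of $\rho_{T'}$. Since $T'$ is non-crossing, $e_i = e'_i$ and $e'_j$ cannot cross. If~\eqref{item:require-a} were witnessed by such an $e_i$, then $e'_i = e_i$ would be an edge of $T'$ covering $g_j$, and since $e'_j$ properly covers $e_i$ while $e'_i \neq e'_j$ (by the bijectivity of $\rho_{T'}$ together with $i \neq j$), $e'_i$ would be strictly shorter than $e'_j$, contradicting $e'_j = \rho_{T'}(g_j)$. The argument for~\eqref{item:require-b} is symmetric: $e'_j$ would be an edge of $T'$ strictly shorter than $e'_i = e_i$ that still covers $g_i$, contradicting $e'_i = \rho_{T'}(g_i)$. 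No step is a genuine obstacle; the only care needed is the clean organization of these sub-arguments and recognizing that $\pairs_R = \emptyset$ is used twice -- once to partition the relevant gaps into $I_N \cup I_=$, and once to identify $e_i$ with an edge of $T'$ in the second case.
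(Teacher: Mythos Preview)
Your proof is correct and follows the paper's approach for the $I_N$ case, matching the three conflict-graph edge types to the crossing condition and conditions~\eqref{item:require-a}--\eqref{item:require-b}. You are in fact more careful than the paper: its proof dismisses \eshort{} edges and then asserts that, since $\pairs_R = \emptyset$, the remaining edges to check lie in $\pairs_N$---glossing over possible non-\eshort{} edges with gap in $I_=$, which your argument (using $e_i = e'_i \in T'$ together with the minimality defining $\rho_{T'}$) handles explicitly and correctly.
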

\begin{proof}
    Note that no \eshort{} edge of $T$ can violate a precondition of \cref{lem:validFlip}, since any such edge can cover neither $e'$ nor $g_j$.
    As $\pairs_R=\emptyset$ by assumption, the only edges of $T$ that remain to be checked are edges $e_i$ that belong to a pair $(e_i,e'_i)\in \pairs_N$.
    If $e'$ crosses such an edge $e_i$ of $T - e_j$, then there is the incoming edge $\overrightarrow{g_ig_j}$ of type~1 at $g_j$ in $H$.
    Secondly, if an edge $e_i$ of $T - e_j$ has property \eqref{item:require-a}, respectively \eqref{item:require-b}, then there is the incoming edge $\overrightarrow{g_ig_j}$ of type~2, respectively type~3, at $g_j$ in $H$.
\end{proof}

Recall that for a gap $g_i$ and non-crossing tree $T$, the edge $e_i = \rho_T(g_i)$ is \eweird{} if $e_i$ is neither incident to $p_i$ nor $p_{i+1}$.
We next bound the number $|\TW|$ of \eweird{} edges of $T$ in terms of the number $|\TS|$ of \eshort{} edges in $T$. 

\begin{lemma} 
   \label{lem_tiny_vs_weird}
    Let $T$ be a non-crossing tree on linearly ordered points $p_1,\ldots,p_n$. 
    Let $k\geq 1$ be the number of uncovered edges of $T$.
	Then $|\TS| \geq k$ and $|\TW| \leq |\TS|-k$.
\end{lemma}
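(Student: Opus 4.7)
My plan is to prove the lemma by strong induction on $n$, using the observation that the $k$ uncovered edges of $T$ must have the canonical form $u_i = p_{a_i}p_{a_{i+1}}$ with $1 = a_1 < a_2 < \cdots < a_{k+1} = n$: they are pairwise incomparable in nesting (none covers the other), and since every gap is covered by some uncovered edge, their intervals tile $[1,n]$, sharing only endpoints. The base case $n = 2$ is immediate, as $T = \{p_1p_2\}$ is a single short, uncovered edge and $|\TS| = k = 1$, $|\TW| = 0$.

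The bulk of the argument is a case split on $k$. For $k \geq 2$, I partition the edges of $T$ into the sub-trees $T_i := T|_{[p_{a_i}, p_{a_{i+1}}]}$ consisting of all edges of $T$ with both endpoints in $[p_{a_i}, p_{a_{i+1}}]$; by non-crossing and the $\rho$-bijection, each $T_i$ is itself a non-crossing tree whose unique uncovered edge is $u_i$, every edge retains its short/\enormal{}/\eweird{} type in $T_i$, and each $T_i$ has strictly fewer than $n$ vertices. Applying the inductive hypothesis to each $T_i$ gives $|\TS(T_i)| \geq 1$ and $|\TW(T_i)| \leq |\TS(T_i)| - 1$, and summing over $i$ yields the required bounds for $T$. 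For $k = 1$ the unique uncovered edge is $u_1 = p_1p_n$ of length $\geq 2$, hence either \enormal{} or \eweird{}. If $u_1$ is \enormal{} (say \enormal{}-left with $\rho_T(g_1) = u_1$; the other side is symmetric), then any other edge at $p_1$ would also cover $g_1$ but be strictly shorter than $u_1$, contradicting $u_1 = \rho_T(g_1)$, so $p_1$ is a leaf; removing it yields $T' = T - u_1$ on $n-1$ vertices with preserved edge types, whose uncovered edges are exactly the non-empty set $Y$ of children of $u_1$ in the covering forest of $T$ (e.g.\ the longest non-$u_1$ edge must have parent $u_1$). Induction on $T'$ gives $|\TS(T)| \geq |Y| \geq 1$ and $|\TW(T)| \leq |\TS(T)| - |Y| \leq |\TS(T)| - 1$. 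If $u_1$ is \eweird{} with $\rho_T(g_j) = u_1$ for some $1 < j < n-1$, then $u_1$ is the unique edge of $T$ covering $g_j$ (any shorter one would contradict $\rho_T$, while a longer one would strictly cover the uncovered $u_1$), so removing $u_1$ splits $T$ into sub-trees $T|_{[1,j]}$ and $T|_{[j+1,n]}$, each on strictly fewer than $n$ vertices; induction applied to both, combined with the contribution of $u_1$ (which is \eweird{}, adding $+1$ to $|\TW|$ and $0$ to $|\TS|$), gives $|\TS(T)| \geq k_1 + k_2 \geq 2 \geq 1 = k$ and $|\TS(T)| - |\TW(T)| \geq k_1 + k_2 - 1 \geq 1 = k$.

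The main obstacle I anticipate is carefully verifying the structural ingredients used above: that each restriction $T|_{[a,b]}$ is a non-crossing tree on the full vertex set $\{p_a,\ldots,p_b\}$, and that the gap assigned to each such edge by $\rho$ (and hence its short/\enormal{}/\eweird{} type) is the same relative to the sub-tree as it is relative to $T$. Both facts hinge on the elementary observation that, by non-crossing with the ambient edge $p_ap_b$, any vertex in the open interval $(p_a,p_b)$ can only have neighbors in $[p_a,p_b]$, together with the uniqueness of shortest covering edges furnished by \cref{lem_gap_edge_bij}.
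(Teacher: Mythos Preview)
Your induction argument is correct. Compared to the paper's proof, you take a genuinely different route: the paper argues globally via the cover-relation Hasse diagram, observing that this is a rooted forest with $k$ roots whose leaves are exactly the \eshort{} edges, then proves that every \eweird{} edge has at least two children, and concludes by the standard fact that in a rooted tree the number of leaves exceeds the number of branching nodes by at least one. Your proof instead unfolds this forest recursively: for $k\geq 2$ you split along the tiling of $[1,n]$ by the uncovered edges into sub-instances each with a single uncovered edge, and for $k=1$ you peel off the unique uncovered edge $p_1p_n$, distinguishing whether it is \enormal{} (leaf deletion) or \eweird{} (splitting at its gap into two sub-trees, which is precisely where the $+1$ on the \eweird{} side is compensated by the two sub-trees each contributing a \eshort{} edge). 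Both arguments are short; the paper's is more conceptual and non-recursive, while yours has the minor advantage of making the structural claims about restrictions $T|_{[a,b]}$ (spanning-tree property and preservation of the gap assignment~$\rho$) completely explicit, which is worth doing since these same facts are used tacitly elsewhere in the paper.
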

\begin{proof} 
    Consider the cover relation of the edges of $T$ (with respect to the given linear order).
    Let us write $e \preceq f$ whenever $e$ is covered by $f$.
    Trivially, every edge $e \in T$ covers itself, i.e., $e \preceq e$.
    Since $e \preceq e'$ and $e' \preceq e''$ implies $e \preceq e''$, we have that $(T, \preceq)$ forms a partial order. 
    Moreover, since $T$ is non-crossing, it holds that for every $e\in T$, its upset $\{e' \in T \mid e \preceq e'\}$ is totally ordered, implying that the Hasse diagram of $(T,\preceq)$ is a rooted forest~$R$.
    The roots of~$R$ are the $k$ uncovered edges of $T$. 
    The nodes without children of~$R$ are the \eshort{} edges of $T$. 
    For an example, consider \cref{fig:rootedForest}.
    \begin{figure}[htb]
	\centering
	\includegraphics[page=26]{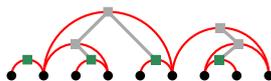}
        \caption{
            Illustration for the proof of \cref{{lem_tiny_vs_weird}}.
            Example of a tree $T$ and the corresponding rooted forest $R$ consisting of three trees. 
            The root of the middle tree of $R$ is a \eweird{} edge of $T$.
        }
        \label{fig:rootedForest}
    \end{figure}
    For convenience, we refer to any node without children as a leaf, even if it also the root of a tree.
    {Clearly, every rooted tree has at least one leaf}. 
    Consequently, $|\TS| \geq k$ follows from the fact that  $R$ consists of $k$ trees.
	
    To prove the second statement of the lemma, namely $|\TW| \leq |\TS|-k$, we show that any \eweird{} edge of~$T$ has at least two children in $R$. 
    By the well-known fact that the number of leaves in a rooted tree is at least one plus the number of nodes with at least two children, and our earlier observation that the {leaves} of~$R$ are the \eshort{} edges of $T$, $|\TW| \leq |\TS|-k$ follows.
 
    Let $e = p_ip_j$ be a \eweird{} edge with gap $g_k$ between $p_k$ and $p_{k+1}$. 
    Then $i<k$ and $k+1<j$.
    Let $e_i = \rho_T(g_i)$ be the edge with gap $g_i$, and let $e_{j-1} = \rho_T(g_{j-1})$ be the edge with gap $g_{j-1}$.
    The edges $e$, $e_i$, and $e_{j-1}$ are all different, since they have pairwise different gaps.
    Since $i<k$ and $k+1<j$, we have $e_i \preceq e$ and $e_{j-1} \preceq e$.
    Furthermore, any edge $f\neq e$ that covers both $e_i$ and $e_{j-1}$ also covers the points $p_k$ and $p_{k+1}$.
    However, since $g_k$ is the gap of $e$, it follows that $f$ also covers $e$.
    Hence, $e_i$ and $e_{j-1}$ are children of $e$ in $R$.
\end{proof}

Recall that we plan to flip some edges $e_i$ of $T$ to the boundary edge $p_ip_{i+1}$ corresponding to the gap $g_i$ of $e_i$.
In general, for a subset $I$ of gaps, let us denote by $T_I$ the graph obtained from $T$ by replacing, for each gap $g_i\in I$, the edge $e_i$ of $T$ by the corresponding boundary edge $p_ip_{i+1}$.
If $e_i$ is already \eshort{}, then $e_i = p_ip_{i+1}$; i.e., this replacement does not change anything and hence no flip os required.
Otherwise, $e_i \to p_ip_{i+1}$ always constitutes a valid flip by~\cref{lem:flip-to-boundary}.
In particular, \cref{lem:validFlip,lem:flip-to-boundary} assert that $T_I$ is a non-crossing tree and that there is a valid flip sequence $T \to \cdots \to T_I$.
The length of this flip sequence is the number of gaps in $I$ \emph{not} corresponding to \eshort{} edges of $T$. 

The following proposition states the number of flips that are required for handling the edges corresponding to pairs $\pairs_R \cup  \pairs_= \cup X$, where $X$ corresponds to the subset of edge pairs in $\pairs_N$ for which we make a $2$-flip in our final flip sequence.
Note that the proposition gives two partial flip sequences, since we will make all $1$-flips for pairs in  $\pairs_N$ between the first and the second flip of any $2$-flip of a pair in $\pairs$. 
This way, we will be able to use~\cref{lem:validFlipAc} for the pairs in $\pairs_N \setminus X$.
Recall that $I_=,I_R,I_N$ are the subsets of gaps corresponding to $\pairs_=,\pairs_R,\pairs_N$, respectively. 
We will especially consider the gaps in $I_R$ that correspond do a \eshort{}edge in $T$ or $T'$, since for those, we will require only 1 flip in our final flip sequence.

\begin{proposition}
    \label{prop:flip-to-boundary}
    Let $T,T'$ be two non-crossing trees on linearly ordered points $p_1,\ldots,p_n$, and $X \sse I_N$ be any (possibly empty) subset of the gaps corresponding to the \enormal{}-\enormal{} pairs.
    Then there are flip sequences $T \to \cdots \to T_{I_R \cup X}$ and $T'_{I_R \cup X} \to \cdots \to T'$ with in total at most $\frac32 |I_R| + |I_=| + 2|X| - 1$ flips.
\end{proposition}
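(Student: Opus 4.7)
The plan is to construct each of the two flip sequences greedily. For $T \to T_{I_R \cup X}$, I iterate over the gaps $g_i \in I_R \cup X$ and, whenever the current edge at $g_i$ is not already the boundary edge $p_ip_{i+1}$, perform a direct flip to $p_ip_{i+1}$. By \cref{lem:flip-to-boundary}, each such flip satisfies the hypotheses of \cref{lem:validFlip}, so the result remains a non-crossing tree; by \cref{lem:validFlip}\eqref{item:flip-new-gaps} and~\eqref{item:flip-same-types}, the gap--edge correspondence and the type of every remaining edge are preserved, letting the iteration proceed to completion. Running the analogous construction starting from $T'$ and reversing it yields the required flip sequence $T'_{I_R \cup X} \to \cdots \to T'$.

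It remains to bound the total number of flips. For $g_i \in X \subseteq I_N$, the edge $e_i$ is \enormal{} and thus contributes exactly one flip in the sequence from $T$; for $g_i \in I_R$, the gap contributes a flip if and only if $e_i$ is not \eshort. Writing $s_T$ for the number of gaps in $I_R$ at which $e_i$ is \eshort{} and $s_{T'}$ for the analogous quantity in $T'$, the total flip count equals
\[
2|I_R| + 2|X| - s_T - s_{T'}.
\]
So it suffices to establish the inequality $s_T + s_{T'} \geq \nicefrac{1}{2}\cdot|I_R| - |I_=| + 1$.

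To prove this, I would combine two counts. First, every gap $g_i \in I_R$ lies outside $\pairs_N$, so at least one of $e_i, e'_i$ is non-\enormal, i.e.\ \eshort{} or \eweird{}; summing over $I_R$ yields
\[
s_T + s_{T'} + w_T + w_{T'} \geq |I_R|,
\]
where $w_T, w_{T'}$ count the \eweird{} edges of $T, T'$ at gaps in $I_R$. Second, let $c := |\TS \cap \TSp|$ be the number of \eshort{} edges shared by $T$ and $T'$; clearly $c \leq |I_=|$. Since every \eshort{} edge of $T$ sits at a gap in $I_= \cup I_R$ (an \eshort{} edge cannot sit at a gap in $I_N$), and shared \eshort{} edges contribute $c$ gaps in $I_=$, we have $|\TS| = s_T + c$, and symmetrically $|\TSp| = s_{T'} + c$. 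Applying \cref{lem_tiny_vs_weird} with the uncovered-edge constant $k \geq 1$ gives $w_T \leq |\TW| \leq |\TS| - 1 = s_T + c - 1$ and analogously $w_{T'} \leq s_{T'} + c - 1$. Substituting into the displayed inequality and using $c \leq |I_=|$ yields $s_T + s_{T'} \geq \nicefrac{1}{2}\cdot|I_R| - c + 1 \geq \nicefrac{1}{2}\cdot|I_R| - |I_=| + 1$, as required.

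The main obstacle is essentially bookkeeping: one must pair the weak inequality guaranteed by the $I_R$-condition (at least one non-\enormal{} edge per gap, summed across both trees) with \cref{lem_tiny_vs_weird} (\eweird{} edges traded for \eshort{} ones within the \emph{same} tree) to absorb the extra flips for \eweird{} edges and reach the $\nicefrac{3}{2}\cdot|I_R|$ leading term. Once the identity $|\TS| = s_T + c$ is spotted, the argument is mechanical.
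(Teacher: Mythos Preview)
Your proposal is correct and follows essentially the same approach as the paper: construct the two boundary-flip sequences via \cref{lem:flip-to-boundary}, reduce to the inequality $s_T+s_{T'}\geq \nicefrac12|I_R|-|I_=|+1$, and derive it by combining the fact that every gap in $I_R$ has a non-\enormal{} edge with \cref{lem_tiny_vs_weird}. The paper uses the slightly coarser bound $|\TS|+|\TSp|\leq |\IRS|+|\IRSp|+2|I_=|$ in place of your exact identity $|\TS|=s_T+c$ with $c\leq|I_=|$, but the arguments are otherwise identical.
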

\begin{proof}
    Consider any $g_i \in I_R \cup X$ and the corresponding pair $(e_i,e'_i)$.
    If $g_i \in I_R$, at most one of $e_i,e'_i$ is \eshort{}.
    If $g_i \in X \subseteq I_N$, none of $e_i,e'_i$ is \eshort{}.
    Let $\IRS = \{g_i \in I_R \mid \text{$e_i$ is \eshort{}}\}$ and $\IRSp = \{g_i \in I_R \mid \text{$e'_i$ is \eshort{}}\}$.
    Then $\IRS \cap \IRSp = \emptyset$, since \eshort{}-\eshort{} pairs are in $I_=$ only.
    For the first flip sequence $T \to \cdots \to T_{I_R \cup X}$ we do (in any order) for every $g_i \in (I_R \cup X) - \IRS$ a flip that replaces $e_i$ by $p_ip_{i+1}$.
    This is a valid flip sequence by \cref{lem:flip-to-boundary}, and clearly transforms $T$ into $T_{I_R \cup X}$.
    It uses $|(I_R \cup X) - S| = |I_R - \IRS| + |X| = |I_R| - |\IRS| + |X|$ flips.
    Similarly, there is a valid flip sequence $T' \to \cdots \to T'_{I_R \cup X}$ that uses $|I_R| - |\IRSp| + |X|$ flips.
    Its reverse is the desired flip sequence $T'_{I_R \cup X} \to \cdots \to T'$.  

    In total, both flip sequences have $2|I_R| - (|\IRS| + |\IRSp|) + 2|X|$ flips.
    It remains to prove that $|\IRS| + |\IRSp| \geq \frac12|I_R| - |I_=| + 1$.
    To this end, note that every gap in $I_R - (\IRS \cup \IRSp)$ involves at least one \eweird{} edge.
    Recall that $\TW$ and $\TS$ denote the set of all \eweird{} and all \eshort{} edges in $T$, respectively.
    By \cref{lem_tiny_vs_weird}, we have $|\TW| \leq |\TS| - 1$ and $|\TWp| \leq |\TSp| - 1$.
    Moreover, $|\TS| + |\TSp| \leq |\IRS| + |\IRSp| + 2|I_=|$.
     Together we conclude 
    \[
        |I_R| \leq |\IRS| + |\IRSp| + |\TW| + |\TWp| \leq |\IRS| + |\IRSp| + |\TS| + |\TSp| - 2 \leq 2(|\IRS| + |\IRSp| + |I_=|) - 2,
    \]
    which gives the desired $|\IRS| + |\IRSp| \geq \frac12|I_R| - |I_=| + 1$.
\end{proof}

\cref{prop:flip-to-boundary} works for any subset $X \sse I_N$ of the \enormal{}-\enormal{} gaps.
We content ourselves with spending a $2$-flip on each gap in $X$ (reflected by the $2|X|$ term in the bound of \cref{prop:flip-to-boundary}), but aim to do a direct $1$-flip on each gap in $Y = I_N - X$.
For larger $|Y|$ we obtain an overall shorter flip sequence.
So we want a large set of \enormal{}-\enormal{} pairs that can all be done as $1$-flips.
These flips shall form a valid flip sequence $T_{I_R \cup X} \to \cdots \to T'_{I_R \cup X}$, connecting the two sequences obtained by \cref{prop:flip-to-boundary}.
As all the edges for gaps in $I_R \cup X$ are flipped to boundary edges in $T_{I_R \cup X}$ and $T'_{I_R \cup X}$, we can ``safely ignore'' all pairs corresponding to gaps in $I_= \cup I_R \cup X$ and focus on the \enormal{}-\enormal{} pairs corresponding to~$Y$.

\begin{proposition}
    \label{prop:flip-NN}
    Let $T,T'$ be two non-crossing trees on linearly ordered points $p_1,\ldots,p_n$, and $Y \sse I_N$ be an acyclic subset in $H = H(T,T')$, and $X=I_N-Y$.
    Then there is a flip sequence $T_{I_R \cup X}\to \dots \to T'_{I_R \cup X} $ with $|Y|$ flips.
\end{proposition}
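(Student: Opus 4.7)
The plan is to flip the gaps in $Y$ one at a time in an order compatible with the acyclic structure of $H[Y]$, performing exactly one direct flip $e_{i_k}\to e'_{i_k}$ per step. Since $H[Y]$ is a finite directed acyclic graph, it admits a topological ordering $g_{i_1},\ldots,g_{i_{|Y|}}$ such that every arc $\overrightarrow{g_{i_a}g_{i_b}}\in E(H[Y])$ satisfies $a<b$. Starting from $T^{(0)}:=T_{I_R\cup X}$, I set $T^{(k)}:=T^{(k-1)}-e_{i_k}+e'_{i_k}$ and aim to prove by induction that each step is a valid flip and that $T^{(|Y|)}=T'_{I_R\cup X}$, yielding a flip sequence of length~$|Y|$.

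The inductive step amounts to verifying the hypotheses of \cref{lem:validFlip} applied with $T:=T^{(k-1)}$, $e_j:=e_{i_k}$, and $e':=e'_{i_k}$. Iterating \cref{lem:validFlip}\eqref{item:flip-new-gaps} keeps $e_{i_k}$ associated with gap $g_{i_k}$ in $T^{(k-1)}$, and $(e_{i_k},e'_{i_k})\in\pairs_N$ gives both $e_{i_k}\neq e'_{i_k}$ and that $e'_{i_k}$ covers $g_{i_k}$. What remains is to rule out that $e'_{i_k}$ coincides with, crosses, or triggers~\eqref{item:require-a}/\eqref{item:require-b} against some edge of $T^{(k-1)}-e_{i_k}$.

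To handle these obligations, I split $T^{(k-1)}-e_{i_k}$ into four groups: boundary edges at gaps in $I_R\cup X$, shared edges at gaps in $I_=$, already-flipped edges $\{e'_{i_a}\mid a<k\}$, and not-yet-flipped edges $\{e_{i_m}\mid m>k\}$. Boundary edges cover only their own gap in $I_R\cup X$ (and thus not $g_{i_k}\in I_N$), so they cannot interact with $e'_{i_k}$ in any forbidden way. Edges of the second and third groups lie in the non-crossing tree $T'$ together with $e'_{i_k}$, so they certainly do not cross it, and the key observation that $e'_{i_k}=\rho_{T'}(g_{i_k})$ is the \emph{shortest} $T'$-edge covering $g_{i_k}$ rules out both coincidences and covering violations: any $e\in T'$ equal to, covered by, or covering $e'_{i_k}$ while also covering $g_{i_k}$ (or having its gap covered by $e'_{i_k}$) would contradict the bijectivity of $\rho_{T'}$. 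Finally, for an unflipped $T$-edge $e_{i_m}$ with $m>k$, a coincidence, a crossing, or an instance of~\eqref{item:require-a}/\eqref{item:require-b} between $e_{i_m}$ and $e'_{i_k}$ yields an arc $\overrightarrow{g_{i_m}g_{i_k}}$ of type~1, 2, or~3 in $H[Y]$; the topological ordering then forces $m<k$, contradicting $m>k$.

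With the hypotheses verified, \cref{lem:validFlip} guarantees that $T^{(k)}$ is a non-crossing tree obtained from $T^{(k-1)}$ by a single direct flip. The cumulative effect of all $|Y|$ flips is to replace $e_i$ by $e'_i$ at every $g_i\in Y$, so $T^{(|Y|)}=T'_{I_R\cup X}$ as required. The main obstacle in the plan is the covering-type case analysis for edges whose gaps lie outside $V(H)=I_N$: arcs inside $H[Y]$ are handled automatically by the topological order, but the contributions of the shared $I_=$ edges and of the previously flipped $T'$-edges must be excluded by directly invoking the shortest-cover characterisation of $\rho_{T'}$, since these edges do not appear in the conflict graph at all.
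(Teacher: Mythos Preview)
Your proof is correct and follows the same strategy as the paper: process $Y$ in a topological order of $H[Y]$ and verify the hypotheses of \cref{lem:validFlip} at each step. The only difference is packaging: the paper observes that the pair $(T_{I_R\cup X},T'_{I_R\cup X})$ has $\pairs_R=\emptyset$ and conflict graph exactly $H[Y]$, and then invokes \cref{lem:validFlipAc} iteratively, whereas you carry out the four-group case analysis directly; in fact your explicit appeal to the bijectivity of $\rho_{T'}$ for the $I_=$ edges and the already-flipped $T'$-edges spells out a point that the paper's proof of \cref{lem:validFlipAc} leaves implicit.
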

\begin{proof}
    Let $T_1:=T_{I_R \cup X}$ and $T_2:=T'_{I_R \cup X}$.
    \cref{lem:validFlip} guarantees that \enormal{}-\enormal{} pairs of $T,T'$ corresponding to gaps in $Y$ are still \enormal{}-\enormal{} pairs of $T_1,T_2$.
    Moreover, we have $\pairs_R=\emptyset$ for $T_1,T_2$.
    In particular, the conflict graph of $T_1,T_2$ is $H[Y]$.
    Because $H[Y]$ is acyclic, there exists a topological ordering $\prec$ of $H[Y]$.
    The first gap $g$ in $\prec$ has no incoming edges in $H[Y]$, and \cref{lem:validFlipAc,lem:validFlip} ensure that the direct flip of the corresponding pair $(e,e')$ of $g$ is valid and maintains all gap-assignments and types of edges.
    We repeat with direct flips for all pairs corresponding to $Y$ in the order given by $\prec$, until we reach $T_2$.
    As we spent one flip per pair, the resulting flip sequence has length $|Y|$.
\end{proof}

\subsubsection{Putting Things Together -- Proof of \cref{thm:alpha}\eqref{item:alpha-upper-bound}}

Now, we show how to obtain a short flip sequence from a large acyclic subset.

\begin{theorem*}[corresponding to \cref{thm:alpha}\eqref{item:alpha-upper-bound}]{\ \\}
    Let $T,T'$ be two non-crossing trees on linearly ordered points $p_1,\ldots,p_n$ with conflict graph $H = (V(H),E(H))$.
    Then the flip distance $\dist(T,T')$ is at most \mbox{$\max\left\{\frac32,2-\frac{\las(H)}{|V(H)|}\right\} (n-1)$} if $H$ is non-empty, and at most $\frac32(n-1)$ if $H$ is empty.
    \label{thm:large-acyclic-small-distance}
\end{theorem*}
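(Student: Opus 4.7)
The plan is to combine Propositions~15 and~16 by choosing $Y \subseteq V(H) = I_N$ to be a largest acyclic subset, so $|Y| = \las(H)$, and setting $X = I_N - Y$. Proposition~15 then provides flip sequences $T \to \cdots \to T_{I_R \cup X}$ and $T'_{I_R \cup X} \to \cdots \to T'$ of total length at most $\tfrac{3}{2}|I_R| + |I_=| + 2|X| - 1$, and Proposition~16 yields a middle piece $T_{I_R \cup X} \to \cdots \to T'_{I_R \cup X}$ of length exactly $|Y|$. Concatenating the three pieces gives a valid flip sequence from $T$ to $T'$ with at most
\[
    \tfrac{3}{2}|I_R| + |I_=| + 2|X| + |Y| - 1 = \tfrac{3}{2}|I_R| + |I_=| + 2|I_N| - \las(H) - 1
\]
flips, where I used $|X| = |I_N| - |Y|$ and $|Y| = \las(H)$.

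\textbf{From here it is pure algebra.} Write $\alpha = \las(H)$ and $N = |V(H)| = |I_N|$, and recall that $|I_R| + |I_=| + N = n-1$. Bounding $|I_=| \le |I_R| + |I_=|$ yields $\tfrac32|I_R| + |I_=| \le \tfrac32(|I_R|+|I_=|) = \tfrac32(n-1-N)$, so the total length is at most
\[
    \tfrac32(n-1-N) + 2N - \alpha - 1 \;\le\; \tfrac32(n-1) + \tfrac{N}{2} - \alpha.
\]
If $\alpha/N \geq 1/2$, then $\tfrac{N}{2} - \alpha \leq 0$ and the bound is at most $\tfrac32(n-1)$, matching $\max\{\tfrac32, 2 - \alpha/N\}(n-1)$. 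If $\alpha/N < 1/2$, then using $N \leq n-1$ I compute
\[
    \bigl(2 - \tfrac{\alpha}{N}\bigr)(n-1) - \tfrac{3}{2}(n-1) = \tfrac{N-2\alpha}{2N}(n-1) \;\geq\; \tfrac{N-2\alpha}{2} = \tfrac{N}{2}-\alpha,
\]
so the total length is at most $(2 - \alpha/N)(n-1)$, again matching the claimed bound.

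\textbf{The empty case and the obstacle.} When $V(H) = I_N = \emptyset$, I simply take $X = Y = \emptyset$ and apply Proposition~15 alone: the flip sequence has length at most $\tfrac32|I_R| + |I_=| - 1 \leq \tfrac32(|I_R| + |I_=|) - 1 = \tfrac32(n-1) - 1 \leq \tfrac32(n-1)$. Since Propositions~15 and~16 do all the structural work, there is no real obstacle at this stage; the only point that requires a bit of care is to verify that the two boundary-flip sequences and the middle $|Y|$-flip sequence are \emph{compatible}, i.e., that the intermediate trees $T_{I_R \cup X}$ and $T'_{I_R \cup X}$ match up. This is immediate from the definitions and from the fact that Proposition~16's flip sequence goes precisely between these two trees.
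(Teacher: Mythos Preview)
Your proof is correct and follows essentially the same approach as the paper: choose a largest acyclic subset $Y$, set $X = I_N - Y$, concatenate the two boundary-flip sequences from Proposition~15 with the middle sequence from Proposition~16, and finish with straightforward algebra. The only cosmetic difference is in the final step, where the paper bounds $\tfrac32(|I_R|+|I_=|) + (2 - \tfrac{\las(H)}{|V(H)|})|I_N|$ directly by the maximum of the two coefficients times $n-1$, whereas you split into the cases $\alpha/N \ge \tfrac12$ and $\alpha/N < \tfrac12$; both arguments are equivalent.
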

\begin{proof}
    First assume that $H$ is non-empty.
    Let $Y \sse I_N = V(H)$ be an acyclic subset of $H$ with $|Y| = \las(H)$.
    Let $\prec$ be a topological ordering of $H[Y]$.
    Denoting $X = I_N - Y$, our flip sequence $F$ from  $T $ to $T'$ is composed of three parts:
    \begin{description}
        \item[$F_1$:]
            $T \to \cdots \to T_{I_R \cup X}$ replaces (in any order) each non-\eshort{} edge $e_i \in T$ with $g_i \in I_R \cup X$ by the boundary edge $p_ip_{i+1}$.
        
        \item[$F_2$:]
            $T_{I_R \cup X} \to \cdots \to T'_{I_R \cup X}$ replaces in order according to $\prec$ each edge $e_i \in T$ with $g_i \in Y$ by the edge $e'_i \in T'$.
        
        \item[$F_3$:]
            $T'_{I_R \cup X} \to \cdots \to T'$ replaces (in any order) each boundary edge $p_ip_{i+1}$ with $g_i \in I_R \cup X$ and $p_ip_{i+1} \notin T'$ by the non-\eshort{} edge $e'_i \in T'$.
    \end{description}
    
    By \cref{prop:flip-to-boundary}, the sequences $F_1$ and $F_3$ are valid and have a total length of $|F_1|+|F_3|=\frac32 |I_R| + |I_=| + 2|X| - 1$.
    \cref{prop:flip-NN} ensures that $F_2$ is valid and has length $|Y|=|I_N| - |X|$.
    With $|Y| = \las(H)$ and $I_N=V(H)$ we conclude that
    \begin{align*}
        \dist(T,T') &\leq |F_1|+|F_2|+|F_3|\leq \frac32|I_R| + |I_=| + |I_N| + |X| = \frac32|I_R| + |I_=| + 2|I_N| - |Y| \\
        &\leq \frac32(|I_R|+|I_=|) + \left(2-\frac{|Y|}{|I_N|}\right)|I_N| = \frac32(|I_R|+|I_=|) + \left(2-\frac{\las(H)}{|V(H)|}\right)|I_N|\\
        &\leq \max\left\{\frac32,2-\frac{\las(H)}{|V(H)|}\right\} (|I_R|+|I_=|+|I_N|) \leq \max\left\{\frac32,2-\frac{\las(H)}{|V(H)|}\right\} (n-1).
    \end{align*}
    If $H$ is empty, then $\pairs_N = \emptyset$, and hence, $|I_N| = |X| = 0$.
    Then the above with only $F_1$ and $F_3$ gives $\dist(T,T') \leq \frac32|I_R| + |I_=| \leq \frac32(n-1)$.
\end{proof}

\subsection{Lower Bound}
\label{subsec_lower_alpha}

In this section, we show that a single example of a pair $T,T'$ of trees gives rise to a lower bound for $\diam(\mathcal{F}_n)$ for all $n$ through properties of the conflict graph $H$ of $T,T'$.
To be precise, we show the following statement, which corresponds to \cref{thm:alpha}\cref{item:alpha-lower-bound}.

\begin{restatable}{theorem}{DELTA}  
    \label{thm_Delta}
    Let $T,T'$ be non-crossing trees on linearly ordered points $p_1,\ldots,p_n$ with corresponding conflict graph $H = H(T,T')$.
    If $V(H)$ is non-empty, then there is a constant $\constantC$ depending only on $T$ and~$T'$ such that for all $\largeN\geq 1$ we have $\diam(\mathcal{F}_{\largeN}) \geq (2-\frac{\las(H)}{|V(H)|})\largeN-\constantC$.
\end{restatable}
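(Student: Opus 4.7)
My plan is to amplify the given example $(T, T')$ via a \emph{blowup} construction: for each $k \geq 1$, I would construct a pair $(T_k, T'_k)$ of non-crossing spanning trees on roughly $kn$ points with $\dist(T_k, T'_k) \geq k\bigl(2|V(H)| - \las(H)\bigr) - O(1)$. Setting $k = \lfloor \largeN / n \rfloor$ and padding the remaining points with boundary filler edges (at an $O(1)$ additive cost absorbed into $\constantC$) then yields $\diam(\mathcal{F}_{\largeN}) \geq \bigl(2 - \tfrac{\las(H)}{|V(H)|}\bigr)\largeN - \constantC$, as required.

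The $k$-blowup replaces each point $p_i$ by a cluster of $k$ consecutive, tightly spaced points $p_i^{(1)}, \ldots, p_i^{(k)}$ in convex position. Each edge $e = p_ip_j$ of $T$ is replaced by a non-crossing \emph{bundle} of $k$ nested parallel edges joining matching copies of the endpoints, and analogously for $T'$, adding short cluster-internal edges as needed so that $T_k$ and $T'_k$ are genuine non-crossing spanning trees on the blown-up point set. The key structural property is that the conflict graph of $(T_k, T'_k)$, restricted to the NN-pairs, consists of $k|V(H)|$ vertices organized into $|V(H)|$ bundles of size $k$, with all inter-bundle conflict edges inherited from $H$ and no intra-bundle conflicts. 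In particular, the largest acyclic subset of this blown-up conflict graph equals $k \las(H)$, since any acyclic subset projects to an acyclic subset of $H$, and $k$ copies of the best acyclic subset of $H$ are achievable.

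The crux of the proof is the lower bound $\dist(T_k, T'_k) \geq k(2|V(H)| - \las(H)) - O(1)$ against an arbitrary flip sequence $T_k = S_0 \to \cdots \to S_m = T'_k$. Every flip sequence must remove all $k|V(H)|$ NN-edges of $T_k$ and install the $k|V(H)|$ NN-edges of $T'_k$, accounting for at least $k|V(H)|$ flips. Beyond this count, I would argue that at least $k(|V(H)| - \las(H)) - O(1)$ \emph{transient} flips are required. To this end, I would associate to the flip sequence a pairing between edges of $T_k$ and $T'_k$ tracking which removed edge is, in effect, replaced by which installed edge; I would then show that the pairs which can be installed by a single \emph{direct} flip must form an acyclic subset of the blown-up conflict graph, so at most $k \las(H) + O(1)$ pairs are direct. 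Every remaining pair incurs at least one extra flip to first transit through an intermediate edge (typically a boundary edge), yielding a total of at least $k|V(H)| + k(|V(H)| - \las(H)) - O(1)$ flips.

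The main obstacle is handling adversarial flip sequences, which need not respect bundles or the canonical pairing: the adversary might pair an edge of $T_k$ in bundle $i$ with an edge of $T'_k$ in a distinct bundle $j$, reuse intermediate edges across bundles, or flip a single edge many times. The $k$-blowup is precisely what lets me amortize this behavior: since bundle $i$ contains $k$ nested parallel copies, any cross-bundle or non-canonical flip disrupts at most $O(1)$ copies of each affected bundle while leaving the remaining $k - O(1)$ copies still constrained by the conflict-graph ordering. Formalizing this amortization, most naturally via a potential function counting unmatched copies weighted by their residual conflict obligations in $H$, shows that any asymptotic savings from non-canonical flips are $O(1)$ per bundle, hence $O(|V(H)|) = O(1)$ overall, which is absorbed into $\constantC$.
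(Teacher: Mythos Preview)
Your high-level strategy---amplify the example by a blowup and show any flip sequence on the blowup must be long---matches the paper. But the proposal has two genuine gaps where the paper does real work.

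\textbf{The blowup is the wrong one.} You replace each point by a cluster and each edge by $k$ nested parallel copies. With that construction, a type-2 or type-3 conflict $\overrightarrow{g_ig_j}$ in $H$ (where $e_i$ and $e'_j$ do not cross, but one covers the other) yields $k$ copies of $e_i$ and $k$ copies of $e'_j$ that \emph{still do not cross}. So nothing prevents a copy of $e_i$ and a copy of $e'_j$ from coexisting in an intermediate tree, and your claim that ``direct'' pairs must form an acyclic subset breaks down for non-type-1 conflicts. The paper's blowup is different and designed exactly to fix this: it inserts $k$ new points \emph{into the gap} $g_i$ of each near-near pair and attaches them by a fan $\Lambda(e_i)$ to the far endpoint of $e_i$. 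The point of Observation~4 in the paper is that this turns every conflict edge of $H$---of all three types---into a genuine geometric crossing between every edge of $\Lambda(e_i)$ and every edge of $\Lambda(e'_j)$. That is what lets the acyclicity argument (Lemma~17) go through for arbitrary flip sequences.

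\textbf{The lower-bound accounting is missing.} You correctly flag that an adversarial flip sequence may pair edges across bundles and reuse intermediate edges, and you propose to handle this by ``a potential function counting unmatched copies weighted by their residual conflict obligations.'' But this is precisely the crux, and no such function is given. The paper's solution is not a potential argument at all: for each indirect pair $(e,e')$ it looks at the intermediate tree $T_{\gone(e)}$ (the first moment all of $\Lambda(e)$ is gone) and extracts a connected-up-to-$O(n)$-components subgraph $G(e,e')$ spanning the $k$ inserted points $V(e)$ and containing no edges of $\blowup{T}\cup\blowup{T'}$ (Lemma~18). This forces roughly $k$ intermediate edges per indirect pair, and a union bound on components (Lemma~19) shows these cannot be shared enough to help. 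None of this is captured by your sketch; the ``one extra flip per indirect pair'' count you give would only yield the right bound if intermediate edges could not be reused at all, which you yourself note is false.
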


To this end, let us consider a pair of trees $T$ and $T'$ on $n$ points with conflict graph $H$. 
We will construct a sequence of tree pairs $(\blowup{T},\blowup{T'})_{k\in \mathbb N}$ on $n_k$ points each such that
\[
    \dist(\blowup{T},\blowup{T'}) \geq \left(2-\frac{\las(H)}{|V(H)|}\right)n_k-c. 
\]
We now explain how to construct these trees.
We consider the edge pairs $\pairs$ of $T$ and $T'$ as before.
Recall that $\pairs_N$ is the set of \enormal{}-\enormal{} pairs $(e,e')$ of $T$ and $T'$ with $e\neq e'$, and that this set is assumed to be nonempty.
Let $\ennr$ and $\ennb$ be the sets of edges of $T$ and $T'$, respectively, appearing in pairs in $\pairs_N$. 
For $k\geq 1$, we define the $k$-\emph{blowups} \blowup{T} and \blowup{T'} of $T$ and $T'$ by doing the following for each $(e,e')\in \pairs_N$ (for an illustration consider \cref{fig:blowup}).
\begin{itemize}
    \item Insert a set $V(e)$ of $k$ new points in the gap $g$ associated to $(e,e')$.
    \item In $\blowup{T}$, add an edge from each $v\in V(e)$ to the endpoint of $e$ that is not adjacent to $g$, and similarly for $T'$ and $e'$.
    \item Let $\Lambda(e)$ denote the set of the $k$ edges added to $\blowup{T}$, and let $\Lambda(e')$ denote the set of the $k$ edges added to $\blowup{T'}$.
\end{itemize}
This way, for each edge $e$ appearing in a pair in $\pairs_N$ we add next to $e$ a fan $\Lambda(e)$ of $k$ edges ending at leaves.
By construction, the blowups have $n_k := n+k|\pairs_N|$ points and $n_k-1$ edges.

\begin{figure}[htb]
    \centering
    \begin{subfigure}{.5\textwidth}
        \centering
        \includegraphics[page=3]{flipping-example}
        \caption{}
    \end{subfigure}\hfil
    \begin{subfigure}{.5\textwidth}
        \centering
        \includegraphics[page=4]{flipping-example}
        \caption{}
    \end{subfigure}
    \caption{A pair of trees $(T,T')$ and their 2-blowups $(\blowup[2]{T},\blowup[2]{T'})$.}
    \label{fig:blowup}
\end{figure}

Here is the crucial connection between $k$-blowups and the conflict graph $H$ of $T,T'$.

\begin{observation}
    \label{obs:crossing-in-blowups}
    If $\overrightarrow{g_ig_j}$ is a directed edge in the conflict graph $H$ of $T,T'$, then in the $k$-blowups $\blowup{T},\blowup{T'}$ every edge in $\Lambda(e_i)$ crosses every edge in $\Lambda(e'_j)$.
\end{observation}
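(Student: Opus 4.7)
The plan is to verify this observation by a direct case analysis over the three types of directed edges in the conflict graph $H$, combined with sub-cases for the orientation of $e_i$ and $e'_j$ at their respective gaps. The only geometric fact used is elementary: two chords in the convex drawing of the blowups cross if and only if their four endpoints interleave on the convex hull (equivalently, on the cyclic spine of the \representation). An edge in $\Lambda(e_i)$ has endpoints $(v,w_i)$ with $v\in V(e_i)\subseteq g_i$ strictly between $p_i$ and $p_{i+1}$ (no original point sits between $v$ and either of $p_i,p_{i+1}$), and with $w_i$ an original point distinct from $p_i,p_{i+1}$; likewise for $\Lambda(e'_j)$. Thus every blowup edge behaves like a small perturbation of the original chord, and the whole task reduces to checking that $v,v',w_i,w'_j$ interleave cyclically in each relevant configuration.

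To carry this out, I would write $e_i=p_ap_b$ with $a<b$ and $e'_j=p_cp_d$ with $c<d$. Since $e_i$ is \enormal{} at $g_i$, exactly one of $a=i$ (``right-going'', so $w_i=p_b$) or $b=i+1$ (``left-going'', so $w_i=p_a$) holds, and analogously for $e'_j$. For \textbf{Type~1}, WLOG $a<c<b<d$; in each of the four orientation sub-cases the positions of $v\in g_i$ and $v'\in g_j$ inherit the order of $p_i$ and $p_j$ (or of $p_{i+1}$ and $p_{j+1}$), and a quick inspection verifies $v,v',w_i,w'_j$ interleave. For \textbf{Type~2}, the chain $c\leq a\leq j<j+1\leq b\leq d$, combined with \enormal{} at $g_j$, forces either $c=j$ and $a=j$, or $d=j+1$ and $b=j+1$; together with \enormal{} at $g_i$, this eliminates the two orientation sub-cases that would collapse to $i=j$, and the remaining two sub-cases again yield interleaving. \textbf{Type~3} is the mirror of Type~2 under swapping $T\leftrightarrow T'$.

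The plan therefore has no conceptual obstacle; the only work is bookkeeping. The mild subtlety is ruling out the degenerate orientation sub-cases in Types~2 and~3 that would require $g_i=g_j$; these are excluded since $H$ has no self-loops (for $g_i\in V(H)$, the pair $(e_i,e'_i)$ has $e_i\neq e'_i$, while all three type-conditions applied with $i=j$ would force $e_i=e'_j=e'_i$). The case analysis itself is mechanical and can be bundled into a single paragraph in the final proof.
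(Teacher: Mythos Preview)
Your proposal is correct and follows the same route as the paper: a case split over the three conflict-edge types, using the interleaving criterion for crossing chords. The paper's own proof is much terser---Type~1 is dispatched with ``hence so do their fans'' and Types~2 and~3 are deferred to a figure---whereas you spell out the orientation sub-cases and verify interleaving explicitly; the added rigor is welcome.

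One small correction: your parenthetical justification that $H$ has no self-loops (``all three type-conditions applied with $i=j$ would force $e_i=e'_j=e'_i$'') is wrong. Type~1 with $i=j$ merely says that $e_i$ crosses $e'_i$, which happens exactly for the ``crossing'' gaps in the set $C$ of \cref{subsec_upper} and certainly does not force $e_i=e'_i$; Types~2 and~3 with $i=j$ likewise do not force equality. Indeed, for such a self-loop the conclusion of the observation would fail: the fans $\Lambda(e_i)$ and $\Lambda(e'_i)$ share the leaf set $V(e_i)$, so an edge of each emanating from the same new vertex cannot cross. This does not damage your proof plan, however, because the observation is only ever invoked (in \cref{lem_direct_pairs}) for distinct pairs, i.e., with $i\neq j$, and the phrase ``a directed edge from $g_i$ to $g_j$'' is conventionally read with $i\neq j$. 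Simply assume $i\neq j$ at the outset and drop the flawed parenthetical; the remainder of your case analysis then goes through exactly as written.
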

\begin{proof}
    If $\overrightarrow{g_ig_j}$ is of type~1, then already $e_i$ and $e'_j$ cross and hence so do their fans.
    For $\overrightarrow{g_ig_j}$ of type~2 and type~3, it follows from the fact that the points $V(e_i)$ and $V(e'_j)$ lie in the gaps $g_i$ and $g_j$ of the original point set, respectively; see~\cref{fig:crossing-in-blowup} for illustrations.
\end{proof}

\begin{figure}[ht]
    \centering
    \begin{subfigure}{.5\textwidth}
	\centering
	\includegraphics[page=1]{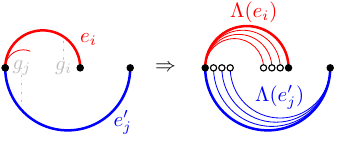}
	\caption{$\overrightarrow{g_i g_j}$ of type~2}
        \label{fig:crossing-in-blowup-type2}
    \end{subfigure}\hfil
    \begin{subfigure}{.5\textwidth}
        \centering
	\includegraphics[page=2]{crossing-in-blowup}
        \caption{$\overrightarrow{g_i g_j}$ of type~3}
        \label{fig:crossing-in-blowup-type3}
    \end{subfigure}
    \caption{A directed edge $\overrightarrow{g_ig_j}$ of type~2 (a) or type~3 (b), makes $\Lambda(e_i)$ crossing $\Lambda(e'_j)$.}
    \label{fig:crossing-in-blowup}
\end{figure}

Now, we consider any flip sequence~$F$ from $\blowup{T}$ to $\blowup{T'}$ and denote the intermediate trees by $\blowup{T} = \blowupSeq[0], \blowupSeq[1], \dots, \blowupSeq[\ell] = \blowup{T'}$.
For each $(e,e')\in \pairs_N$, let $\gone(e)$ be the smallest index $a$ such that $\blowupSeq[a]$ contains no edge in $\Lambda(e)$.
Since \blowup{T'} contains no edge in $\Lambda(e)$, $\gone(e)$ is well-defined.
Evidently, in $F$ there is a flip $\blowupSeq[\gone(e)-1] \to \blowupSeq[\gone(e)]$ that replaces the last remaining edge in $\Lambda(e)$ by an edge not in $\Lambda(e)$.
We say that the pair $(e,e')$ is \emph{direct} if there is an $a\leq \gone(e)$ such that $\blowupSeq[a]$ contains an edge in $\Lambda(e')$, and \emph{indirect} otherwise.
These terms are motivated by the fact that in the former case, $F$ might contain a direct flip from an edge in $\Lambda(e)$ to an edge in in $\Lambda(e')$, while in the latter case this is impossible. 

\begin{lemma}
    \label{lem_direct_pairs}
    Let $(e_i,e'_i)\neq (e_j,e'_j)$ be direct pairs in $\pairs_N$.
    If the conflict graph $H$ of $T,T'$ contains the directed edge $\overrightarrow{g_ig_j}$, then $\gone(e_i) < \gone(e_j)$.
\end{lemma}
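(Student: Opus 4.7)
The plan is to compare the first time an edge of $\Lambda(e'_j)$ appears in the flip sequence against both $\gone(e_i)$ and $\gone(e_j)$. Let $a_j$ be the smallest index such that $\blowupSeq[a_j]$ contains an edge of $\Lambda(e'_j)$. Since $(e_j, e'_j)$ is direct, $a_j$ exists and satisfies $a_j \le \gone(e_j)$. Moreover, $a_j \ge 1$ because $\blowupSeq[0] = \blowup{T}$ contains no edge of $\Lambda(e'_j)$.

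The first step is to deduce $\gone(e_i) \le a_j$ using the crossing behaviour of blowups. Since $\blowupSeq[a_j]$ is a non-crossing tree that contains some $f \in \Lambda(e'_j)$, and since $\overrightarrow{g_i g_j} \in E(H)$ forces every edge of $\Lambda(e_i)$ to cross $f$ by Observation~\ref{obs:crossing-in-blowups}, the tree $\blowupSeq[a_j]$ has no edge of $\Lambda(e_i)$. By definition of $\gone(e_i)$, this gives $\gone(e_i) \le a_j$.

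The second step is to upgrade this to a strict inequality $\gone(e_i) < \gone(e_j)$. If $a_j < \gone(e_j)$ this is immediate from the previous step. Otherwise $a_j = \gone(e_j)$, and here the plan is to use that a single flip removes exactly one edge. Indeed, the flip at step $a_j$ removes the last remaining edge of $\Lambda(e_j)$ (by the definition of $\gone(e_j)$) and, by the minimality of $a_j$, introduces an edge of $\Lambda(e'_j)$. If we had $\gone(e_i) = a_j$, the same flip would also have to remove the last remaining edge of $\Lambda(e_i)$. But the fans $\Lambda(e_i)$ and $\Lambda(e_j)$ are disjoint, since $i \neq j$ and the two fans use disjoint sets of newly inserted vertices (one placed in gap $g_i$, the other in $g_j$). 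Since a flip removes only one edge, this is a contradiction, so $\gone(e_i) < a_j = \gone(e_j)$.

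I expect the main obstacle to be this last part — turning the non-strict inequality into a strict one exactly at the boundary case $a_j = \gone(e_j)$. The key observation is really a pigeonhole on the single edge a flip removes, together with the disjointness of $\Lambda(e_i)$ and $\Lambda(e_j)$ coming from distinct gaps. Everything else is a bookkeeping argument built on Observation~\ref{obs:crossing-in-blowups} and the definitions of directness and $\gone$.
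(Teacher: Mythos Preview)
Your proof is correct and uses essentially the same ingredients as the paper's: Observation~\ref{obs:crossing-in-blowups} to force $\Lambda(e_i)$ out once an edge of $\Lambda(e'_j)$ is present, and the disjointness of $\Lambda(e_i)$ and $\Lambda(e_j)$ to upgrade the inequality to a strict one. The only difference is cosmetic: the paper argues the contrapositive (from $\gone(e_i)\geq\gone(e_j)$ it first gets $\gone(e_i)>\gone(e_j)$ via disjointness, then derives non-crossing fan edges), whereas you argue directly and split into the cases $a_j<\gone(e_j)$ and $a_j=\gone(e_j)$.
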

\begin{proof} 
    We show that if $\gone(e_i)\geq \gone(e_j)$, then there is no edge from $g_i$ to $g_j$ in~$H$.

    Since $\gone(e_i)\geq \gone(e_j)$, $\blowupSeq[\gone(e_j)-1]$ contains an edge of $\Lambda(e_i)$.
    The edge that is flipped away from $\blowupSeq[\gone(e_j)-1]$ in the flip $\blowupSeq[\gone(e_j)-1] \to \blowupSeq[\gone(e_j)]$ is in $\Lambda(e_j)$, so since $\Lambda(e_j)\cap \Lambda(e_i)=\emptyset$ by construction, $\blowupSeq[\gone(e_j)]$ also contains an edge in $\Lambda(e_i)$.
    Thus, $\gone(e_i)> \gone(e_j)$.

    Now choose any $a\leq\gone(e_j)$ such that $\blowupSeq[a]$ contains an edge in $f' \in \Lambda(e'_j)$.
    Since $a \leq \gone(e_j) < \gone(e_i)$, $\blowupSeq[a]$ contains at least one edge of $f \in \Lambda(e_i)$.
    But $\blowupSeq[a]$ is non-crossing, so we have found edges $f\in \Lambda(e_i)$ and $f'\in \Lambda(e'_j)$ that do not cross.
    By \cref{obs:crossing-in-blowups} it follows that $\overrightarrow{g_ig_j}$ is not an edge in $H$.
\end{proof}

Let $\delta$ and $\bar \delta$ denote the number of direct and indirect pairs in $\pairs_N$ of the original pair $T,T'$ of trees induced by the flip sequence $F$ between the blowups $\blowup{T},\blowup{T'}$, respectively.
Clearly, $\delta+\bar \delta=|\pairs_N|$. 
\cref{{lem_direct_pairs}} implies the following crucial property.

\begin{corollary}
    \label{cor_direct_acyclic}
    The conflict graph $H$ has an acyclic subset of size $\delta$, i.e., $\las(H)\geq \delta$. 
\end{corollary}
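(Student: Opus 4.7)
The plan is to show that the set of gaps corresponding to direct pairs induces an acyclic subgraph of $H$, which immediately yields $\las(H) \geq \delta$.

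First I would let $D \subseteq V(H) = I_N$ be the set of gaps $g_i$ whose corresponding pair $(e_i, e_i')$ is direct with respect to the fixed flip sequence $F$. By definition, $|D| = \delta$, so it suffices to prove that the induced subgraph $H[D]$ is acyclic.

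For this, I would invoke \cref{lem_direct_pairs} directly. That lemma states that whenever $(e_i,e_i')$ and $(e_j,e_j')$ are both direct pairs and $\overrightarrow{g_i g_j}$ is an edge of $H$, then $\gone(e_i) < \gone(e_j)$. In other words, the function $g_i \mapsto \gone(e_i)$ provides a real-valued map on $D$ that strictly increases along every directed edge of $H[D]$. Any such potential function certifies that $H[D]$ is a DAG: a directed cycle $g_{i_1} \to g_{i_2} \to \cdots \to g_{i_r} \to g_{i_1}$ would force the strict chain $\gone(e_{i_1}) < \gone(e_{i_2}) < \cdots < \gone(e_{i_r}) < \gone(e_{i_1})$, which is impossible.

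Therefore $H[D]$ is acyclic, so $D$ is an acyclic subset of $H$ of size $\delta$, yielding $\las(H) \geq \delta$. There is no real obstacle here; the corollary is essentially a bookkeeping consequence of \cref{lem_direct_pairs} once one observes that a strictly increasing integer-valued function along directed edges forbids cycles.
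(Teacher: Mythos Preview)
Your proof is correct and follows essentially the same approach as the paper: both use $\gone$ as a strictly increasing potential along directed edges of $H$ restricted to direct pairs, which immediately rules out cycles. The paper phrases this by sorting the direct pairs by their $\gone$ values and noting that all edges point forward, but the content is identical.
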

\begin{proof}
    Let $(e_{i_1},e'_{i_1}), \ldots, (e_{i_\delta},e'_{i_\delta})$ be the direct pairs, sorted so that $\gone(e_{i_1})\leq \dots \leq \gone(e_{i_\delta})$.
    By \cref{lem_direct_pairs}, every edge of $H$ between the gaps of two direct pairs points forward in that ordering.
    Hence, the subgraph of $H$ corresponding to (the gaps of) direct pairs is acyclic.
\end{proof}

Now, we aim to show a lower bound on the flip sequence in terms of the largest acyclic subset in $H$.
Intuitively speaking, we show that for each indirect pair $(e,e')$, the process of removing the $k$ edges in $\Lambda(e)$ and adding the $k$ edges in $\Lambda(e')$ in the flip sequence $F$ must involve introducing almost $k$ ``intermediate'' edges that are neither in $\blowup{T}$ nor in $\blowup{T'}$, which then increases the length of $F$.
\cref{lem_green_graph} below shows that a single indirect pair gives rise to many intermediate edges, i.e., costs additional flips.
\cref{lem_thickened_flip_length} further below shows that costs for different indirect pairs add up.
That is, we cannot ``reuse'' intermediate edges to reduce the cost in any effective way.

\begin{lemma}
    \label{lem_green_graph}
    Let $(e,e')\in \pairs_N$ be an indirect pair.
    Then there is a subgraph $G(e,e')$ of the tree $\blowupSeq[\gone(e)]$ that contains $V(e)$, does not contain any edges in $\blowup{T}$ or $\blowup{T'}$, and has at most $2n-1$ connected components.
\end{lemma}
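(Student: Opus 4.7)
The plan is to define $G(e,e')$ as the union of all connected components of the forest $F := \blowupSeq[\gone(e)] \setminus (\blowup{T}\cup \blowup{T'})$ that contain at least one vertex of $V(e)$. By construction, $V(e)\subseteq V(G(e,e'))$ and $G(e,e')$ contains no edge of $\blowup{T}\cup\blowup{T'}$; only the bound on the number of connected components then remains to be established.

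To count these components, I will use the Steiner tree $T_e\subseteq\blowupSeq[\gone(e)]$ of $V(e)$, i.e., the unique minimal subtree of $\blowupSeq[\gone(e)]$ spanning $V(e)$. Since $\blowupSeq[\gone(e)]$ is a tree, two vertices of $V(e)$ lie in the same component of $G(e,e')$ exactly when their unique path in $\blowupSeq[\gone(e)]$, which must run entirely inside $T_e$, uses no edge of $\blowup{T}\cup\blowup{T'}$. Hence the number of components of $G(e,e')$ is at most the number of components obtained by deleting $E(T_e)\cap(\blowup{T}\cup\blowup{T'})$ from $T_e$, which equals $1+|E(T_e)\cap(\blowup{T}\cup\blowup{T'})|$. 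It will therefore suffice to establish the bound
\[
    |E(T_e)\cap(\blowup{T}\cup\blowup{T'})|\le 2(n-1).
\]

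The edges of $\blowup{T}\cup\blowup{T'}$ come in three flavours: original edges from $T\cup T'$ (at most $2(n-1)$ in total), the fan edges $\Lambda(e_j)$ added to $\blowup{T}$, and the fan edges $\Lambda(e_j')$ added to $\blowup{T'}$ for pairs $(e_j,e_j')\in\pairs_N$. The fans $\Lambda(e)$ and $\Lambda(e')$ of our own pair contain no edge of $\blowupSeq[\gone(e)]$ at all: the first by the very definition of $\gone(e)$, and the second because $(e,e')$ is assumed to be indirect. Among potential contributions to $E(T_e)\cap(\blowup{T}\cup\blowup{T'})$, the only edges to worry about beyond those of $T\cup T'$ are therefore fan edges from pairs $j\neq$ our own pair.

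The main obstacle is showing that these extra fan edges do not inflate the count past $2(n-1)$. I plan to exploit the non-crossing, convex-position structure of $\blowupSeq[\gone(e)]$: if a fan edge $f=vu_j\in\Lambda(e_j)$ with $v\in V(e_j)$ lies in $T_e$, then the way $T_e$ enters and leaves the blow-up block $V(e_j)$ through $f$ is heavily constrained by the remaining fan edges of $\Lambda(e_j)$ present in $\blowupSeq[\gone(e)]$ and by the linear ordering of points around the gap $g_j$. A charging argument should pair each such fan edge appearing in $T_e$ with either an original vertex of $T\cup T'$ sitting in $T_e$ or with an original edge of $T\cup T'$ already counted, so that the bound $2(n-1)$ persists. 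I expect this combinatorial bookkeeping, resting on a careful case analysis of how chords $vu_j$ can coexist in a non-crossing subtree, to be the most delicate part of the argument; once it is in place, the lemma follows immediately.
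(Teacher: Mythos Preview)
Your Steiner-tree setup is sound and, up to the choice of the connected subgraph, is essentially the same reduction the paper makes: both approaches boil the component count down to bounding $|E(T_e)\cap(\blowup{T}\cup\blowup{T'})|$ by $2(n-1)$.  The real issue is that you leave this bound to an unspecified ``charging argument'', and as you suspect, this is the whole content of the lemma.  The generic bookkeeping you sketch (pairing each fan edge with an original vertex or edge of $T\cup T'$) does not obviously close, because a single original vertex could in principle be the apex of many fan edges in $T_e$.

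The missing observation is cleaner than a charging scheme: for each original edge $f\in T\cup T'$, the edges in the ``bundle'' $\{f\}\cup\Lambda(f)$ all emanate from the same endpoint $u$ and have their other endpoints clustered inside a single gap (disjoint from $V(e)$).  Consequently $V(e)$ lies entirely on one side of every chord in this bundle, and on the \emph{same} outer side for all of them.  If two bundle edges $uw_1,uw_2$ were both in $T_e$, then removing $uw_1$ (resp.\ $uw_2$) from $T_e$ would leave a component containing $w_1$ (resp.\ $w_2$) that must meet $V(e)$; but the $w_1$-component is confined to the sliver between the two chords together with the side of $uw_1$ opposite to $V(e)$, and symmetrically for $w_2$, so one of them cannot meet $V(e)$.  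Hence at most one edge per bundle lies in $T_e$, and since there are $n-1$ bundles in $\blowup{T}$ and $n-1$ in $\blowup{T'}$, the bound $2(n-1)$ follows immediately.  This is exactly the idea the paper uses (phrased there as an iterative pruning that keeps a connected subgraph containing $V(e)$); once you add this bundle argument, your proof is complete.
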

\begin{proof}
    Let $a=\gone(e)$.
    Because $(e,e')$ is indirect, $\blowupSeq[a] \cap \Lambda(e)$ and $\blowupSeq[a] \cap \Lambda(e')$ are empty.
    We first find a subgraph $E$ of $\blowupSeq[a]$ by doing the following for every edge $f \in \blowupSeq[a] \cap (\blowup{T} \cup \blowup{T'})$:
    \begin{itemize}
        \item Since $f \notin \Lambda(e)\cup \Lambda(e')$, all the points of $V(e)$ lie on the same side of $f$.
        \item Delete from $\blowupSeq[a]$ all the points (and their incident edges) that are on the other side (without $V(e)$) of $f$, keeping the endpoints of $f$.
    \end{itemize}
    Note that these deletions do not disconnect $\blowupSeq[a]$, so the remaining subgraph $E \sse \blowupSeq[a]$ is still connected.
    Further note that for every $f \in T \cup T'$ and its fan $\Lambda(f)$ in $\blowup{T}$ or $\blowup{T'}$, no two edges of $f \cup \Lambda(f)$ lie in $E$.
    Indeed, otherwise $V(e)$ lies on the same side of both these edges and one would be deleted when considering the other.
    Consequently, $E$ has at most $n-1$ edges of $\blowup{T}$ and at most $n-1$ edges of $\blowup{T'}$, i.e., $|E \cap (\blowup{T} \cup \blowup{T'})| \leq 2n-2$.
    Then $G(e,e') = E - (\blowup{T} \cup \blowup{T'})$ is the desired subgraph of $\blowupSeq[a]$.
\end{proof}

For the next lemma, recall that $F$ is an arbitrary but fixed flip sequence from $\blowup{T}$ to $\blowup{T'}$.

\begin{lemma}
    \label{lem_thickened_flip_length}
    The flip sequence $F$ from $\blowup{T}$ to $\blowup{T'}$ has length at least $(k-2n)(\bar \delta + |\pairs_N|)$.
\end{lemma}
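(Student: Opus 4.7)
The plan is to lower-bound $|F|$ by counting distinct edges that must be inserted during the flip sequence: since each flip inserts exactly one new edge, $|F|=\sum_f a(f)$, where $a(f)$ denotes the number of times $f$ is inserted, and any edge $f\notin \blowup{T}$ appearing in some intermediate tree has $a(f)\geq 1$. I will exhibit two edge-disjoint classes of such ``forced insertions'' whose combined size is at least $(k-2n)(\bar\delta+|\pairs_N|)$.

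The first class consists of the fan edges: for each $(e,e')\in \pairs_N$, the $k$ edges of $\Lambda(e')$ all lie in $\blowup{T'}-\blowup{T}$, and across distinct pairs these fans are pairwise edge-disjoint by construction. They therefore account for $k|\pairs_N|$ distinct forced insertions. The second class consists of intermediate edges: for each indirect pair $(e,e')$, Lemma~\ref{lem_green_graph} supplies a subgraph $G(e,e')\subseteq \blowupSeq[\gone(e)]$ with at least $k-2n+1$ edges that lie in neither $\blowup{T}$ nor $\blowup{T'}$, hence distinct from all fan edges.

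The main obstacle, in my view, is to argue that the edge sets $E(G(e_1,e'_1))$ and $E(G(e_2,e'_2))$ are disjoint for any two distinct indirect pairs, so that summing over indirect pairs yields $\bar\delta(k-2n+1)$ additional insertions without double-counting. My plan is to exploit that $V(e_1)$ and $V(e_2)$ lie in different gaps of the original convex configuration: I aim to identify a fencing edge $h\in \blowup{T}\cup \blowup{T'}$ that both separates $V(e_1)$ from $V(e_2)$ on the circle and is present in each of $\blowupSeq[\gone(e_1)]$ and $\blowupSeq[\gone(e_2)]$. The pruning construction in Lemma~\ref{lem_green_graph} would then force every edge of $G(e_1,e'_1)$ to have both endpoints on $V(e_1)$'s side of $h$, and symmetrically for $G(e_2,e'_2)$, ruling out any shared edge. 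Producing such a persistent separator will require a case analysis on the near-edges $e_1,e'_1,e_2,e'_2$, on their cover relations in $T\cup T'$, and on which of them (or their blown-up fans) can still be present in the respective intermediate trees.

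Combining the two disjoint classes yields $|F|\geq k|\pairs_N|+\bar\delta(k-2n+1)$, from which the target bound $|F|\geq (k-2n)(\bar\delta+|\pairs_N|)$ follows via the elementary identity $k|\pairs_N|+\bar\delta(k-2n+1)-(k-2n)(\bar\delta+|\pairs_N|)=2n|\pairs_N|+\bar\delta\geq 0$.
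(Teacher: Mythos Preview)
Your two-class decomposition (fan edges plus intermediate edges from Lemma~\ref{lem_green_graph}) and your final arithmetic match the paper. The gap is in establishing that the edge sets of $G(e_1,e'_1)$ and $G(e_2,e'_2)$ are disjoint for distinct indirect pairs. Your plan is to exhibit a fencing edge $h\in\blowup{T}\cup\blowup{T'}$ that separates $V(e_1)$ from $V(e_2)$ and is present in both $\blowupSeq[\gone(e_1)]$ and $\blowupSeq[\gone(e_2)]$. But $F$ is an \emph{arbitrary} flip sequence, so you have no control over which edges of $\blowup{T}\cup\blowup{T'}$ survive to those two times. At time $\gone(e_1)$ all of $\Lambda(e_1)\cup\Lambda(e'_1)$ is already absent (by the definitions of $\gone$ and of ``indirect''), and nothing prevents $e_1,e'_1,e_2,e'_2$ and every other candidate separator from having been flipped away as well. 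A case analysis on the cover relations in $T\cup T'$ cannot force the adversarially chosen $F$ to retain any particular edge; the $G(e,e')$'s can genuinely overlap, so the disjointness you aim for need not hold.

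The paper sidesteps this entirely. Instead of arguing edge-disjointness, it takes the union $U=\bigcup_{(e,e')\text{ indirect}}G(e,e')$ and counts points against components. The \emph{vertex} sets $V(e)$ are pairwise disjoint by construction, so $U$ has at least $k\bar\delta$ points; and the number of components of a union is at most the sum over the pieces, hence at most $(2n-1)\bar\delta$. Since any graph on $v$ vertices with $c$ components has at least $v-c$ edges, $U$ contributes at least $(k-2n)\bar\delta$ edges outside $\blowup{T}\cup\blowup{T'}$, and adding the $k|\pairs_N|$ fan insertions gives the bound. The moral: vertex-disjointness of the $V(e)$'s comes for free and suffices; edge-disjointness of the $G(e,e')$'s is neither available nor needed.
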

\begin{proof}
    For each indirect pair $(e,e') \in \pairs_N$, let $G(e,e')$ be the corresponding subgraph of $\blowupSeq[\gone(e)]$ guaranteed by \cref{lem_green_graph}.
    Let $U$ be the union of all $G(e,e')$ over all indirect pairs $(e,e') \in \pairs_N$.
    By \cref{lem_green_graph}, $U$ has at most $(2n-1)\bar \delta$ connected components and at least $k\bar \delta$ points, because it contains the points $V(e)$ for all indirect pairs $(e,e')$, and $|V(e)|=k$.
    Thus, $U$ has at least $k\bar \delta - (2n-1)\bar \delta \geq (k-2n)\bar \delta$ edges, none of which is in $\blowup{T}$ or $\blowup{T'}$.
    
    Consequently, the flip sequence $F$ has at least $|U| \geq (k-2n)\bar \delta$ flips that introduce an edge of $U$, as well as $|\blowup{T'} \setminus \blowup{T}|$ additional flips that introduce an edge of $\blowup{T'} \setminus \blowup{T}$.
    For each $(e,e')\in \pairs_N$, there are $k$ edges in $\Lambda(e')$ that do not appear in $\blowup{T}$.
    Thus, $|\blowup{T'}\setminus \blowup{T}|\geq k|\pairs_N|$.
    Adding all together, $F$ has length at least $(k-2n)\bar \delta + k|\pairs_N| \geq (k-2n)(\bar \delta + |\pairs_N|)$.
\end{proof}

\begin{lemma}
    \label{lem_rho}
    For any fixed pair $T,T'$ of trees, there is a constant $\constantC$ depending only on $T$ and $T'$, such that for all $k\geq 1$ it holds that 
    \[
        \dist(\blowup{T},\blowup{T'}) \geq \left(2-\frac{\las(H)}{|V(H)|}\right)n_k - c.
    \]
\end{lemma}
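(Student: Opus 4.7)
My plan is to combine Lemma \ref{lem_thickened_flip_length} and Corollary \ref{cor_direct_acyclic}; together they essentially already yield the inequality, and the rest is algebraic bookkeeping plus a small case analysis to handle small values of $k$.

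I would start by fixing an arbitrary flip sequence $F$ from $\blowup{T}$ to $\blowup{T'}$ of minimum length, which induces a partition of $\pairs_N$ into $\delta$ direct pairs and $\bar\delta$ indirect pairs, with $\delta + \bar\delta = |\pairs_N|$. Corollary \ref{cor_direct_acyclic} bounds $\delta \leq \las(H)$, and hence $\bar\delta \geq |\pairs_N| - \las(H)$. Plugging this into Lemma \ref{lem_thickened_flip_length} and using $|V(H)| = |\pairs_N|$ gives
\[
    \dist(\blowup{T},\blowup{T'}) \;\geq\; (k-2n)\bigl(\bar\delta + |\pairs_N|\bigr) \;\geq\; (k-2n)\,|\pairs_N|\,\alpha,
\]
where $\alpha := 2 - \las(H)/|V(H)|$.

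Next, I would convert this to a bound in terms of $n_k$. Since $n_k = n + k|\pairs_N|$, I can substitute $k|\pairs_N| = n_k - n$ to rearrange the right-hand side as $\alpha\, n_k - \alpha n\bigl(1 + 2|\pairs_N|\bigr)$. This makes the target constant $c := \alpha n(1 + 2|\pairs_N|)$ readable off directly; it depends only on $T$ and $T'$ (since $\alpha$, $n$, and $|\pairs_N|$ do), as required.

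The only subtle point is that Lemma \ref{lem_thickened_flip_length}'s bound is only informative when $k > 2n$; otherwise its right-hand side is non-positive. For the regime $k \leq 2n$ I would handle the claim by hand: here $n_k \leq n(1 + 2|\pairs_N|)$, so $\alpha n_k - c \leq 0$ and the inequality follows trivially from $\dist(\blowup{T},\blowup{T'}) \geq 0$. I do not foresee any genuine obstacle: the lemma is a clean repackaging of the preceding two statements, with the constant $c$ absorbing both the $-2n$ error term in Lemma \ref{lem_thickened_flip_length} and the behavior for small~$k$.
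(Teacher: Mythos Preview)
Your proposal is correct and follows essentially the same approach as the paper: both combine Lemma~\ref{lem_thickened_flip_length} with Corollary~\ref{cor_direct_acyclic} and then rewrite the bound in terms of $n_k$. The only difference is that you make the constant $c = \alpha n(1+2|\pairs_N|)$ explicit and treat the regime $k\le 2n$ separately, whereas the paper simply absorbs the lower-order terms into an unspecified constant~$c$ along the chain of inequalities.
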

\begin{proof}
    By \cref{lem_thickened_flip_length}, any flip sequence from $\blowup{T}$ to $\blowup{T'}$ with $\bar\delta$ indirect flips has length at least $(k-2n)(\bar \delta + |\pairs_N|)$.
    By construction, $\blowup{T}$ (and also $\blowup{T'}$) has $n_k = n+k|\pairs_N|$ points.
    Since $n$ is fixed and hence constant with respect to the total number $n_k$ of points, we get
    \begin{equation}
        \label{eq_vk}
        \dist(\blowup{T},\blowup{T'})\geq (k-2n)(\bar \delta + |\pairs_N|) \geq k(\bar \delta + |\pairs_N|) - c 
		\geq n_k \frac{\bar \delta + |\pairs_N|}{|\pairs_N|} - c. 
    \end{equation}
    The vertices of $H$ are in bijection with the pairs in $\pairs_N$, and by \cref{cor_direct_acyclic}, the number $\delta$ of direct pairs in $\pairs_N$ is at most $\las(H)$.
    Thus,
    \[
        \bar \delta = |\pairs_N| - \delta 
        \geq |\pairs_N|-\las(H) 
        = |V(H)|-\las(H).
    \]
    Plugging this with $|V(H)| = |\pairs_N|$ into \cref{eq_vk} completes the proof of the lemma:
    \[
        \dist(\blowup{T},\blowup{T'})\geq n_k \frac{2|V(H)|-\las(H)}{|V(H)|} - c 
		= n_k \left(2-\frac{\las(H)}{|V(H)|}\right) - c.\qedhere
    \]
\end{proof}

\cref{lem_rho} takes care of all $n$ of the form $n_k$ in \cref{thm_Delta}.
The following statement will help to deal with the intermediate values of $n$.

\begin{lemma}
    \label{lem_Fn_nondecreasing}
    The diameter of $\mathcal F_n$ is monotone in $n$, i.e., for any $n \ge 1$ it holds that $\diam(\mathcal{F}_n)\leq \diam(\mathcal{F}_{n+1})$.
\end{lemma}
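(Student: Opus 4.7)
The plan is to construct, from a diameter-achieving pair $T,T' \in \mathcal{T}_n$ with $\dist(T,T') = \diam(\mathcal{F}_n)$, a pair $\tilde T, \tilde T' \in \mathcal{T}_{n+1}$ of at least the same flip distance. Then $\diam(\mathcal{F}_{n+1}) \geq \dist(\tilde T, \tilde T') \geq \dist(T,T') = \diam(\mathcal{F}_n)$ as required.

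Concretely, I would place a new point $q$ on the convex hull between $p_n$ and $p_1$, so the resulting $n+1$ points are in convex position with cyclic order $p_1,\ldots,p_n,q$, and define the extensions $\tilde T := T \cup \{qp_1\}$ and $\tilde T' := T' \cup \{qp_1\}$. Since $qp_1$ is a boundary edge of the new convex $(n+1)$-gon, it crosses no chord of $T$ or $T'$, so $\tilde T, \tilde T' \in \mathcal{T}_{n+1}$.

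To establish $\dist(\tilde T, \tilde T') \geq \dist(T, T')$, the strategy is to build a projection $\pi : \mathcal{T}_{n+1} \to \mathcal{T}_n$ with $\pi(\tilde T) = T$ and $\pi(\tilde T') = T'$, that maps any flip-adjacent pair in $\mathcal{F}_{n+1}$ to an equal or flip-adjacent pair in $\mathcal{F}_n$. Such a projection turns any flip sequence $\tilde T = S_0 \to S_1 \to \cdots \to S_\ell = \tilde T'$ into a walk $T = \pi(S_0), \pi(S_1), \ldots, \pi(S_\ell) = T'$ of length at most $\ell$ in $\mathcal{F}_n$, giving the desired inequality. A natural candidate deletes $q$ and its incident edges from $S$ and reconnects the resulting forest by convex-hull chords $v_j v_{j+1}$ between cyclically consecutive neighbors $v_1 < \cdots < v_d$ of $q$. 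Non-crossingness of these chords follows from non-crossingness of $S$: any chord $v_j v_{j+1}$ lies in the region bounded by $qv_j$, $qv_{j+1}$, and the hull arc from $v_j$ to $v_{j+1}$ not containing $q$, so an intersection with some edge $e$ of $S$ would force $e$ to already cross $qv_j$ or $qv_{j+1}$.

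The main obstacle is verifying the 1-Lipschitz property across all flips of $\mathcal{F}_{n+1}$. Flips not touching $q$ project cleanly to single flips, and flips that swap one $q$-edge for another without changing the cyclic position project to a single connector flip. The subtle case is a flip that simultaneously changes a $q$-incident edge and a non-$q$-edge, thereby altering $q$'s degree: such a flip can modify two connector chords at once, potentially violating the Lipschitz bound. To handle this, I would refine $\pi$ into a time-dependent projection tracking a distinguished \emph{reference} $q$-edge along the flip sequence (initialized to $qp_1$ and updated only when it is itself flipped), and anchor the connector choice to this reference. A careful case analysis then shows that each flip of $\mathcal{F}_{n+1}$ induces at most one flip in the projected walk, closing the argument and yielding $\dist(T, T') \leq \dist(\tilde T, \tilde T')$.
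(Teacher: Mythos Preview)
Your projection $\pi$ is not $1$-Lipschitz, and the failure occurs already in the case you declare easy. Consider six points $p_1,\ldots,p_5,q$ in convex position (cyclic order $p_1,p_2,p_3,p_4,p_5,q$) and the tree $S=\{qp_1,qp_3,qp_5,p_2p_3,p_3p_4\}$. Here $q$ has neighbors $p_1,p_3,p_5$, so $\pi(S)=\{p_2p_3,p_3p_4,p_1p_3,p_3p_5\}$. The flip $qp_3\to qp_2$ is valid (it swaps one $q$-edge for another and keeps the new neighbor in the same slot between $p_1$ and $p_5$), yielding $S'=\{qp_1,qp_2,qp_5,p_2p_3,p_3p_4\}$ with $\pi(S')=\{p_2p_3,p_3p_4,p_1p_2,p_2p_5\}$. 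But $|\pi(S)-\pi(S')|=2$: both connectors $p_1p_3$ and $p_3p_5$ are replaced by $p_1p_2$ and $p_2p_5$. So a single flip in $\mathcal{F}_{n+1}$ projects to distance~$2$ in $\mathcal{F}_n$. In general, whenever $q$ has degree $\geq 3$ and a \emph{middle} neighbor $v_j$ is swapped for some $w$, both connectors $v_{j-1}v_j$ and $v_jv_{j+1}$ change. You misidentify the problematic case as the degree-changing one; the degree-preserving case already breaks the argument, and your ``time-dependent reference edge'' fix is too vague to repair this---it is unclear how anchoring to a reference $q$-edge would make the connector change local to a single edge.

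The paper sidesteps all of this with a one-line citation: common boundary edges are never flipped in any \emph{shortest} flip sequence between two trees (this is \cite[Corollary~18]{aichholzer2022reconfiguration}). Since $qp_1$ is a common boundary edge of $\tilde T$ and $\tilde T'$, it is present in every tree of a shortest flip sequence from $\tilde T$ to $\tilde T'$. One can then simply contract the edge $qp_1$ (i.e., identify $q$ with $p_1$) in every tree of the sequence; because $q$ and $p_1$ are consecutive on the hull and $qp_1$ is always present, this contraction preserves non-crossingness, connectivity, and acyclicity, and turns the sequence into a flip sequence from $T$ to $T'$ of the same length. The key point you are missing is that you do not need a projection that is $1$-Lipschitz on all of $\mathcal{F}_{n+1}$; you only need it along one shortest path, and the cited result hands you exactly the structural control (constant presence of $qp_1$) that makes a trivial projection work there.
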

\begin{proof}
    Let $p_1,\dots, p_{n+1}$ be points in convex position listed in counterclockwise order.
    Let $T$ and $T'$ be non-crossing trees on the vertices $p_1,\dots, p_n$ with $\dist(T,T') = \diam(\mathcal{F}_n)$.
    Let $T_+$ and $T'_+$ be trees on the vertices $p_1,\dots, p_{n+1}$, with $E(T_+) = E(T)\cup \{p_{n+1}p_1\}$ and $E(T'_+) = E(T')\cup \{p_{n+1}p_1\}$.
    Note that $T_+$ and $T'_+$ are non-crossing spanning trees.
    To prove the lemma, it suffices to show that $\dist(T,T')\leq \dist(T_+,T'_+)$, since this implies $\diam(\mathcal{F}_n)\leq \diam(\mathcal{F}_{n+1})$.

    It is known that any common boundary edge stays fixed in any shortest flip sequence between the trees~\cite[Corollary~18]{aichholzer2022reconfiguration}. 
    In particular, this is true for the edge $p_{n+1}p_1$ and any flip sequence between $T_+$ and $T'_+$ of length $\dist(T_+, T'_+)$.
    Take such a sequence and collapse $p_{n+1}$ and $p_1$ to $p_1$, removing the edge $p_{n+1}p_1$ at every step of the sequence.
    This gives a valid flip sequence between $T$ and $T'$, because this collapse does not change connectivity or the existence of cycles or crossings in the graphs.
    Moreover, this sequence has length $\dist(T_+, T'_+)$, showing $\dist(T,T')\leq \dist(T_+,T'_+)$, as desired.
\end{proof}

\begin{proof}[Proof of \cref{thm_Delta}]
    By \cref{lem_rho}, we have a family of pairs of trees $(\blowup{T},\blowup{T'})_{k \geq 1}$ showing the desired lower bound on $\diam(\mathcal{F}_{\largeN})$ for each ${\largeN}$ of the form ${\largeN} = n_k \coloneqq n+k|V(H)|$ for some $k\geq 1$.
    Since $n_{k+1} - n_k$ is constant, it suffices to show that $\diam(\mathcal{F}_{\largeN}) \geq \diam(\mathcal{F}_{n_k})$ for $n_k\leq {\largeN}< n_{k+1}$, which follows from \cref{lem_Fn_nondecreasing}.
\end{proof}

\subparagraph{Relation to the previous lower bound.}

One might wonder about a concrete instance where any flip sequence exceeds the previous lower bound.
The proof of \cref{thm:main-lower-bound} implies that for the trees $(T,T')$ depicted in~\cref{fig:exampleLBa} and $k$ large enough, $\dist(\blowup{T},\blowup{T'})$ is larger than the previous lower bound of $\lfloor \nicefrac{3}{2}\cdot n \rfloor -5$ from \cite[Theorem 3.5]{Hernando}, where $n=n_k$ is the number of vertices in $\blowup{T}$. 
But since the theorem involves a non-specified constant, it is not in itself enough to exhibit a concrete example of a pair of trees with a distance that is greater than the previous bound of $\lfloor \nicefrac{3}{2}\cdot n \rfloor -5$.
In the following, we show that for any $k>757$, $\dist(\blowup{T},\blowup{T'}) > \lfloor \nicefrac{3}{2}\cdot n_k \rfloor -5$.
That is, for any $k>757$, $(\blowup{T},\blowup{T'})$ is a pair of trees which beats the best previously known lower bound.

From \cref{lem_thickened_flip_length} and $\delta+\bar \delta = |\pairs_N|$ we know that for any pair $(T,T')$ of trees we have
$
    \dist(\blowup{T},\blowup{T'}) \geq (k-2n)(2|\pairs_N| - \delta).
$
Because $|V(H)| = |\pairs_N|$ and $\delta \le \las(H)$ by \cref{cor_direct_acyclic}, we have the following for $k-2n\ge 0$:
\[
    \dist(\blowup{T},\blowup{T'})\geq (k-2n)(2|V(H)| - \las(H)) 
\]
Since $\blowup{T}$ has $n_k = n+k|\pairs_N|=n+k|V(H)|$ points, the previous lower bound for $\dist(\blowup{T},\blowup{T'})$ is 
$
    \lfloor \nicefrac{3}{2}\cdot n_k \rfloor -5 \le \nicefrac{3}{2}\cdot n_k-5 = \nicefrac{3}{2}\cdot n + \nicefrac{3}{2}\cdot k|V(H)|-5.
$
Consequently, $\dist(\blowup{T},\blowup{T'})$ exceeds the previous lower bound whenever
\begin{equation}\label{eq:concrete_bound}
   (k-2n)(2|V(H)| - \las(H))> \nicefrac{3}{2}\cdot n + \nicefrac{3}{2}\cdot k|V(H)|-5.
\end{equation}
For our concrete trees $T,T'$ from~\cref{fig:exampleLBa}, we have $n=13$, $|V(H)|=9$, and $\las(H)=4$.
Plugging this into~\cref{eq:concrete_bound}, we get $(k-26)(18 - 4)> \nicefrac{3}{2}\cdot 13 + \nicefrac{3}{2}\cdot k\cdot 9-5$, which is equivalent to $k>757$.

We remark that we expect the true value for $k$ to be much smaller, since the the lower bound estimation from \cref{lem_thickened_flip_length,lem_rho} for the length of any flip sequence $F$ from $\blowup{T}$ to $\blowup{T'}$ is obviously far from tight; for $k < 2n$ that bound is is even negative. 
However, this non-optimal lower bound estimation only influences the constant additional term and not on the leading term of the new lower bound, which is one of the main results of this work.

\section{Keeping Common Edges and Improving the Upper Bound}
\label{sec_improved_upper}

In this section, we will further improve the upper bound of $\nicefrac 53\cdot (n-1)$ from \cref{cor:5-over-3-upper-bound} to also depend on the number of edges that the two trees share. 
Moreover, we will obtain a flip sequence that avoids flipping any edge that is already in both trees. 
We remark that in the flip sequences obtained for \cref{cor:5-over-3-upper-bound}, such flips might be performed. 
In particular, this will be the case whenever a common edge is assigned to different gaps for the different trees.

We distinguish two different types of edges in (a tree on) a convex point set $P$, namely, boundary edges and non-boundary edges, where we denote the latter as \emph{chords}.  

For a pair $T,T'$ of trees on $P$ let $b = b(T,T')$ denote the number of \emph{common edges} (edges in $T\cap T'$) that are also boundary edges.
Clearly $d+b \le n-1$.
\cref{thm:main-upper-bound} is implied by the following stronger statement.

\begin{theorem}
    \label{thm:careful-main-upper-bound}
    Let $T,T'$ be two non-crossing spanning trees on a set of $n\geq 3$ points in convex position.
    Let $d = |T-T'|$ and let $b$ be the number of common boundary edges of $T$ and~$T'$.
    Then $\dist(T,T')\leq \nicefrac 53 \cdot d+\nicefrac 23 \cdot b-\nicefrac 43$.
    Moreover, there exists a flip sequence from $T$ to $T'$ of at most that length in which no common edges are flipped.
\end{theorem}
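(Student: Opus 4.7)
The plan is to strengthen the analysis of \cref{subsec_upper} along two complementary axes: an inductive decomposition along common chords, and a sharpened counting in the base case of no common chords. The resulting flip sequence will avoid common edges in both cases.

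Assume first that $T\cap T'$ contains a common chord $e = p_ip_j$. The chord splits the convex polygon into two sub-polygons $P_1, P_2$ sharing $e$, and both trees restrict to spanning-tree pairs $(T_k, T'_k)$ on $P_k$, each containing $e$ as a boundary edge of the sub-polygon. Setting $d_k = |T_k - T'_k|$ and letting $b_k$ be the number of common boundary edges in the $k$-th sub-problem, I would observe that $d_1 + d_2 = d$ (each non-common edge of $T$ sits in exactly one sub-polygon) while $b_1 + b_2 = b + 1$ (every common boundary edge of $P$ belongs to a unique sub-polygon, and $e$ itself is a new common boundary edge of \emph{both}). Inductively, each sub-polygon admits a flip sequence of length at most $\tfrac{5}{3}d_k + \tfrac{2}{3}b_k - \tfrac{4}{3}$ that avoids common edges. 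Because flips within $P_1$ and $P_2$ do not interact (no edge crosses $e$), concatenating yields a flip sequence on $P$ of length
\[
    \tfrac{5}{3}d + \tfrac{2}{3}(b+1) - \tfrac{8}{3} \;=\; \tfrac{5}{3}d + \tfrac{2}{3}b - 2 \;\leq\; \tfrac{5}{3}d + \tfrac{2}{3}b - \tfrac{4}{3},
\]
in which $e$ and all other common edges are preserved.

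In the base case of no common chords, every common edge is a common boundary edge and corresponds to a gap in $I_=$; in particular $|I_=| = b$ and $|I_R| + |I_N| = d$. The flip sequence of \cref{thm:alpha}\eqref{item:alpha-upper-bound} already leaves gaps in $I_=$ untouched, so no common edge is flipped. I would then refine \cref{prop:flip-to-boundary} using the sharper identity $|\TS| + |\TSp| = |\IRS| + |\IRSp| + 2b$ (each of the $b$ common short edges contributes exactly once to $|\TS|$ and once to $|\TSp|$) in place of the inequality with $|I_=|$; combined with \cref{lem_tiny_vs_weird}, this tightens the bound on $|F_1| + |F_3|$ to $\tfrac{3}{2}|I_R| + 2|X| + b - 1$. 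Adding $|F_2| = |I_N| - |X|$ and invoking $|X| \leq \tfrac{2}{3}|I_N|$ from \cref{lem:ABC-acyclic} gives
\[
    \dist(T,T') \;\leq\; \tfrac{5}{3}d + b - 1 - \tfrac{1}{6}|I_R|,
\]
which matches the target whenever $|I_R| \geq 2b + 2$.

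The main obstacle is closing the remaining small-$|I_R|$ regime in the base case: here the generic ratio $\las(H)/|V(H)| \geq 1/3$ of \cref{lem:ABC-acyclic} is not strong enough to recover the constant $-\tfrac{4}{3}$. My plan is to exploit the fact that when $|I_R|$ is small, the polygon boundary is densely populated by common short edges, which forces the \enormal{}-\enormal{} pairs corresponding to $V(H) = I_N$ into a constrained adjacency pattern; this structural constraint should enlarge one of the acyclic classes $A$, $B$, or $C$ strictly beyond $|V(H)|/3$, tightening $|X|$ by the missing $\tfrac{1}{3}(b+1)$. Verifying this refinement, and ensuring the modified flip sequence continues to avoid common edges, is the crux of the argument.
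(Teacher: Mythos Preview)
Your inductive reduction along common chords is the right opening move and matches the paper's strategy, but there is a slip: the chord $e$ becomes a common boundary edge of \emph{both} sub-polygons, so $b_1+b_2=b+2$, not $b+1$. With the corrected count the induction reproduces the bound exactly (no slack), which is fine.

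The genuine gap is in the base case. You never say where the cycle is cut to obtain the linear order $p_1,\dots,p_n$, and this choice is precisely the missing idea. The paper cuts at a boundary edge $p_np_1$ lying in \emph{neither} tree; if no such edge exists, then $T\cup T'$ contains the full boundary and a trivial $d$-flip sequence finishes the job (\cref{lem:freeGapToCut}). With this cut, each of $T,T'$ has at least two uncovered edges (\cref{obs:minus_k}), so \cref{lem_tiny_vs_weird} gives $|\TW|\le |\TS|-2$ and $|\TWp|\le |\TSp|-2$ rather than the weaker $-1$ you are using via \cref{prop:flip-to-boundary}. Equally important, instead of collapsing $|F_1|+|F_3|$ to the averaged bound $\tfrac32|I_R|+\cdots$, the paper keeps track of the split $|I_R|=d_1+d_2$ into gaps needing one versus two flips, where $d_1=|\IRS|+|\IRSp|=|\TS|+|\TSp|-2b$. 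The strengthened \cref{lem_tiny_vs_weird} then yields $d_2\le d_1+2b-4$, and the total flip count $d_1+2d_2+\tfrac53|I_N|$ rearranges directly to $\tfrac53 d+\tfrac23 b-\tfrac43$ with no residual case analysis.

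Your proposed rescue---boosting $\las(H)$ beyond $|V(H)|/3$ when $|I_R|<2b+2$---is speculative, and the heuristic that small $|I_R|$ forces many common short edges is not correct (the quantities $|I_R|$ and $b$ are essentially independent). The paper's route via the cut point and the $d_1/d_2$ split avoids this difficulty entirely and is what you should do instead.
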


The high level proof idea for \cref{thm:careful-main-upper-bound} is the following: 
We split the instance along common chords and thus obtain subinstances where all common edges are boundary edges, see \cref{fig:subinstances} for an illustration.
We handle these subinstances independently.
For each subinstance, we identitfy a ``good'' position to cut open the boundary of the point set to obtain a linear order of the points by the following two observations.

\begin{observation}\label{obs:minus_k}
    Let $T$ be a non-crossing spanning tree on a set $P$ of $n$ points in convex position. 
    Then for any boundary edge $p_1p_n$ of $P$ that is not a boundary edge of $T$, the tree $T$ with linear order $p_1,\ldots,p_n$ has at least two uncovered edges.
\end{observation}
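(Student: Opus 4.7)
The plan is to argue by contradiction. I would suppose that $T$ has at most one uncovered edge with respect to the linear order $p_1,\ldots,p_n$. By the rooted-forest structure of the cover relation on $T$ used in the proof of \cref{lem_tiny_vs_weird}, the set of uncovered edges corresponds exactly to the roots of that forest, and any non-empty forest has at least one root. Since $T$ has $n-1 \ge 2$ edges (using $n \ge 3$), the contradictory assumption thus reduces to the case that $T$ has exactly one uncovered edge.

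Then I would exploit this single-root assumption directly. Let $e = p_i p_j$ with $i < j$ denote the unique uncovered edge. By definition, every other edge of $T$ is covered by $e$, so both endpoints of every such edge lie in $\{p_i, p_{i+1}, \ldots, p_j\}$. Because $T$ is a spanning tree, the vertex $p_1$ is an endpoint of some edge of $T$, which — being either $e$ itself or covered by $e$ — forces $i = 1$. Symmetrically, the existence of an edge of $T$ incident to $p_n$ forces $j = n$. Hence $e = p_1 p_n$, so $p_1 p_n$ is an edge of $T$; but the convex-hull boundary edge $p_1 p_n$ is the same segment as the corresponding chord in the linear representation (this is exactly the edge where the circle was cut open), so $p_1 p_n$ would in fact be a boundary edge of $T$, contradicting the hypothesis.

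The argument is short and mostly a careful use of the definitions. There is no real obstacle beyond being explicit that (i) the cover relation on a non-crossing edge set forms a forest whose roots are the uncovered edges, so ``at least one uncovered edge'' comes for free, and (ii) the chord $p_1 p_n$ in the linear representation coincides with the convex-hull edge $p_1 p_n$ that was used as the cut. Once those two points are clear, the contradiction ``$e = p_1 p_n \in T$'' is immediate.
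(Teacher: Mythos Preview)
Your argument is correct. The paper states this as an observation without proof, so there is no ``paper's own proof'' to compare against; your write-up supplies a valid justification.

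One minor remark: the appeal to $n-1 \ge 2$ is not actually needed. All you use is that $T$ has at least one edge (so at least one uncovered edge exists), and then the single-root forest argument forces that edge to be $p_1p_n$. The hypothesis already rules out $p_1p_n \in T$, so the contradiction follows for any $n$ for which the hypothesis is non-vacuous.

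If you wanted a marginally more direct route, you could avoid the contradiction setup entirely: the longest edge of $T$ incident to $p_1$ is uncovered (any edge covering it would also be incident to $p_1$ and longer), and likewise for $p_n$; these two uncovered edges coincide only if $p_1p_n \in T$, which is excluded. This is probably the reasoning the paper has in mind when calling it an observation, but your forest-based argument is equally short and just as clear.
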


The point order obtained by \cref{obs:minus_k} will avoid flipping common edges and will facilitate obtaining the upper bound of \cref{thm:careful-main-upper-bound} for each subinstance as well as in total. 

To use \cref{obs:minus_k}, we need to identify a boundary edge of $P$ that is not a boundary edge in any of the trees.
Note that, in particular after cutting along common edges, it is easy to see that one can perform a flip which introduces a boundary edge from $T'-T$ (or $T-T'$) and removes a chord, unless both trees consist of boundary edges only, see also Bousquet et al.~\cite[Claim~2]{bousquet2023noteJOURNAL}.
Hence we have the following observation.

\begin{observation}\label{lem:freeGapToCut}
    Consider two non-crossing spanning trees $T,T'$ on a convex point set $P$. If $T\cup T'$ contain all boundary edges of $P$, then there exists a flip sequence of length $|T\setminus T'|$.
\end{observation}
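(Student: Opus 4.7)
The plan is induction on $d = |T \setminus T'|$. The base case $d=0$ is vacuous. For the inductive step $d \geq 1$, I aim to find a single flip $T \to T_1 = T - f + e$ with $e \in T' - T$ and $f \in T - T'$ (so no common edge is touched) such that $T_1$ is a non-crossing spanning tree satisfying $T_1 \cup T' \supseteq B$. Granted such a flip, applying the induction hypothesis to $(T_1, T')$, where $|T_1 - T'| = d-1$, yields a flip sequence of length $d-1$ that, concatenated with the initial flip, gives a sequence of length $d$ avoiding common edges.

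To produce the flip, I first observe that $|T| = n-1 < n = |B|$, so $T$ omits at least one boundary edge of $P$; the hypothesis $T \cup T' \supseteq B$ then forces every such omitted boundary edge to lie in $T' - T$. Pick any boundary edge $e \in T' - T$. Adding $e$ to $T$ introduces no crossings (a boundary edge is crossed by no other edge) and creates a unique cycle $C$ in $T + e$ consisting of $e$ together with the $T$-path between its endpoints. Because $T'$ is acyclic and contains $e$, at least one edge of $C$ lies in $T - T'$. To preserve the invariant after the flip I need $f$ to additionally be a chord: indeed, $T_1 \cup T' = (T \cup T') \setminus \{f\}$, and removing a chord does not subtract anything from $B$, so $T_1 \cup T' \supseteq B$ is maintained.

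The main obstacle is justifying that a chord $f \in C \cap (T - T')$ exists. This is exactly the content of Claim~2 of Bousquet et al.~\cite{bousquet2023noteJOURNAL}, which guarantees the existence of a flip introducing a boundary edge of $T' - T$ and removing a chord of $T - T'$ (or, symmetrically, introducing a boundary edge of $T - T'$ and removing a chord of $T' - T$) whenever $T$ and $T'$ are not both composed entirely of boundary edges of $P$. In the symmetric situation I would instead produce the flip on $T'$ toward $T$ and reverse the resulting sequence, which has the same length. The only remaining case is when both $T$ and $T'$ consist solely of boundary edges of $P$; then each equals $B$ minus one edge, and either $T = T'$ (so $d = 0$) or $d = 1$ and the single flip swapping the two distinct missing boundary edges completes the task directly.
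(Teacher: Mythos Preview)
Your proposal is correct and takes essentially the same approach as the paper: both rely on Claim~2 of Bousquet et al.~\cite{bousquet2023noteJOURNAL} to produce a flip that introduces a boundary edge and removes a chord (possibly with the roles of $T$ and $T'$ swapped), while handling separately the degenerate case in which both trees consist entirely of boundary edges. Your write-up simply spells out the inductive structure and invariant maintenance that the paper leaves implicit.
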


\begin{figure}[htb]
    \centering
    \includegraphics[page=25]{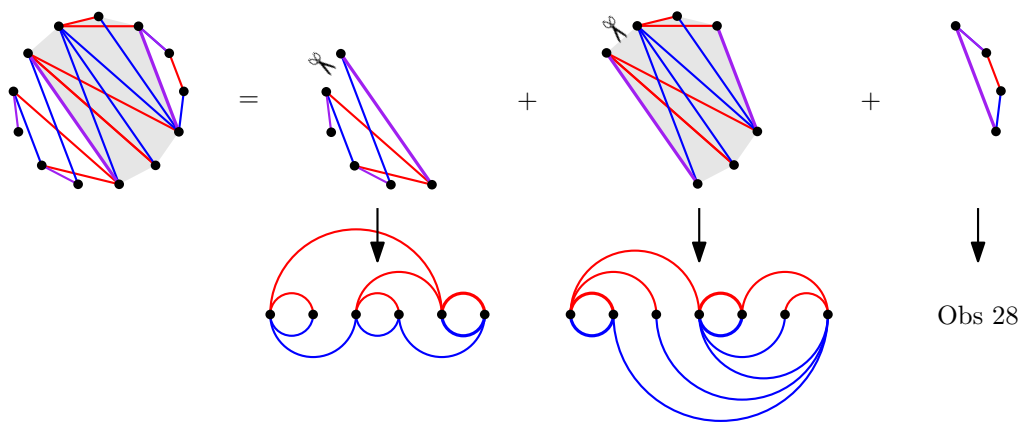}
    \caption{
        Illustration for the proof idea of \cref{thm:careful-main-upper-bound}. 
        The instance is split along common chords (in thick purple).
        For each subinstance, we either use \cref{lem:freeGapToCut} or choose a boundary edge of the point set (marked by a pair of scissors) that is not a boundary edge in any of the two trees in order to cut open the cycle and obtain a linear order on the points.
    }
    \label{fig:subinstances}
\end{figure}

We are now ready to prove \cref{thm:careful-main-upper-bound}.

\begin{proof}[Proof of \cref{thm:careful-main-upper-bound}]
    Let $B(T,T')$ be the set of common boundary edges of $T$ and~$T'$, with $|B(T,T')|=b$, let $C(T,T')$ be the set of common chords of $T$ and $T'$ with cardinality $|C(T,T')|=c$ and let $D(T,T')=T-T'$ (resp.~$D(T',T)=T'-T$) be the set of edges that are only in $T$ (resp.~$T'$) with cardinality $|D(T,T')|=d$. 
    Then $c=n-1-b-d$.

    Assume first that $c=0$, that is, that $T$ and $T'$ do not have any common chords. 

    By \cref{lem:freeGapToCut}, if $T \cup T'$ covers all boundary edges of $P$, then $\dist(T,T') = d$.
    Since every tree on $n\geq 3$ points has at least two boundary edges, $d=0$ then implies $b\ge2$ and $d=1$ implies $b \ge 1$. Hence $\dist(T,T') = d \le \nicefrac 53 \cdot d +\nicefrac 23 \cdot b-\nicefrac 43$.

    Otherwise, we may choose an arbitrary boundary edge $p_1p_n$ of $P$ that is neither a boundary edge of $T$ nor one of $T'$ to cut the cyclic order of the points of $P$ into a linear order $p_1,\ldots,p_n$. 
    With this linear order, each of ${T}$ and ${T}'$ has at least two uncovered edges by \cref{obs:minus_k}.

    Consider the set $\pairs$ of pairs of edges of $T$ and $T'$ that are induced by the gaps, its partition into $\pairs_=, \pairs_N, \pairs_R$, and the corresponding sets $I_=, I_N, I_R$ of gaps.
    Because the set $I_=$ consists of all gaps that correspond to \eshort{}-\eshort{} pairs, we have $|I_=|=|B(T,T')|=b$. 

    Recall that the set $I_R$ contains the gaps corresponding to all other pairs that contain at least one \eshort{} or \eweird{} edge, that $\TS$ (respectively $\TSp$) is the set of \eshort{} edges of $T$ (respectively $T'$) and that $\TW$ (respectively $\TWp$) is the set of \eweird{} edges of $T$ (respectively $T'$).
    Hence $|I_R| \leq |\TW|+|\TWp|+|\TS|+|\TSp|-2b$.
    By \cref{lem_tiny_vs_weird} and \cref{obs:minus_k}, we obtain $|\TW|+|\TWp| \leq |\TS|-2 + |\TSp| - 2 = |\TS|+|\TSp|-4$ and hence $|I_R| \leq 2|\TS|+2|\TSp|-2b-4$.
    We remark that exactly $|\TS|+|\TSp|-2b$ of the gaps in $|I_R|$ correspond to pairs with one \eshort{} edge and hence require only one flip.

    Further, recall that the set $I_N$ consists exclusively of gaps with \enormal{}-\enormal{} pairs and consider the conflict graph $H$ with vertex set $V(H)=I_N$.
    By \cref{lem:ABC-acyclic}, a maximum acyclic subset $Y$ of $H$ has size at least $\nicefrac{1}{3}\cdot|V(H)|$. 
    Let $\prec$ be a topological ordering of $H[Y]$ and let $X = I_N \setminus Y$. 

    We use one flip for each of the $|\TS|+|\TSp|-2b$ gaps in $|I_R|$ corresponding to pairs with one \eshort{} edge, two flips for all other edge pairs corresponding to gaps $X\cup I_R$, and one flip for each pair corresponding to a gap in $Y$. 
    To ease the counting, we split the total flip sequence $T \to \cdots \to T'$ into five parts:
    \begin{description}
        \item[$F_1$:] 
            ${T} \to \cdots \to {T}_{I_R}$ replaces (in any order) each edge $e_i \in {T}$ with $g_i \in I_R$ by $p_ip_{i+1}$.
        
        \item[$F_2$:] 
            ${T_{I_R}} \to \cdots \to {T}_{I_R \cup X}$ replaces (in any order) each edge $e_i \in {T}$ with $g_i \in X$ by $p_ip_{i+1}$.
	
        \item[$F_3$:] 
            ${T}_{I_R \cup X} \to \cdots \to {T}'_{I_R \cup X}$ replaces (in order according to $\prec$) each edge $e_i \in T$ with $g_i \in Y$ by the edge $e'_i \in T'$.
        
        \item[$F_4$:]
            ${T}'_{I_R \cup X} \to \cdots \to {T}'_{I_R}$ replaces (in any order) each  edge $p_ip_{i+1}$ with $g_i \in X$         by the edge $e'_i \in T'$.
            
        \item[$F_5$:] 
            ${T}'_{I_R} \to \cdots \to {T}'$ replaces (in any order) each  edge $p_ip_{i+1}$ with $g_i \in I_R$ by the edge $e'_i \in T'$.
    \end{description}
    Note that the flip sequence is valid by \cref{lem:validFlip,lem:flip-to-boundary,prop:flip-NN}.

    It remains to compute the total length of the flip sequence.
    Let $d_1=|\TS|+|\TSp|-2b$ be the number of $1$-flips for $I_R$, let $d_2=|I_R|-d_1$ be the number of $2$-flips for $I_R$ and let $d_3=|I_N|$.
    Note that $d=d_1+d_2+d_3$.

    The first and last step of the sequence require a total of $d_1+2d_2$ flips.
    The middle three steps together require $2d_3-|Y| \le \nicefrac{5}{3}\cdot d_3$ flips.

    Since $|I_R| \le |\TS|+|\TSp|-4+|\TS|+|\TSp|-2b = 2d_1+2b-4$, we have $d_2\le d_1+2b-4$.

    Altogether we obtain
    \begin{align*}
        \dist(T,T') &\leq d_1+2d_2+\frac{5}{3}d_3 
        = d_1+\frac{5}{3}d_2+\frac{1}{3}d_2 + \frac{5}{3}d_3 \\
        &\leq d_1+\frac{5}{3}d_2+\frac{1}{3}(d_1+2b)- \frac{4}{3} + \frac{5}{3}d_3 \\
        &=\frac{4}{3}d_1+\frac{5}{3}(d_2+d_3)+\frac{2}{3}b- \frac{4}{3} 
        \le \frac{5}{3}d+\frac{2}{3}b- \frac{4}{3}.
    \end{align*}

    \smallskip
    We now turn to the case $c\neq 0$.
    Consider the $c+1$ bounded cells of the convex hull of $P$ that are induced by the set $C(T,T')$; see~\cref{fig:subinstances}. 
    For each closed cell $F$ with $n(F)$ points of $P$, let $T(F)=T\cap F$ and $T'(F)=T'\cap F$ be the non-crossing spanning trees on the $n(F)$ points of $F$ induced by $T$ and $T'$, respectively, with $C(T(F),T'(F))=\emptyset$ and $b(F)=|B(T(F),T(F))|$ common boundary edges.

    Consider again the edges of $T$ and $T'$.
    Every edge of $B(T,T')$ contributes to exactly one of the $b(F)$'s and every edge of $C(T,T')$ to exactly two of them. 
    Hence $\sum_{F} b(F)=b+2c$.
    On the other hand, every edge of $D(T,T')$ (resp. $D(T',T)$) lies in exactly one cell $F$. 
    Thus, with $d(F)=|D(T(F),T'(F))|=|D(T'(F),T(F))|$, we have that $\sum_{F} d(F)=d$.
    Applying the above flip process to each of the tree pairs $(T(F),T'(F))$ independently, we obtain 
    \begin{align*}
        \dist(T,T') \leq \sum_{F}\dist(T(F),T'(F)) 
	\leq \sum_{F} \left(\frac{5}{3}d(F)+\frac{2}{3}b(F)-\frac{4}{3}\right) 
        &=  \frac{5}{3}d+\frac{2}{3}(b+2c)-\frac{4}{3}(c+1) \\
        &= \frac{5}{3}d+\frac{2}{3}b- \frac{4}{3}.
    \end{align*}

    This completes the proof of the theorem.
\end{proof}

\section{Separated Caterpillars -- Proof of \cref{thm:caterpillar}}
\label{sec:caterpillars}

In this section, we improve the upper bound for the case where one tree has a special structure, namely, if it is a \sep caterpillar.

We call a tree $T$ on a convex point set a \emph{\sep caterpillar} if it contains at most two boundary edges.

\begin{figure}[htb]
    \centering
    \includegraphics[page=1]{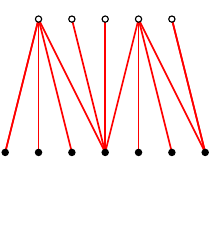}
    \caption{A \sep caterpillar with a valid 2-coloring of its points.}
    \label{fig:caterpillar}
\end{figure}

In fact, there are a few equivalent definitions.

\begin{observation}\label{lem:sepCatChar}
    Let $T$ be a tree on a convex point set (with $n\geq 3$ points) with a valid 2-coloring of the points.
    Then the following statements are equivalent.
    \begin{itemize}
	\item $T$ is a \sep caterpillar.
	\item Each color class forms a consecutive interval (along the boundary of the convex hull).
	\item The color classes can be separated by a line.
	\item $T$ contains exactly two boundary edges. 
	\item There is linear vertex labeling such that the poset defined by the edge cover relation of the edges is a total order.
        \item For every linear vertex labeling, the poset defined by the edge cover relation consists of (at most) two chains. 
        \item For every linear vertex labeling, $T$ has no \eweird edge.
    \end{itemize}
\end{observation}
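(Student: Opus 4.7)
My plan is to prove the seven equivalent characterizations by anchoring all of them on condition (2), the two-consecutive-arcs characterization. Two trivial steps come first: $(1) \Leftrightarrow (4)$ holds because every non-crossing tree on $n \geq 3$ convex points has at least two boundary edges (a fact already used in the paper), collapsing ``at most two'' to ``exactly two''; and $(2) \Leftrightarrow (3)$ is immediate since two disjoint arcs of a circle are separable by a line iff each is a consecutive arc.

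Next I would establish $(2) \Leftrightarrow (4)$, which is the structural core. For $(2) \Rightarrow (4)$: every $T$-edge respects the bipartition, so only the two ``interface'' boundary edges of the hull joining $A$ and $B$ can lie in $T$, and the minimum-two-boundary fact forces both to be present. For $(4) \Rightarrow (2)$: the two boundary edges of $T$ split the hull into arcs $X, Y$, and I would argue no $T$-chord lies entirely within one arc, since otherwise such a chord $p_ip_j$ induces, via non-crossing, a sub-tree on $\{p_i,\ldots,p_j\}$ whose own local boundary contains at least one additional original boundary edge of $T$, contradicting (4).

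For $(4) \Leftrightarrow (5)$: cutting at one of $T$'s two boundary edges makes every other $T$-edge a chord between the two arcs, and non-crossing then forces these to be pairwise nested, yielding a chain. Conversely, the unique uncovered edge of a chain must span the whole linear order (as its span contains the spans of all other edges), forcing the cut to land at a $T$-boundary edge by \cref{obs:minus_k}; moreover its minimum covers exactly one gap by the bijectivity of $\rho_T$, hence is a short edge, so $T$ has exactly two boundary edges. The remaining equivalences $(2) \Leftrightarrow (6)$ and $(2) \Leftrightarrow (7)$ proceed in parallel: for any cut of a separated caterpillar, the chords of $T$ partition into two pairwise-nested families according to which arc-fragment they use, producing at most two chains in the Hasse diagram and forcing $\rho_T(g_i)$ to share an endpoint with $g_i$. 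The reverse implications $(6) \Rightarrow (4)$ and $(7) \Rightarrow (4)$ are contrapositive: a third boundary edge of $T$ yields either a non-chain component or a weird edge under a suitable cut (for instance, cutting at one of the three boundary edges and tracking the covering edge of the gap ``between'' the others).

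The main obstacle will be the $(2) \Rightarrow (7)$ case analysis: verifying for every choice of cut that the shortest covering edge of each gap is adjacent to that gap. The crucial fact making this tractable is that any $A$-vertex's $T$-edges go to $B$-vertices, which after any cut lie on the opposite side of every $A$-internal gap in the linear order, so the shortest chord covering such a gap must be incident to one of its $A$-endpoints; a symmetric argument handles $B$-internal gaps, and gaps at arc boundaries are handled by the short boundary edge of $T$ itself.
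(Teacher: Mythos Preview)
The paper states this result as an \emph{Observation} and offers no proof, so there is nothing to compare your argument against directly. Your overall plan is sound and would yield a complete proof for all equivalences except one; I comment on the weak point.

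Your treatments of $(1)\Leftrightarrow(4)$, $(2)\Leftrightarrow(3)$, $(2)\Leftrightarrow(4)$, $(4)\Leftrightarrow(5)$, $(2)\Rightarrow(6)$, $(2)\Rightarrow(7)$, and $(6)\Rightarrow(4)$ are all correct as sketched. In particular, your $(2)\Rightarrow(7)$ argument works once one adds the observation that for a $B$-internal gap $g_k$ it is impossible that all $T$-edges at $p_k$ go to $A_1$ while all $T$-edges at $p_{k+1}$ go to $A_2$: any edge of $T$ would then have to cross one of these, contradicting that $T$ is non-crossing yet spans the gap.

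The genuine gap is in your contrapositive for $(7)\Rightarrow(4)$. Your suggestion, ``cutting at one of the three boundary edges and tracking the covering edge of the gap between the others'', does not work for an \emph{arbitrary} choice of which boundary edge to cut. Take $T=\{p_1p_2,\,p_2p_4,\,p_3p_4,\,p_4p_6,\,p_5p_6\}$ on six convex points, which has three boundary edges $p_1p_2,p_3p_4,p_5p_6$. Cutting at $p_1p_2$ (linear order $p_2,\dots,p_6,p_1$) produces no wide edge: the gap between $p_3p_4$ and $p_5p_6$ has $\rho_T$ equal to the near edge $p_4p_6$. Cutting at $p_5p_6$ likewise yields no wide edge. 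Only the cut at $p_3p_4$ works (the gap between $p_6$ and $p_1$ then has wide covering edge $p_2p_4$). So the correct statement is that \emph{some} boundary-edge cut yields a wide edge, and your plan is missing the rule for selecting it --- or an alternative argument that avoids this choice. One clean route is to first observe that cutting at any hull edge not in $T$ gives at least three \eshort{} edges and hence a Hasse forest with at least three leaves; then argue (using the non-crossing constraint across a branching near edge and re-cutting at its gap) that some labeling must produce a wide edge. This step needs to be written out; as stated, your $(7)\Rightarrow(4)$ is an assertion rather than an argument.
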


We remark that Bousquet et al.~\cite{bousquet2023noteJOURNAL} have considered separated caterpillars (under the name of \emph{nice caterpillars} and with a longer definition). 
They showed that if one of two trees is a nice caterpillar, then their flip distance is at most $\nicefrac{3}{2}\cdot n$. 
Note that the lower bound examples from~\cite[Theorem 3.5]{Hernando}, illustrated in \cref{fig:IntroC}, are in fact separated caterpillars.
Thus, the bound is tight up to additive constants. 
We show that the bound even holds in terms of the difference $|T-T'|$ of the two trees $T,T'$.

\caterpillar*

Assume first that $T$ and $T'$ have at least one common chord.
As in the previous section, we can split the instance at a common chord into two subinstances. 
The common chord turns into a common boundary edge in each part, and each part of the separated caterpillar is again a separated caterpillar. 
By repeated application, we obtain a collection of subinstances without common chords.  
We will prove that~\cref{thm:caterpillar} holds for these subinstances (\cref{lem:caterpillars_without_common_chords}). 
Note that every edge in $|T-T'|$ lies in exactly one of these subinstances.
Since the bound in~\cref{thm:caterpillar} only depends on $|T-T'|$, 
applying \cref{lem:caterpillars_without_common_chords} independently to each subinstance then directly yields the desired statement for the original trees $T$ and $T'$ and hence completes the proof of~\cref{thm:caterpillar}.

For the remainder of this section, we assume that $T$ and $T'$ have no common chords.

By \cref{lem:freeGapToCut}, we can label the points $p_1,\dots,p_n$ such that neither $T$ nor $T'$ has the edge $p_1p_n$ and consider the \representation of $T$ and $T'$; otherwise, $d$ flips suffice and we are done.
It follows that both trees have at least two maximal edges, so by \cref{lem_tiny_vs_weird}, both $T$ and $T'$ have at least two more \eweird{} than \eshort{} edges.

Now we use the fact that $T$ is a separated caterpillar and thus has a special structure. 
In particular, by \cref{lem:sepCatChar}, its edges form two chains by the cover relation. 
We define $E_\ell$ as the set of all edges covering the leftmost \eshort{} edge $e_\ell$ and $E_r$ as the set of all edges covering the rightmost \eshort{} edge $e_r$; we have $e_\ell\in E_\ell$ and $e_r\in E_r$. 
Clearly, we have $T = E_\ell\cup E_r$.
For an illustration, consider \cref{fig:caterpillar2}.
Moreover, note that $T$, besides the two \eshort{} edges $e_\ell$ and $e_r$, has only \enormal{} edges.

\begin{figure}[htb]
    \centering
    \begin{subfigure}{.25\textwidth}
        \centering
        \includegraphics[page=2]{CaterpillarNew}
        \caption{}
        \label{fig:caterpillar2A}
    \end{subfigure}\hfil
    \begin{subfigure}{.45\textwidth}
        \centering
        \includegraphics[page=3]{CaterpillarNew}
        \caption{}
        \label{fig:caterpillar2B}
    \end{subfigure}\hfil
    \begin{subfigure}{.2\textwidth}
        \centering
        \includegraphics[page=4]{CaterpillarNew}
        \caption{}
        \label{fig:caterpillar2C}
    \end{subfigure}
    \caption{Illustration for the proof of \cref{thm:caterpillar}.}
    \label{fig:caterpillar2}
\end{figure}

We pair the edges of $T$ and $T'$ via the shortest edge covering a gap as explained in \cref{sec_fundamentals} and partition the pairs into the sets $\pairs_=,\pairs_N,\pairs_R$.
For the gaps $I_N$ corresponding to \enormal{}-\enormal{} pairs, we define $A,B\subseteq I_N$ as follows:
For a gap $g$ with associated pair $(e,e')\in \pairs_N$ where $e\in E_\ell$, we let $g\in A$ if $e$ covers~$e'$, and $g\in B$ otherwise.
If $e\in E_r$, then $g\in B$ if $e$ covers $e'$, and $g\in A$ otherwise.
Clearly, $A\cup B = I_N$.

\begin{lemma}
    \label{lem_F_B_acyclic}
    $H[A]$ and $H[B]$ are acyclic.
\end{lemma}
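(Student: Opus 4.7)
The plan is to show that $H[A]$ is acyclic; the proof for $H[B]$ then follows by the symmetry that reverses the linear order, exchanging $E_\ell$ with $E_r$ and, by inspecting the definition, $A$ with $B$. Throughout, I will partition $A = A_\ell \cup A_r$ according to whether the edge $e_g$ lies in $E_\ell$ or in $E_r$. Note that $A_\ell$ consists only of \emph{above} pairs (where $e_g$ covers $e'_g$ and they share their gap endpoint), while $A_r$ consists of \emph{below} or \emph{crossing} pairs.

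The first key observation is that $H$ has no edge from $A_r$ into $A_\ell$. Because $T$ is non-crossing and $E_\ell, E_r$ are disjoint chains covering the two distinct short edges $e_\ell$ and $e_r$, the intervals along the spine of any $e_j \in E_\ell$ and any $e_i \in E_r$ are disjoint, i.e., $v_j \leq u_i$. A direct verification rules out each of the three conflict types for $\overrightarrow{g_i g_j}$ with $g_i \in A_r$ and $g_j \in A_\ell$: Type~1 fails because $e'_j \subseteq e_j$ lies in an interval disjoint from $e_i$; Types~2 and~3 would force $e_i$ to cover $g_j$ or $e'_j$ to cover $g_i$, both of which contradict the disjointness. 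Consequently every edge of $H[A]$ lies within $A_\ell$, within $A_r$, or goes from $A_\ell$ to $A_r$, so it suffices to prove that $H[A_\ell]$ and $H[A_r]$ are each acyclic.

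The first of these is immediate: $A_\ell$ is contained in the \emph{above} set, which is acyclic by \cref{lem:ABC-acyclic}. So the main task is to show that $H[A_r]$ is acyclic. My plan is inductive: find a source in $A_r$ and remove it. The extremal choice is a $g_j \in A_r$ with $|e_j|$ maximum. Since $E_r$ is a chain under the cover relation, any $g_i \in A_r$ with an incoming edge $\overrightarrow{g_i g_j}$ must satisfy $e_i \subsetneq e_j$. Type~2 is then ruled out (it would require $e_i$ to cover $g_j$, contradicting $e_j = \rho_T(g_j)$), and Type~3 is ruled out as well ($e_i \supseteq e'_j$ combined with $e_i \subsetneq e_j$ would yield $e_j \supseteq e'_j$, contradicting $g_j \in A_r$). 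So only Type~1 can survive.

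The main obstacle is this residual Type~1 case, which can arise only when $g_j$ lies in the \emph{crossing} sub-case, and where a careful interval analysis forces $e_i$ to share with $e_j$ its non-gap endpoint. To finish, I would refine the extremal choice: first prefer any $g_j \in A_r$ in the \emph{below} sub-case (for which Type~1 is immediately impossible because $e_i \subseteq e_j \subseteq e'_j$ gives a nesting, not a crossing), and when no such $g_j$ exists, exploit a secondary extremal within the crossing sub-case (e.g., the position of the non-gap endpoint of $e_j$ along the $E_r$ chain) to isolate a source. Combined with acyclicity of $H[A_\ell]$ and the absence of back-edges from $A_r$ to $A_\ell$, this yields acyclicity of $H[A]$.
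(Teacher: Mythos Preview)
Your decomposition $A = A_\ell \cup A_r$, the verification that there are no edges from $A_r$ to $A_\ell$, and the handling of $H[A_\ell]$ via \cref{lem:ABC-acyclic} are all correct, and your extremal choice of $g_j \in A_r$ with $|e_j|$ maximal is the right one. Your exclusion of type~2, type~3, and type~1 in the \emph{below} sub-case is also fine. The only genuine gap is the type~1 \emph{crossing} sub-case, which you leave as a vague plan (``exploit a secondary extremal''). In fact no refinement is needed: your original choice already works, and the missing observation is short.

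Here it is. Suppose $g_j \in A_r$ has $|e_j|$ maximal and $g_j$ is crossing, so $e_j$ and $e'_j$ are \enormal{} edges going in opposite directions from gap~$g_j$. Then the \emph{only} gap covered by both $e_j$ and $e'_j$ is $g_j$ itself. Now take any other $g_i \in A_r$; since $E_r$ is a chain you have $e_i \subsetneq e_j$, so every gap covered by $e_i$ is covered by $e_j$. But $e_i$ does not cover $g_j$ (else $e_i$ would be a shorter edge of $T$ covering $g_j$, contradicting $e_j = \rho_T(g_j)$). Hence $e_i$ and $e'_j$ cover no common gap, so their intervals are disjoint and they cannot cross. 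This rules out type~1 and shows $g_j$ is a source in $H[A_r]$; induction finishes the argument.

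For comparison, the paper's proof packages all three pieces ($A_\ell$, $A_r$, and the $A_\ell \to A_r$ direction) into a single explicit total order on $A$: put all $E_\ell$-gaps before all $E_r$-gaps, order $E_\ell$-gaps by increasing $|e|$ and $E_r$-gaps by decreasing $|e|$, and verify directly that every conflict edge points forward. Your approach is essentially the same order discovered piecewise; the only substantive difference is that you reuse \cref{lem:ABC-acyclic} for the $A_\ell$ part, which is a pleasant shortcut, while the paper re-does that case by hand. With the one-paragraph fix above, your argument is complete and equivalent to the paper's.
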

\begin{proof}
    By left-right symmetry, it suffices to show that $H[A]$ is acyclic.
    We say that a gap $g_i\in A$ with pair $(e_i,e'_i)$ comes \emph{before} a gap $g_j$ with pair $(e_j,e_j')\in A$ if
    \begin{romanenumerate}
        \item $e_i\in E_\ell$ and $e_j\in E_r$, or\label{enum:case-1}
        \item $e_i,e_j\in E_\ell$ and $e_j$ covers $e_i$, or\label{enum:case-2}
        \item $e_i,e_j\in E_r$ and $e_i$ covers $e_j$.\label{enum:case-3}
    \end{romanenumerate}
    This gives a total order on $A$.

    For $g_i,g_j\in A$, we show that if $g_i$ comes before $g_j$, then there is no edge in the conflict graph $H$ from $g_j$ to $g_i$.
    We consider the cases \eqref{enum:case-1}--\eqref{enum:case-3} separately.
    In case \eqref{enum:case-1}, since $g_i\in A$, $e_i'$ and $e_j$ do not intersect, nor does one cover the other.
    Thus, there is no edge $\overrightarrow{g_jg_i}$ in $H$.
    In case \eqref{enum:case-2}, $e_j$ covers $e_i$, which covers $e_i'$.
    This immediately excludes type~1 and type~2 in \cref{def_conflict_graph}.
    Moreover, since $e_j$ covers $e_i$, $e_i$ does not cover $g_j$, so neither can $e'_i$, which excludes type~3.
    In case \eqref{enum:case-3}, we have that $e_i$ does not cover $e'_i$.
    Both $e_i$ and $e'_i$ cover $g_i$, which means that either (a) $e'_i$ covers $e_i$, or (b) $e_i$ and $e'_i$ cross.
    If (a), $e'_i$ covers $e_i$, which covers $e_j$, so types~1 and 3 of \cref{def_conflict_graph} are excluded.
    Since $e_i$ covers $e_j$, $e_j$ does not cover $g_i$, so also type~2 is excluded.
    If (b), then since $e_i$ and $e'_i$ are \enormal{}, the only gap covered by both is $g_i$.
    This gap is not covered by $e_j$, so there is no gap covered by both $e'_i$ and $e_j$.
    It follows that there is no edge $\overrightarrow{g_jg_i}$ in $H$ in either of the cases \eqref{enum:case-1}--\eqref{enum:case-3} and thus $H[A]$ is acyclic.
\end{proof}

We now have all tools to present the flip sequence.

\begin{lemma}\label{lem:caterpillars_without_common_chords}
    There exists a flip sequence $F$ from $T$ to $T'$ of length at most $\nicefrac{3}{2}\cdot d$.
\end{lemma}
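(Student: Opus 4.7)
The plan is to apply the conflict-graph framework from \cref{sec_fundamentals,sec_proof_alpha}, specialised to the extra structure of the separated caterpillar $T$. First, if $T\cup T'$ covers every boundary edge of $P$, then \cref{lem:freeGapToCut} yields a flip sequence of length $d$ in which each flip removes a chord of $T\setminus T'$ and inserts a boundary edge of $T'\setminus T$, so no common edge is touched. Otherwise, by \cref{lem:freeGapToCut}, I can choose a boundary edge $p_1p_n$ of $P$ that lies in neither tree, cut the cyclic order along this edge, and work with the resulting linear representation; by \cref{obs:minus_k} each of $T,T'$ then has at least two uncovered edges.

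Now invoke the pairing $\pairs$ and its partition $I_=\cup I_N\cup I_R$ together with the conflict graph $H$. \cref{lem_F_B_acyclic} together with $A\cup B=I_N$ gives $\las(H)\ge |I_N|/2$. Let $Y\subseteq I_N$ be an acyclic subset of this size and $X=I_N\setminus Y$, and form the same five-part flip sequence as in the proof of \cref{thm:careful-main-upper-bound}: flip every non-short edge in a gap of $I_R\cup X$ to the boundary, perform the $|Y|$ direct \enormal{}-\enormal{} flips in topological order of $H[Y]$, and flip the boundary edges back to the edges of $T'$ corresponding to $I_R\cup X$. By \cref{lem:validFlip,lem:flip-to-boundary,prop:flip-NN} this sequence is valid, and by construction it never modifies any edge corresponding to a gap in $I_=$.

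The key new input is a sharper counting that exploits the caterpillar structure. By \cref{lem:sepCatChar}, $|\TW|=0$ and $|\TS|=2$; the two short edges $e_\ell,e_r$ of $T$ belong either to $\IRS$ or to $I_=$, so $|\IRS|=2-|I_=|$. On the $T'$-side, the second uncovered edge guaranteed by the choice of $p_1p_n$ strengthens \cref{lem_tiny_vs_weird} to $|\TWp|\le |\TSp|-2=|\IRSp|+|I_=|-2$. Since every pair in $I_R\setminus(\IRS\cup\IRSp)$ must involve a weird edge, I obtain
\begin{equation*}
    |I_R|-|\IRS|-|\IRSp|\le |\TW|+|\TWp|\le |\IRSp|+|I_=|-2,
\end{equation*}
which, together with $|\IRS|=2-|I_=|$, yields $|\IRSp|\ge |I_R|/2$. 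Plugging this into the count of $F_1\cup F_3\cup F_5$ from \cref{prop:flip-to-boundary} and adding $|F_2|=|Y|\ge |I_N|/2$ (so $|X|\le |I_N|/2$), I get
\begin{equation*}
    \dist(T,T')\le 2|I_R|-(|\IRS|+|\IRSp|)+2|X|+|Y|\le \tfrac{3}{2}|I_R|+|I_=|-2+\tfrac{3}{2}|I_N|\le \tfrac{3}{2}d+|I_=|-2\le \tfrac{3}{2}d,
\end{equation*}
using $|I_=|\le |\TS|=2$ at the last step. Finally, since $T$ and $T'$ have no common chords and $p_1p_n$ lies in neither tree, every common edge is a short edge living in $I_=$ and is therefore never flipped.

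The main obstacle is the sharpened bound $|\IRSp|\ge |I_R|/2$: it relies simultaneously on $T$ being a separated caterpillar (removing weird edges of $T$ entirely and pinning $|\TS|=2$) and on the uncovered-edges strengthening $|\TWp|\le |\TSp|-2$, and it is precisely this improvement that converts the $\nicefrac{5}{3}d$ bound of \cref{thm:careful-main-upper-bound} into the desired $\nicefrac{3}{2}d$.
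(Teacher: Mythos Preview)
Your argument is correct and follows essentially the same approach as the paper: both use the conflict-graph framework with the $A,B$ partition from \cref{lem_F_B_acyclic} to get $\las(H)\ge|I_N|/2$, and both build a flip sequence consisting of boundary-flips for $I_R\cup X$ together with $|Y|$ direct flips. Your counting is organised a little differently---you isolate the intermediate inequality $|\IRSp|\ge|I_R|/2$ from the caterpillar structure and the strengthened $|\TWp|\le|\TSp|-2$, whereas the paper computes $|F_1|+|F_4|$ and $|F_2|+|F_3|$ separately---but the ingredients and the final bound are the same. If anything, your version is slightly more careful: you track $|I_=|\le 2$ explicitly rather than asserting $|T\cap T'|=2$, so your argument also covers the case where $e_\ell$ or $e_r$ fails to lie in $T'$.
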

\begin{proof}
    We start by describing our flip sequence $F$ which consists of four parts. 
    Choose $Y$ as a largest acyclic subset of $H$ among $A$ and $B$ and let $X=I_N-Y$. 
    Recall that  $|T'\cap T|=|\{e_\ell,e_r\}|=2$.
    Hence, we have $d=n-3$.

    \begin{description}
        \item[$F_1$:]  
            For each $(e_i,e_i')\in\pairs_R$  with gap $g_i$ where $e'$ is \eshort{} or \eweird{}, flip $e$ to $p_ip_{i+1}$.
            Clearly, $|F_1|=|\TSp|+|\TWp|-2$; recall that $T'$ contains the two \eshort{} edges $e_\ell,e_r$ which belong to pairs in $\pairs_=$.

        \item [$F_2$:]
            For each $g_i\in X$, let $(e_i,e'_i)\in\pairs_N$ denote the corresponding pair.
            We flip $e$ to $p_ip_{i+1}$.
            Clearly, $|F_2|=2|X|$.

        \item[$F_3$:]  
            For each $g_i\in Y$, let $(e_i,e'_i)\in\pairs_N$ denote the corresponding pair.
            We flip $e_i$ to $e'_i$.
            Clearly, $|F_3|=|Y|$.
 
        \item[$F_4$:]
            For each $e'\in \TWp$ with corresponding gap $g_k$, perform flip $p_kp_{k+1}\to e'$.
            Clearly, $|F_4|=|\TWp|$.
    \end{description}

    The validity of the flip sequences in $F_1,F_2,$ and $F_4$ follow from \cref{,lem:flip-to-boundary} as we introduce a boundary edge or remove a boundary edge covering the same gap as its partner edge. 
    Let us denote by $T_1$ the tree resulting from applying $F_1$ and $F_2$ to $T$ and by $T_2$ the tree resulting from applying the inverse of $F_4$ to $T'$; note that all these flips can be applied in any order.
    Then, for $T_1$ and $T_2$, $\pairs_R=\emptyset$, $\pairs_N$ corresponds to $Y$, and $H(T_1,T_2)[Y]$ is acyclic. 
    Hence, \cref{prop:flip-NN} guarantees a flip sequence of length $|Y|$.

    It remains to discuss the total length.
    By \cref{lem_tiny_vs_weird} and the fact that $T'$ has at least two uncovered edges, we have $|\TWp|\leq |\TSp|-2$.
    Hence,
    \[
        |F_1|+|F_4|=|\TSp|+2|\TWp|-c\leq \nicefrac 3 2\cdot(|\TSp|+|\TWp|-2).
    \]

    By \cref{lem_F_B_acyclic}, we have $|Y|\geq \nicefrac{1}{2}\cdot|I_N|=\nicefrac{1}{2}\cdot|\TNp|$ and thus $|F_2|+|F_3|=2|X|+|Y|=\nicefrac{3}{2}\cdot|\TNp|$.
    Therefore, we obtain the following bound
    \begin{align*}
        |F|&= |F_1| + |F_2| + |F_3| + |F_4| \leq 
        \nicefrac 3 2\cdot(|\TSp|+|\TWp| +|\TNp| -2)= \nicefrac 3 2\cdot(n-3)=\nicefrac{3}{2}\cdot d,
    \end{align*}
    which concludes the proof.
\end{proof}

\section{Discussion and Open Problems}
\label{sec:conclusions}

In this work, we improved the lower and upper bounds  on the diameter of $\mathcal F_n$.
Together, \cref{thm:careful-main-upper-bound,thm:main-lower-bound} yield 
\[
    \nicefrac{14}{9}\cdot n - \mathcal{O}(1) \leq \diam(\mathcal{F}_n) \leq \nicefrac{5}{3} \cdot n - 3 = \nicefrac{15}{9}\cdot n - 3.
\]
Thus, the gap between the upper and lower bounds on $\diam(\mathcal{F}_n)$ has been tightened from about $0.45\cdot n$ to just $\nicefrac 19 \cdot n + \mathcal{O}(1)$.
With \cref{thm:alpha} at hand, closing the gap can be achieved by improving the lower bound $\frac{\las(H)}{|V(H)|} \geq \nicefrac 13$ for all conflict graphs $H$, or by presenting a conflict graph $H$ with $\frac{\las(H)}{|V(H)|} < \nicefrac{4}{9}$.
We therefore believe that our techniques have potential to help determining $\diam(\mathcal{F}_n)$ completely.

Let us note that the new lower bound of $\nicefrac{14}{9}\cdot n$ for the convex setting actually improves upon the best known lower bounds not only for points in general position, but also for more restricted flip operations, e.g., the \emph{compatible edge exchange} (where the exchanged edges are non-crossing), the \emph{rotation} (where the exchanged edge are adjacent), and the \emph{edge slide} (where the exchanged edges together with some third edge form an uncrossed triangle).
For an overview of best known bounds for five studied flip types, we refer to Nichols et al.~\cite{TreeTransition}. 

We also considered bounding the flip distance $\dist(T,T')$ of two trees $T,T'$ in terms of $d = |T,T'|$ and obtained an upper bound of $\dist(T,T')\leq \nicefrac 53 \cdot d+\nicefrac 23 \cdot b-\nicefrac 43$, where $b$ is the number of common boundary edges of $T$ and $T'$.
We can interpret \cref{thm:careful-main-upper-bound} as being halfway between an upper bound on $\dist(T,T')$ in terms of $n$ and one in terms of $d$:
common chords do not contribute at all, and common boundary edges contribute less than the edges in the symmetric difference.
If $\nicefrac{2}{3}\cdot b-\nicefrac{4}{3}$ can be removed from the bound, then this would give a tight upper bound in terms of $d$.
In fact, Bousquet et al.~\cite[Theorem~4]{bousquet2024reconfigurationSoCG}, presented graphs $T_d$ and $T'_d$ with $\dist(T_d,T'_d) = \nicefrac{5}{3}\cdot d$ (for all $d$ divisible by~$3$).    

Besides determining the maximum flip distance in terms of $n$ or $d$ for the mentioned settings, it is also interesting to investigate the computational complexity of computing a shortest flip sequence for two given non-crossing trees.
Is it \textsc{NP}-complete or polynomial-time solvable?
The question is open for both settings of convex and general position.

Moreover, Aichholzer et al. conjecture that for any two trees $T,T'$ there exists a flip sequence of length $\dist(T,T')$ such that common edges (so-called \emph{happy edges}) are not flipped~\cite[Conjecture 16]{aichholzer2022reconfiguration}, and that there exists a flip sequence of length $\dist(T,T')$ such that any edge appearing in the sequence is either in $T\cup T'$ or a boundary edge~\cite[Conjecture 20]{aichholzer2022reconfiguration}.
We note that the flip sequences we construct for \cref{thm:careful-main-upper-bound}, which is our strongest upper bound, have both these properties, providing some evidence in favor of the conjectures.

\bibliographystyle{plainurl}
\bibliography{lit}

\begin{thebibliography}{10}

\bibitem{AichholzerAH02}
Oswin Aichholzer, Franz Aurenhammer, and Ferran Hurtado.
\newblock Sequences of spanning trees and a fixed tree theorem.
\newblock {\em Computational Geometry: Theory \& Applications}, 21(1-2):3--20,
  2002.
\newblock \href {https://doi.org/10.1016/S0925-7721(01)00042-6}
  {\path{doi:10.1016/S0925-7721(01)00042-6}}.

\bibitem{aichholzer2022reconfiguration}
Oswin Aichholzer, Brad Ballinger, Therese Biedl, Mirela Damian, Erik~D.
  Demaine, Matias Korman, Anna Lubiw, Jayson Lynch, Josef Tkadlec, and Yushi
  Uno.
\newblock Reconfiguration of non-crossing spanning trees, 2022.
\newblock \href {https://doi.org/10.48550/arXiv.2206.03879}
  {\path{doi:10.48550/arXiv.2206.03879}}.

\bibitem{2023Aicholzer}
Oswin Aichholzer, Kristin Knorr, Wolfgang Mulzer, Johannes Obenaus, Rosna Paul,
  and Birgit Vogtenhuber.
\newblock Flipping plane spanning paths.
\newblock In {\em International Conference and Workshops on Algorithms and
  Computation (WALCOM)}, pages 49--60, 2023.
\newblock \href {https://doi.org/10.1007/978-3-031-27051-2_5}
  {\path{doi:10.1007/978-3-031-27051-2_5}}.

\bibitem{akl2007planar}
Selim~G. Akl, Md~Kamrul Islam, and Henk Meijer.
\newblock On planar path transformation.
\newblock {\em Information Processing Letters}, 104(2):59--64, 2007.
\newblock \href {https://doi.org/10.1016/j.ipl.2007.05.009}
  {\path{doi:10.1016/j.ipl.2007.05.009}}.

\bibitem{AvisFukuda}
David Avis and Komei Fukuda.
\newblock Reverse search for enumeration.
\newblock {\em Discrete Applied Mathematics}, 65(1-3):21--46, 1996.
\newblock \href {https://doi.org/10.1016/0166-218X(95)00026-N}
  {\path{doi:10.1016/0166-218X(95)00026-N}}.

\bibitem{bose2009flipsinplanar}
Prosenjit Bose and Ferran Hurtado.
\newblock Flips in planar graphs.
\newblock {\em Computational Geometry: Theory \& Applications}, 42(1):60--80,
  2009.
\newblock \href {https://doi.org/10.1016/j.comgeo.2008.04.001}
  {\path{doi:10.1016/j.comgeo.2008.04.001}}.

\bibitem{bousquet2024reconfigurationSoCG}
Nicolas Bousquet, Lucas de~Meyer, Th\'{e}o Pierron, and Alexandra Wesolek.
\newblock Reconfiguration of plane trees in convex geometric graphs.
\newblock In {\em International Symposium on Computational Geometry (SoCG
  2024)}, volume 293 of {\em LIPIcs}, pages 22:1--22:17, 2024.
\newblock \href {http://arxiv.org/abs/2310.18518} {\path{arXiv:2310.18518}},
  \href {https://doi.org/10.4230/LIPIcs.SoCG.2024.22}
  {\path{doi:10.4230/LIPIcs.SoCG.2024.22}}.

\bibitem{bousquet2023noteJOURNAL}
Nicolas Bousquet, Valentin Gledel, Jonathan Narboni, and Th{\'{e}}o Pierron.
\newblock A note on the flip distance between non-crossing spanning trees.
\newblock {\em Computing in Geometry and Topology}, 2(1):8:1--8:7, 2023.
\newblock \href {http://arxiv.org/abs/2303.07710} {\path{arXiv:2303.07710}},
  \href {https://doi.org/10.57717/cgt.v2i1.36}
  {\path{doi:10.57717/cgt.v2i1.36}}.

\bibitem{bousquet_et_al:LIPIcs.ESA.2020.24}
Nicolas Bousquet, Takehiro Ito, Yusuke Kobayashi, Haruka Mizuta, Paul Ouvrard,
  Akira Suzuki, and Kunihiro Wasa.
\newblock Reconfiguration of spanning trees with many or few leaves.
\newblock In {\em European Symposium on Algorithms (ESA 2020)}, volume 173 of
  {\em LIPIcs}, pages 24:1--24:15, 2020.
\newblock \href {https://doi.org/10.4230/LIPIcs.ESA.2020.24}
  {\path{doi:10.4230/LIPIcs.ESA.2020.24}}.

\bibitem{bousquetTreesDegree}
Nicolas Bousquet, Takehiro Ito, Yusuke Kobayashi, Haruka Mizuta, Paul Ouvrard,
  Akira Suzuki, and Kunihiro Wasa.
\newblock Reconfiguration of spanning trees with degree constraints or diameter
  constraints.
\newblock {\em Algorithmica}, 85(9):2779--2816, 2023.
\newblock \href {https://doi.org/10.1007/s00453-023-01117-z}
  {\path{doi:10.1007/s00453-023-01117-z}}.

\bibitem{CSZ15}
Cesar Ceballos, Francisco Santos, and G{\"u}nter~M. Ziegler.
\newblock Many non-equivalent realizations of the associahedron.
\newblock {\em Combinatorica}, 35(5):513--551, 2015.
\newblock \href {https://doi.org/10.1007/s00493-014-2959-9}
  {\path{doi:10.1007/s00493-014-2959-9}}.

\bibitem{convexDiameter}
Jou-Ming Chang and Ro-Yu Wu.
\newblock On the diameter of geometric path graphs of points in convex
  position.
\newblock {\em Information Processing Letters}, 109(8):409--413, 2009.
\newblock \href {https://doi.org/10.1016/j.ipl.2008.12.017}
  {\path{doi:10.1016/j.ipl.2008.12.017}}.

\bibitem{Eppstein10_socg07}
David Eppstein.
\newblock Happy endings for flip graphs.
\newblock In Jeff Erickson, editor, {\em Proceedings of the 23rd {ACM}
  Symposium on Computational Geometry, Gyeongju, South Korea, June 6-8, 2007},
  pages 92--101. {ACM}, 2007.
\newblock \href {https://doi.org/10.1145/1247069.1247084}
  {\path{doi:10.1145/1247069.1247084}}.

\bibitem{Eppstein10}
David Eppstein.
\newblock Happy endings for flip graphs.
\newblock {\em Journal of Computational Geometry}, 1(1):3--28, 2010.
\newblock \href {https://doi.org/10.20382/JOCG.V1I1A2}
  {\path{doi:10.20382/JOCG.V1I1A2}}.

\bibitem{EF23}
David Eppstein and Daniel Frishberg.
\newblock Improved mixing for the convex polygon triangulation flip walk.
\newblock In {\em International Colloquium on Automata, Languages, and
  Programming (ICALP 2023)}, volume 261 of {\em LIPIcs}, pages 56:1--56:17,
  2023.
\newblock \href {https://doi.org/10.4230/LIPIcs.ICALP.2023.56}
  {\path{doi:10.4230/LIPIcs.ICALP.2023.56}}.

\bibitem{rainbowSoCG18}
Stefan Felsner, Linda Kleist, Torsten M{\"{u}}tze, and Leon Sering.
\newblock Rainbow cycles in flip graphs.
\newblock In Bettina Speckmann and Csaba~D. T{\'{o}}th, editors, {\em 34th
  International Symposium on Computational Geometry, SoCG 2018, June 11-14,
  2018, Budapest, Hungary}, volume~99 of {\em LIPIcs}, pages 38:1--38:14.
  Schloss Dagstuhl - Leibniz-Zentrum f{\"{u}}r Informatik, 2018.
\newblock URL: \url{https://doi.org/10.4230/LIPIcs.SoCG.2018.38}, \href
  {https://doi.org/10.4230/LIPICS.SOCG.2018.38}
  {\path{doi:10.4230/LIPICS.SOCG.2018.38}}.

\bibitem{rainbowJournal}
Stefan Felsner, Linda Kleist, Torsten M{\"u}tze, and Leon Sering.
\newblock Rainbow cycles in flip graphs.
\newblock {\em SIAM Journal on Discrete Mathematics}, 34(1):1--39, 2020.
\newblock \href {https://doi.org/10.1137/18M1216456}
  {\path{doi:10.1137/18M1216456}}.

\bibitem{HernandoHH02}
Carmen Hernando, Michael~E. Houle, and Ferran Hurtado.
\newblock On local transformation of polygons with visibility properties.
\newblock {\em Theoretical Computer Science}, 289(2):919--937, 2002.
\newblock \href {https://doi.org/10.1016/S0304-3975(01)00409-1}
  {\path{doi:10.1016/S0304-3975(01)00409-1}}.

\bibitem{Hernando}
Carmen Hernando, Ferran Hurtado, Alberto M{\'{a}}rquez, Merc{\`{e}} Mora, and
  Marc Noy.
\newblock Geometric tree graphs of points in convex position.
\newblock {\em Discrete Applied Mathematics}, 93(1):51--66, 1999.
\newblock \href {https://doi.org/10.1016/S0166-218X(99)00006-2}
  {\path{doi:10.1016/S0166-218X(99)00006-2}}.

\bibitem{hernandoTreeLabeled}
Carmen Hernando, Ferran Hurtado, Merc{\`e} Mora, and Eduardo Rivera-Campo.
\newblock Grafos de {\'a}rboles etiquetados y grafos de {\'a}rboles
  geom{\'e}tricos etiquetados.
\newblock {\em Proc. X Encuentros de Geometra Computacional}, pages 13--19,
  2003.

\bibitem{perfect-matchings}
Carmen Hernando, Ferran Hurtado, and Marc Noy.
\newblock Graphs of non-crossing perfect matchings.
\newblock {\em Graphs and Combinatorics}, 18(3):517--532, 2002.
\newblock \href {https://doi.org/10.1007/S003730200038}
  {\path{doi:10.1007/S003730200038}}.

\bibitem{HouleHNR05}
Michael~E. Houle, Ferran Hurtado, Marc Noy, and Eduardo Rivera{-}Campo.
\newblock Graphs of triangulations and perfect matchings.
\newblock {\em Graphs and Combinatorics}, 21(3):325--331, 2005.
\newblock \href {https://doi.org/10.1007/S00373-005-0615-2}
  {\path{doi:10.1007/S00373-005-0615-2}}.

\bibitem{HurtadoNU99}
Ferran Hurtado, Marc Noy, and Jorge Urrutia.
\newblock Flipping edges in triangulations.
\newblock {\em Discrete \& Computational Geometry}, 22(3):333--346, 1999.
\newblock \href {https://doi.org/10.1007/PL00009464}
  {\path{doi:10.1007/PL00009464}}.

\bibitem{graphAssoHardness}
Takehiro Ito, Naonori Kakimura, Naoyuki Kamiyama, Yusuke Kobayashi, Shun-ichi
  Maezawa, Yuta Nozaki, and Yoshio Okamoto.
\newblock Hardness of finding combinatorial shortest paths on graph
  associahedra.
\newblock In {\em International Colloquium on Automata, Languages, and
  Programming (ICALP 2023)}, volume 261 of {\em LIPIcs}, pages 82:1--82:17,
  2023.
\newblock \href {https://doi.org/10.4230/LIPIcs.ICALP.2023.82}
  {\path{doi:10.4230/LIPIcs.ICALP.2023.82}}.

\bibitem{KKR_paths}
Linda Kleist, Peter Kramer, and Christian Rieck.
\newblock On the connectivity of the flip graph of plane spanning paths.
\newblock In {\em International Workshop on Graph-Theoretic Concepts in
  Computer Science}, 2024.

\bibitem{Lawson72}
Charles~L. Lawson.
\newblock Transforming triangulations.
\newblock {\em Discrete Mathematics}, 3(4):365--372, 1972.
\newblock \href {https://doi.org/10.1016/0012-365X(72)90093-3}
  {\path{doi:10.1016/0012-365X(72)90093-3}}.

\bibitem{LubiwP15}
Anna Lubiw and Vinayak Pathak.
\newblock Flip distance between two triangulations of a point set is
  {NP}-complete.
\newblock {\em Computational Geometry: Theory \& Applications}, 49:17--23,
  2015.
\newblock \href {https://doi.org/10.1016/J.COMGEO.2014.11.001}
  {\path{doi:10.1016/J.COMGEO.2014.11.001}}.

\bibitem{lucas1987}
Joan~M Lucas.
\newblock The rotation graph of binary trees is hamiltonian.
\newblock {\em Journal of Algorithms}, 8(4):503--535, 1987.
\newblock \href {https://doi.org/10.1016/0196-6774(87)90048-4}
  {\path{doi:10.1016/0196-6774(87)90048-4}}.

\bibitem{MilichMP21}
Marcel Milich, Torsten M{\"{u}}tze, and Martin Pergel.
\newblock On flips in planar matchings.
\newblock {\em Discrete Applied Mathematics}, 289:427--445, 2021.
\newblock \href {https://doi.org/10.1016/J.DAM.2020.10.018}
  {\path{doi:10.1016/J.DAM.2020.10.018}}.

\bibitem{chromatic}
Ruy~Fabila Monroy, David Flores{-}Pe{\~{n}}aloza, Clemens Huemer, Ferran
  Hurtado, David~R. Wood, and Jorge Urrutia.
\newblock On the chromatic number of some flip graphs.
\newblock {\em Discrete Mathematics \& Theoretical Computer Science},
  11(2):47--56, 2009.
\newblock \href {https://doi.org/10.46298/DMTCS.460}
  {\path{doi:10.46298/DMTCS.460}}.

\bibitem{TreeTransition}
Torrie~L Nichols, Alexander Pilz, Csaba~D T{\'o}th, and Ahad~N Zehmakan.
\newblock Transition operations over plane trees.
\newblock {\em Discrete Mathematics}, 343(8):111929, 2020.
\newblock \href {https://doi.org/10.1016/j.disc.2020.111929}
  {\path{doi:10.1016/j.disc.2020.111929}}.

\bibitem{nishimuraIntroReconfiguration}
Naomi Nishimura.
\newblock Introduction to reconfiguration.
\newblock {\em Algorithms}, 11(4):52, 2018.
\newblock \href {https://doi.org/10.3390/a11040052}
  {\path{doi:10.3390/a11040052}}.

\bibitem{pilz2014flip}
Alexander Pilz.
\newblock Flip distance between triangulations of a planar point set is
  {APX}-hard.
\newblock {\em Computational Geometry}, 47(5):589--604, 2014.
\newblock \href {https://doi.org/10.1016/j.comgeo.2014.01.001}
  {\path{doi:10.1016/j.comgeo.2014.01.001}}.

\bibitem{pournin2014diameter}
Lionel Pournin.
\newblock The diameter of associahedra.
\newblock {\em Advances in Mathematics}, 259:13--42, 2014.
\newblock \href {https://doi.org/10.1016/j.aim.2014.02.035}
  {\path{doi:10.1016/j.aim.2014.02.035}}.

\bibitem{hamilton}
Eduardo Rivera{-}Campo and Virginia Urrutia{-}Galicia.
\newblock Hamilton cycles in the path graph of a set of points in convex
  position.
\newblock {\em Computational Geometry: Theory \& Applications}, 18(2):65--72,
  2001.
\newblock \href {https://doi.org/10.1016/S0925-7721(00)00026-2}
  {\path{doi:10.1016/S0925-7721(00)00026-2}}.

\bibitem{Rokicki2010Rubik}
Tomas Rokicki, Herbert Kociemba, Morley Davidson, and John Dethridge.
\newblock God's number is 20.
\newblock last accessed: July 2024.
\newblock URL: \url{http://www.cube20.org}.

\bibitem{sleator1986rotation}
Daniel~Dominic Sleator, Robert~Endre Tarjan, and William~P. Thurston.
\newblock Rotation distance, triangulations, and hyperbolic geometry.
\newblock In {\em Symposium on Theory of Computing (STOC)}, pages 122--135,
  1986.
\newblock \href {https://doi.org/10.1145/12130.12143}
  {\path{doi:10.1145/12130.12143}}.

\bibitem{heuvel2003survey}
Jan van~den Heuvel.
\newblock The complexity of change.
\newblock {\em Surveys in Combinatorics 2013}, pages 127--160, 2013.
\newblock \href {https://doi.org/10.1017/cbo9781139506748.005}
  {\path{doi:10.1017/cbo9781139506748.005}}.

\bibitem{WagnerW22_soda20}
Uli Wagner and Emo Welzl.
\newblock Connectivity of triangulation flip graphs in the plane (part {I:}
  edge flips).
\newblock In Shuchi Chawla, editor, {\em Proceedings of the 2020 {ACM-SIAM}
  Symposium on Discrete Algorithms, {SODA} 2020, Salt Lake City, UT, USA,
  January 5-8, 2020}, pages 2823--2841. {SIAM}, 2020.
\newblock \href {https://doi.org/10.1137/1.9781611975994.172}
  {\path{doi:10.1137/1.9781611975994.172}}.

\bibitem{WagnerW22_socg20}
Uli Wagner and Emo Welzl.
\newblock Connectivity of triangulation flip graphs in the plane (part {II:}
  bistellar flips).
\newblock In Sergio Cabello and Danny~Z. Chen, editors, {\em 36th International
  Symposium on Computational Geometry, SoCG 2020, June 23-26, 2020,
  Z{\"{u}}rich, Switzerland}, volume 164 of {\em LIPIcs}, pages 67:1--67:16.
  Schloss Dagstuhl - Leibniz-Zentrum f{\"{u}}r Informatik, 2020.
\newblock URL: \url{https://doi.org/10.4230/LIPIcs.SoCG.2020.67}, \href
  {https://doi.org/10.4230/LIPICS.SOCG.2020.67}
  {\path{doi:10.4230/LIPICS.SOCG.2020.67}}.

\bibitem{WagnerW22}
Uli Wagner and Emo Welzl.
\newblock Connectivity of triangulation flip graphs in the plane.
\newblock {\em Discrete \& Computational Geometry}, 68(4):1227--1284, 2022.
\newblock \href {https://doi.org/10.1007/S00454-022-00436-2}
  {\path{doi:10.1007/S00454-022-00436-2}}.

\end{thebibliography}

\end{document}